\newtheorem{lemma}{Lemma}[section]
\newtheorem{theorem}[lemma]{Theorem}
\newtheorem{definition}[lemma]{Definition}
\newtheorem{algorithm}{Algorithm}
\newcommand{\CiPM}{\textit{C1PM}}
\begin{document} 

\title{\Large Linear-Time Recognition of Probe Interval Graphs}

\author{Ross M. McConnell\thanks{Computer Science Department, Colorado State University, Fort Collins, CO
80528, USA, \texttt{rmm@cs.colostate.edu}} \and Yahav Nussbaum\thanks{The Blavatnik School of Computer Science, 
Tel Aviv University, 69978 Tel Aviv, Israel, 
\texttt{yahav.nussbaum@cs.tau.ac.il}}}

\date{}

\maketitle

\begin{abstract}
The interval graph for a set of intervals on a line consists of one vertex 
for each interval, and an edge for each intersecting pair of intervals. A 
probe interval graph is a variant that is motivated by an application to 
genomics, where the intervals are partitioned into two sets:  probes and 
non-probes. The graph has an edge between two vertices if they intersect 
and at least one of them is a probe.  We give a linear-time algorithm for 
determining whether a given graph and partition of vertices into probes 
and non-probes is a probe interval graph.  If it is, we give a layout of 
intervals that proves this.  We can also determine whether the layout
of the intervals is uniquely constrained within the same time bound.
As part of the algorithm, we solve the consecutive-ones
probe matrix problem in linear time, develop algorithms for operating
on PQ trees, and give results that relate PQ trees for different submatrices
of a consecutive-ones matrix.
\end{abstract}

\section{Introduction}
\label{sect:intro}

The {\em intersection graph} for a collection of sets has one vertex
for each of the sets and an edge between two vertices if the corresponding
sets intersect.
An {\em interval graph} is the intersection graph of a set of intervals 
on a line.  The set of intervals constitutes an {\em interval model} of 
the graph.  Figure~\ref{fig:intvalGraph} gives an example.
Interval graphs play an important role in many problems,
see \cite{CLRS09,Gol80,GolT04}. 

\begin{figure}
\centerline{\includegraphics[]{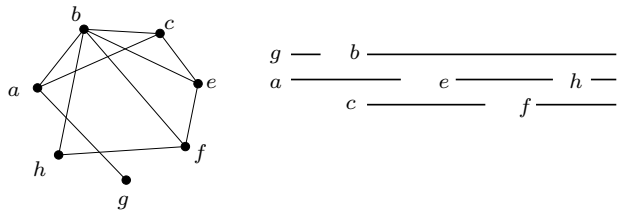}}
\caption{An interval graph and an interval model of it.}\label{fig:intvalGraph}
\end{figure}

The problem of recognizing whether a 
graph is an interval graph played a key role in the 1950's in establishing
the linear topology the fine-scale organization of genetic information
in DNA~\cite{Ben59}.   The linear topology of a DNA molecule had been known since 1953,
when it was described by Watson and Crick.  In addition, it was known that the collection of genes
had a linear arrangement along the chromosome, since this arrangement could be inferred from
recombination frequencies of alleles.

This was not enough to establish what we now know, which is that the 
genetic information in a chromosome is written in its entirety onto a single DNA molecule (actually,
two identical DNA molecules called sister chromatids).  What was known at that time
did not exclude the possibility that the fine structure of the chromosome
was organized around multiple independent DNA molecules or one with small branches,
giving it a tree-like topology that only appeared to be linear on a large scale.

To test the hypothesis that the fine structure was linear, Seymour Benzer
isolated 145 mutant strains of a bacteria-infecting virus, T4~\cite{Ben59}.
He further hypothesized that each mutation occupied a contiguous region
of the genome, which would be an interval of the genome if the topology
was indeed linear.  The test of these hypotheses consisted of finding a method
for determining which pairs of mutations occupied intersecting regions of the genome,
and then testing whether the derived intersection graph was an interval graph.

By themselves, the strains are not viable.
When bacteria are infected with two of the strains, however, the viruses can recombine
their genomes to assemble the original viral genome, giving rise to
viable viruses, provided that the regions occupied by their two mutations do not intersect.
By infecting bacteria with a pair of strains and determining whether viable viruses
arose, he was able to deduce the presence or absence of an edge in the intersection
graph with high accuracy.  He found an interval graph on 144 of the 145 strains
that was consistent with thousands of tests.  The anomalous strain had to be excluded when
it was found to have two mutations that were not contiguous.

The fraction of all graphs that are interval graphs on 144 vertices
is minuscule, so this was a strong test of the hypothesis.  The interval model
he found for the graph gave one possible linear arrangement of the
intervals occupied by the mutations, but this was not uniquely constrained by
the graph, hence by the data.

His paper gives the first characterization of interval graphs, based on combinatorial
properties of their adjacency matrices, together with a heuristic
for recognizing whether a graph is an interval graph.
According to an acknowledgment at the end of the the paper, 
the characterization of interval graphs and the heuristic he used for recognizing them 
was due to the prominent biochemist Leslie Orgel of Cambridge, who suggested it to him in a 
personal communication.  As far as we know, the original characterization
has not previously been attributed to Orgel in the graph theory literature.

This ignited considerable interest in the combinatorial study of interval graphs.
Lekkerkerker and Boland gave a characterization in terms of forbidden induced subgraphs 
in 1962~\cite{LB62}.
It also sparked a search for efficient
algorithms for determining whether a graph is an interval graph~\cite{FulkGross65}
and producing an interval model if it is.
Booth and Lueker gave the first linear-time algorithm for 
the problem in the 1970's~\cite{BL76}.  Their algorithm determines whether
the graph uniquely constrains the linear arrangement of the intervals.

A {\em (partitioned) probe interval graph}~\cite{McMorris98}
(also called interval probe graph) is a graph in 
which the vertex set is partitioned into {\em probes} and {\em non-probes}. 
It is a generalization of intersection graph of an interval model, such that 
the graph has an edge between two vertices if their intervals intersect {\em and 
at least one of them is a probe}. Information about intersections of non-probe
intervals is missing.  Such a model is a {\em probe interval model} of the graph.

There has been quite a bit of work on topological and combinatorial properties of 
these graphs; see~\cite{GolT04} for a survey.  
The motivation that initially gave rise to interest in the class
was for physical mapping of genomes~\cite{ZhangPatent,Zhang94, Zhang94etal}.
Rather than using mutations to deduce intersections, small fragments
of single-stranded DNA from a region of a genome were
cloned and embedded in a filter, and then tested against probes taken from
the complementary strands to see which 
probes hybridized (bonded) with them.  Hybridization indicates that the
strands share a section that encodes for the same sequence of base pairs.
If this sequence is long enough, it occurs on a unique interval of the genome,
and therefore indicates that the strands come from intersecting intervals.
Only a subset of the fragments were used as probes.
The intersections inferred by the procedure were represented by a probe
interval graph, since no direct information about hybridization between
non-probes was available in the experimental data.  The possibility of deducing a probe 
interval model from the graph, especially if it
was uniquely constrained, provided a possible way to infer the linear arrangement of the
fragments.  

At the time, no efficient algorithm for deducing a probe interval model was known.
The possibility that one might not exist was recognized as an obstacle 
to the approach~\cite{McMorris98}.  
In~\cite{Zhang94etal}, a heuristic based on breadth-first search was instead
applied to the probe interval graph in the hopes of finding a chordless path in the
probe interval graph that consisted of intervals that spanned the region of interest.
An efficient algorithm for constructing probe interval models, had it been available, could
have solved this problem reliably.

A polynomial algorithm for the recognition and construction of probe interval models
was first given by
Johnson and Spinrad~\cite{JohnsonSpin01}, who gave an $O(V^2)$ bound.
Using a different approach, McConnell and Spinrad gave an $O(V+E \log V)$ 
algorithm~\cite{McCSpin02}.  Uehara claimed an $O(V^2+VE)$
algorithm~\cite{U04} that checks whether the model is unique, though
some of the details have not been fully described.  

In this paper, we give the first linear-time algorithm for the problem.
That is, the input is a graph $G=(P,N,E)$, where the $\{P,N\}$ is a partition
of the vertices into probes and non-probes.  The algorithm determines whether there
exists a probe interval model of $G$ consistent with this partition of the vertices
into probes and non-probes.  If there is, the algorithm returns such a model.
The algorithm also
determines whether the layout of the model is unique, according to a slight
generalization of a definition of uniqueness that is well-known in the case 
of interval graphs, and suitable for deducing the arrangement of intervals in a genome.
This work appeared in preliminary form in~\cite{McCNuss09}.  

All of the above algorithms take as input a graph whose
vertex set is partitioned into probes and non-probes.
Chang et al.~\cite{CKLP05} consider the problem of recognizing this graph
class when this partition is not given.

The original motivating application to biology has been superseded by more reliable
and economical techniques.  However, hybridizing 
pairs continue to be of interest for inferring whether
intervals occupied by DNA fragments intersect~\cite{Fullwood09}, giving
rise to intersection graphs.  

In some cases, the intersection graphs continue to be
modeled by probe interval graphs, rather than full interval graphs.
For example, in
{\em contig scaffolding}, there are two kinds of segments of the genome, {\em contigs} and 
intervals with {\em paired-end tags}~\cite{scarpa13}.  The sequence represented by a contig is known.
Only small sequences at the extreme ends of a tagged intervals are known, however; these 
are the paired-end tags.  The intersections of contigs
with each other and with paired-end tag intervals can be deduced.  
The tags 
are small compared to the lengths of their intervals.   Therefore,
when two of these intervals
intersect, their end tags are unlikely to intersect, and the intersection of their
intervals cannot be inferred directly.
The graph given by the hybridization data is therefore modeled by a probe interval graph, where
the contigs are the probes and the paired-end tag intervals are the non-probes.

Our algorithm would probably have
been more useful for Benzer than Booth and Lueker's, had these algorithms been available at the time,
because he did not have the resources to perform all
$n(n-1)/2$ experiments needed to construct the full interval graph on $n=144$ mutations.

An obstacle to the usefulness of efficient algorithms for  construction of interval models from 
hybridization data is that, at the time of this writing, hybridization data are
prone to many false positives and false negatives.
This corrupts the experimentally derived graph by adding or removing edges
in the true intersection graph, and with high probability, this will give
rise to a graph that is no longer an interval graph or probe interval graph.  
This
is one reason that the proposed method of~\cite{Zhang94,ZhangPatent} never
become competitive with alternative sequencing techniques.  Advances in
in the accuracy of detecting
intersections, either by hybridization or by some unforeseen method,
before our algorithm for constructing probe
interval models, or Booth and Lueker's algorithm for constructing interval
models, are very useful in physical mapping or sequencing.  Benzer's methods, however, illustrate
that it may be difficult to foresee clever future laboratory methods that could give
rise to highly accurate intersection data in the future.

Despite impracticality of applying it to noisy biological data, 
the algorithm of Booth and Lueker continues to be studied 
by people working with hybridization data, because of the structural insights it gives
about the graph class.  The concepts it uses, such as the {\em consecutive-ones property}
and {\em PQ trees} (discussed below) have given
rise to methods that are more tolerant of errors in the data~\cite{Pevzner00}.  
One contribution of our paper is to give analogous structural insights into the class of probe
interval graphs.

A {\em circular-arc graph}
is the intersection graph of arcs on a circle.  Circular-arc graphs are a generalization
of interval graphs; interval graphs are those circular-arc graphs that have
a model where the arcs do not cover the entire circle.  
This generalization, which reflects constraints in cyclic scheduling problems, for
example, is much less structured than interval graphs.
For example, in interval graphs, the number of maximal cliques is bounded
by the number of vertices, while in circular-arc graphs, it can be exponential
in the size of the graph~\cite{Tucker80}.  Interval graphs
are a subclass of the class of perfect graphs, while circular-arc graphs are 
not.  When Booth and Lueker developed their linear-time algorithm for recognizing
interval graphs and producing interval models, Booth conjectured that the corresponding
problems for circular-arc graphs would 
turn out to be NP-complete~\cite{Booth75}.
The conjecture was later disproved~\cite{Tucker80}, and
the first linear-time algorithm was given
in~\cite{McCCircPaper}.  

The ability to find a probe interval model for
a probe interval graph was an essential step in the result of~\cite{McCCircPaper}.
The class of probe interval graphs was described independently in an early draft of that
paper, before it was pointed out that it had previously been described in connection
with a biological application.
The paper used an algorithm for recognizing probe interval graphs and
finding a probe interval models that is described in~\cite{McCSpin02}.
Even though the algorithm of~\cite{McCSpin02} is not linear,
it does not violate the linear time bound for the circular-arc graph recognition
algorithm, since it operates on a graph whose size is sublinear in the size of $G$.

One interpretation of a probe interval graph is that it is an interval graph 
where reports of adjacencies by non-probe vertices are missing or distrusted.
Adjacencies between the trusted and the untrusted vertices can be obtained from the 
trusted vertices; only adjacencies between pairs of untrusted vertices 
are unknown.  This interpretation is driven by applications that have been identified
so far.  

Another interpretation is that interval graphs are used to represent conflicts and compatibilities in
scheduling problems.  Their membership in the class of
perfect graphs (see \cite{Gol80}) gives rise to efficient
algorithms for finding maximum independent sets, maximum cliques, and minimum colorings~\cite{CLRS09},
which correspond to sets of interest for finding efficient schedules.
Probe interval graphs, which are also perfect graphs (see~\cite{McMorris98}), introduce a third possibility,
which is that a subset of the jobs do not have a conflict even if
their intervals intersect.  For example, if some of
the jobs require a dedicated resource for technical or security reasons,
while others can share the resource,
then the conflicts are modeled with a probe interval graph, where the jobs
that require exclusive access are the probes and those that do not are the 
non-probes.  
A linear-time algorithm
for finding a maximum clique and minimum coloring, given a probe
interval model, is given in~\cite{GolT04}.  A consequence of our
result is therefore a linear-time algorithm for minimum coloring and 
maximum clique, given the partitioned probe interval graph.
An open problem is whether a maximum independent set and
minimum clique cover can be found in linear time from the probe interval
model.

A {\em consecutive-ones ordered matrix} is a 0-1 matrix in which 
1's in each row are consecutive.  A {\em consecutive-ones ordering}
of a 0-1 matrix is permutation of its columns that gives a
consecutive-ones ordered matrix.
A 0-1 matrix has the
{\em consecutive-ones property} if there exists a consecutive-ones 
ordering of it.   A family ${\cal F}$ of subsets of a set $C$ has the {\em consecutive-ones
property} if there exists an ordering of elements of $C$ such that each
member of ${\cal F}$ is consecutive.  The two concepts are equivalent, since
${\cal F}$ can be represented using one row for each $X_i \in {\cal F}$, one
column for each element of $C$, and a 1 in row $i$, column $j$ if
set $i$ contains element $c_j$ of $C$, and the resulting matrix has
the consecutive-ones property if and only if the set family does.
A {\em consecutive-ones matrix} is one that has the consecutive-ones property;
it is not necessarily consecutive-ones ordered.

As part of their algorithm to recognize
interval graphs, Booth and Lueker developed an algorithm to determine
whether a matrix is a consecutive-ones matrix, and, if so, to produce
a consecutive-ones ordering of it, in time proportional to the number
of rows, columns, and 1's in the matrix, given a sparse representation.   
We use this result extensively in this paper.

The {\em consecutive-ones sandwich}
problem is an extension of the consecutive-ones problem, where each entry is 0, 1 or $\ast$. 
An $\ast$ is a ``don't care''; it can stand for either a 0 or a 1. The problem is 
to find an assignment of 0's and 1's to the $\ast$'s such
that the resulting matrix has the consecutive-ones property.  
Deciding whether this
is possible is NP-Complete~\cite{GolumbicSandwich98}.  
This fact was recognized
as a possible obstacle to efficient construction of probe interval models in~\cite{McMorris98}.
The {\em consecutive-ones probe matrix} problem is the special case where we require that the $\ast$'s 
form a submatrix (see also \cite{CGKN07}).  This is also a generalization of the consecutive-ones problem.
We give an algorithm that takes time that is linear in the number of rows, columns, and 1's in $M$
to find a solution or determine that none exists.  This requires an efficient representation
of $M$ that does not represent the $\ast$'s explicitly, and our solution gives an
implicit assignment of 0's and 1's to $\ast$'s.  The $\ast$'s can be assigned
values explicitly, but the number of them might not be linear
in the size of the input.  In this paper, we develop methods for reducing the problem of constructing
probe interval models to that of solving the consecutive-ones probe matrix problem.

When Booth and Lueker's algorithm determines that a matrix has
the consecutive-ones property, it gives an implicit representation of
{\em all} consecutive-ones orderings of a matrix, called a {\em PQ tree}. 
(See Figure~\ref{fig:PQExample}.)

The leaves of the PQ tree are the one-element subsets of the set of columns of the matrix.
The PQ tree gives all consecutive-ones 
orderings by constraining the orderings of children of internal nodes, as follows.  Some of 
the internal nodes are labeled {\em P nodes}.   For such a node every ordering of
its children is permitted.  Others are labeled {\em Q nodes}.   For such a node, an 
ordering of its children is given;
the only permissible orderings of its children are the given ordering
and its reverse. For a PQ tree $T$, let $\Pi(T)$ 
denote the set of all possible orderings of its leaves, given these 
constraints.  
$T$ uniquely determines $\Pi(T)$.
We consider different orderings of the PQ tree that are consistent with the
constraints to be the same PQ tree.
The algorithm of Booth and Lueker~\cite{BL76} either finds the unique
PQ tree $T(M)$ such that $\Pi(T)$ is equal to the consecutive-ones 
orderings of columns of a matrix $M$, or determines that the matrix is not 
a consecutive-ones matrix. 
It is also easy to see that $\Pi(T)$ uniquely determines $T$ for any PQ tree. 
One way to see this is an algorithm we give below that, given $T$, constructs a matrix
whose consecutive-ones orderings are $\Pi(T)$, which then has a unique PQ tree by
Booth and Lueker's result.

\begin{figure}
\centerline{\includegraphics[]{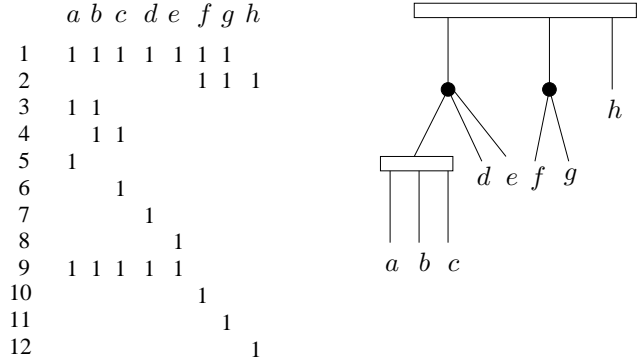}}
\caption{The PQ tree of a consecutive-ones matrix.
Only the 1's are depicted in the matrix; other entries are 0's.
The leaves of the tree are the columns of the matrix.
The P nodes are represented with black discs and the Q nodes are represented
with rectangles.  At each P node, the children can be ordered arbitrarily,
and at each Q node, the children can be ordered in the depicted way or
its reverse.   The resulting ordering of leaves is always a
consecutive-ones ordering
of the columns of the matrix.  All consecutive-ones orderings of
the matrix can be obtained in this way.}\label{fig:PQExample}
\end{figure}

A significant part of our paper is devoted to developing general results
about PQ trees and consecutive-ones ordered matrices that then allow us
to derive our algorithm for probe interval graphs.  
We develop proof techniques and useful results about
the relationships between the PQ trees of a matrix and those of
its submatrices.   See, for example, Section~\ref{sect:PQtrees}, and, for examples of
applications, Sections~\ref{sect:sufficiency} and~\ref{sect:C1PM}).
We give examples of how
Booth and Lueker's algorithm can be exploited as a black box for answering
constraint satisfaction questions 
that do not correspond to ones that it had been
originally designed for.  See, for example, Sections~\ref{sect:NP-P} and~\ref{sect:PP}. 

Uehara claims a data structure for implicitly representing all possible probe
interval models, though some details required to verify it are missing~\cite{U04}.
The time bound he gives for constructing it is $O(V^2 + VE)$.
We develop a structure based on a pair of PQ trees that has this capability
(Figure~\ref{fig:probeOnes}), and we construct it in time that is linear in the
size of the graph.

\section{Preliminaries}\label{sect:prelim}

\subsection{Notation}

Except for some additional definitions, we use standard terminology
and conventions from~\cite{CLRS09}.
For example, that text states that the space
requirement of the adjacency-list representation of a graph is $\Theta(n+m)$,
since it requires that many integers and pointers (words of memory in the RAM model),
even though the number of bits required is $\Theta((n+m)\log n)$.  We use
the RAM model in this paper for measuring space requirements, not just time requirements.

Given a graph, we let $n$ denote the number of vertices and $m$ the
number of edges.
We will assume the standard adjacency-list representation of a graph.
Let $N(v)$ denote the {\em open neighborhood} of $v$, that is, the set 
of neighbors of $v$ in $G$, and let $N[v]$ denote its {\em closed 
neighborhood}, that is, $\{v\} \cup N(v)$.

By $G=(P,N,E)$, we denote a probe interval graph with probes $P$
and non-probes $N$.  Let $V = P \cup N$ denote the vertex set.
If $X$ is a nonempty subset of $P \cup N$, let $G[X]$ denote the subgraph 
of $G$ induced by $X$, together with the 
classification of members of $X$ as probes or non-probes.

If $X$ is a set, let $X - c$ denote $X \setminus \{c\}$.  
If ${\cal R}$ is a collection of sets, let ${\cal R} - c$
denote $\{X - c \mid X \in {\cal R}\}$.  If $G=(V,E)$
is a graph and $u$ is a vertex, let $G - u$ denote $G[V - u]$.
More generally, if $U \subset V$, let $G - U$ denote $G[V \setminus U]$.

If $M$ is a 0-1 matrix, let $R(M)$ and $C(M)$ denote its rows and columns,
respectively.  If $Y$ is a subset of its rows, and $X$ is a subset of its columns,
$M[Y][X]$ denotes the submatrix given by rows of $Y$ and columns of $X$.
When we wish to restrict only the row set, we denote this $M[Y]$; it is implied
that columns are $C(M)$.  When we wish to restrict only
the column set, we denote this $M[][X]$; it is implied that the row
set is the $R(M)$.  If $c$ is a column of $M$, then we let $M - c$
denote $M[][R(M) \setminus \{c\}]$.  

We will often treat the columns of a 0-1 matrix as {\em sets}, where 
each column is the set $R$ of rows in which the column has a 1.
A shortcoming of this convention is that, unlike a dynamic list,
a set has no identity independent of its contents.  
We want a column to retain its identity when we add or remove a row from
the matrix, even though the set it represents may change.
Also, if two columns represent the same set, we want
them to have separate identities, which they retain even when we permute the 
column order.

We therefore assume that each column $x$ has
an identity separate from its current contents,
much like a dynamic list.  
Taking a submatrix $M[Y]$ of $M$ can be seen as an operation on column lists.
This allows us to say that $M[Y]$ and $M$ have 
different sets of rows but have the same column set.  
If $x$ is a column of $M$, we let $S(M,c)$ denote the set of rows where the
column has 1's.  Note that $S(M[Y],c) = S(M,c) \cap Y$.
If $M$ is understood, we let $S(c)$ denote $S(M,c)$.  
Rows are handled in a symmetric way; if $r$ is a row, $S(M,r)$ denotes the set
of columns of $M$ where the row has a 1, and if $X$ is a set of columns of $M$,
$S(M[X],r) = S(M,r) \cap X$.  Though this notation is convenient
in mathematical expressions, we will sometimes ignore the distinction between
a column and the set it represents in English sentences
when the meaning is clear.  
For example, we can say
that column $c$ is a {\em subset of another column} instead of 
the more formal $S(c) \subseteq S(c')$ for some $c' \in C(M)$ such that $c' \neq c$.

A sparse representation of a binary matrix can be obtained by giving to each
1 a pointer to the next and preceding 1 in its row and the
next and preceding 1 in its column.
The size of the representation, as measured on the RAM model,
is proportional to the number of rows,
columns and 1's and we consider an algorithm to run in linear time if it 
runs in time proportional to this size.
Using elementary methods, such a representation can be obtained in linear 
time from a list of the positions of nonzero elements in each row or in 
each column.  The order of rows and columns can also
be permuted arbitrarily in linear time.
A submatrix can be represented with ordered lists of pointers to
a subset of rows and columns.  

\subsection{Classes of graphs and matrices}
\label{sect:classes}

We define the {\em cliques} of a graph to be its {\em maximal} complete 
subgraphs.  We assume that the vertices of a graph are numbered from 1
through $n$.
A {\em clique matrix} of a graph is a 0-1 matrix with one row for each
vertex, one column for each clique, and a 1 in row $i$, column $j$
if vertex $i$ is a member of clique $j$.
In this paper, we consider
two clique matrices to be equal if and only if they are
equal in the standard sense of matrix equality in linear algebra.  This differs from
some papers that refer to {\em the} clique matrix, reflecting the
view that the purpose of the matrix is to represent the family of cliques,
and the order of columns is unimportant.  By our convention, an interval
graph with $k$ cliques has $k!$ clique matrices, all of which have
the consecutive-ones property but not all of which are consecutive-ones
ordered.

A vertex $v$ is {\em simplicial} if $N[v]$ is a complete subgraph;
in this case, $N[v]$ must be a clique.

A {\em chordal graph} is a graph with no induced cycle of size greater 
than three. A chordal graph has 
$O(n)$ cliques, and the sum of their cardinalities is $O(n+m)$,
so a sparse representation of a clique matrix takes $O(n+m)$ space.
It takes $O(n+m)$ time to find a sparse representation of a
clique matrix of a chordal graph by the algorithm
of Rose, Tarjan and Lueker~\cite{RTL:triangulated}.

Booth and Lueker's algorithm~\cite{BL76} for recognizing interval graphs 
uses the algorithm of Rose, Tarjan and Lueker either to determine
that the graph is not chordal, in which case it is not an interval graph,
or else to produce a sparse representation of a clique matrix.  

It then
uses the fact that a graph is an interval graph if and only if
its cliques have the consecutive-ones property.  
The central element of their recognition algorithm is an algorithm
for either finding a consecutive-ones ordering of a 0-1 matrix, or
else determining that none exists.  They apply this to a clique
matrix of the chordal graph to determine whether it is an interval
graph.  Figure~\ref{fig:intvlExample} gives an example.

\begin{figure}
\centerline{\includegraphics[]{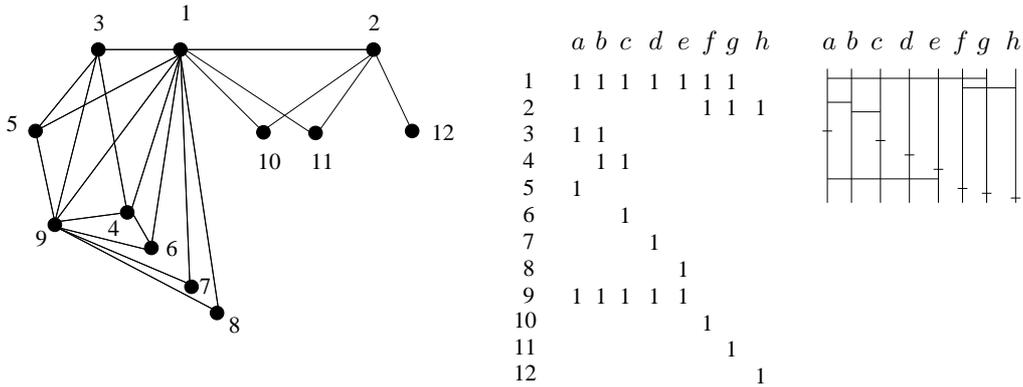}}
\caption{An interval graph, a consecutive-ones ordered-clique matrix
(the matrix of Figure~\ref{fig:PQExample})
and a schematic representation of it.}\label{fig:intvlExample}
\end{figure}

To see why a graph is an interval graph if its clique matrices are
consecutive-ones matrices, note
that the consecutive-ones ordering of a clique matrix defines
an interval model:  the interval for each vertex extends
from the first to the last column of the block of consecutive ones
in its row.  
Two vertices of a graph are adjacent if and only if
they are members of a common clique, so two of these intervals
intersect if and only if their vertices are adjacent.
Thus Booth and Lueker's algorithm produces an interval model
whenever the input graph is an interval graph.

To see why the clique matrices of every interval graph have the consecutive-ones
property, let $G$ be an interval graph.  There exists a set ${\cal I}$ of intervals 
on the line, one for each vertex, whose intersections give the edges of $G$.
For each clique $K$ of $G$, the intervals corresponding to $K$ are pairwise
intersecting.  Any set of pairwise intersecting intervals must have
an intersection point $p$ in common; this is known as the {\em Helly
property.}  Associating
one such point on the line for each clique gives a left-to-right  ordering of the cliques.
For each vertex, the cliques that contain it are those whose
associated points lie in the vertex's interval.  These cliques are consecutive in
the left-to-right ordering, so this ordering of the cliques is a consecutive-ones
ordering.  For example, in Figure~\ref{fig:intvalGraph}, the cliques from
left-to-right are $(\{a,g\}, \{a,b,c\}, \{b,c,d\}, \{b,e,f\}, \{b,f,h\})$,
and for each vertex, the cliques that contain the vertex are consecutive
in this ordering.

An important fact for our purposes is that Booth and Lueker's algorithm 
for finding a consecutive-ones ordering can operate on an arbitrary 0-1 matrix,
not just a clique matrix of a chordal graph.  Its input is a sparse representation
of the matrix, and it takes time proportional to the number of rows, columns, and 1's
in the matrix.  
If a 0-1 matrix is consecutive-ones ordered, let the {\em left endpoint} of a 
row be the column of its first 1, the {\em right endpoint} be the column of 
its last 1, and the row's {\em interval} the block of columns where it has 1's.

For the consecutive-ones probe matrix problem, we seek to represent the inputs in space proportional 
to the number of rows, columns, and $1$'s in the probe matrix.  In other words, 
we do not have to represent the $\ast$'s explicitly.  
Let $M_R$ be the submatrix of the probe matrix $M$ 
whose rows are the rows that do not have $\ast$'s, and whose columns are 
all columns of $M$.  Let $M_C$ be the submatrix of $M$ whose columns are 
the columns that do not have $\ast$'s, and whose rows are all rows of 
$M$.  The columns of $M_C$ are a subset of the columns
of $M_R$, and the rows of $M_R$ are a subset of the rows
of $M_C$.
We represent the input to the problem using sparse representations of $M_R$ and $M_C$,
neither of which contain $\ast$'s.

A solution is any a consecutive-ones ordering of columns of $M_R$, hence
of the columns of $M$,
such that the subsequence given by columns in $M_C$ is also a consecutive-ones
ordering of $M_C$.  This assigns a position to each column of $M_C$ among columns of $M$.
An $\ast$ is implicitly a 1
if it occurs between two 1's from columns of $M_C$ in the ordering of columns of $M$.
Since $M_C$ is consecutive-ones ordered, this gives a consecutive-ones ordered matrix.
We solve the problem in time linear in the number of rows, columns,
and 1's in $M_R$ and $M_C$.
This time bound does not allow us to explicitly assign 1's to the $\ast$'s;
they are implied by the column order.  This nevertheless allows linear-time
construction of a simple data structure that allows $O(1)$ lookup of the value
in any row and column of $M$, by storing the column number of the first and the last 
1 of every row.

\section{Overview}

At various points,
our algorithm may find that a required property is not met, when it must hold if $G$
is a probe interval graph.  In this case, we may reject $G$.  For instance,
if $G$ is a probe interval graph, then the subgraph $G[P]$ induced by
the probes $P$ is an interval graph, so a clique matrix of this subgraph must have
the consecutive-ones property.  An initial step is to find a consecutive-ones
ordering of this matrix, which is required in order to carry out the next
steps of the algorithm.  If does not have one,
then we can reject $G$ and halt.
We therefore assume that $G$ is a probe interval graph and use this to prove properties that are
required at each step.  If we cannot perform the operations
as described at the step because the properties do not hold, we reject $G$ and halt.
The algorithm also tests for required properties before a step if their absence could undermine
the time bound before the problem is noticed.  If it does not reject $G$,
the algorithm constructs a probe interval model of $G$.

The notion of a model of an interval graph generalizes easily to probe interval graphs.
Henceforth in this paper, we use the following convention:

\begin{definition}\label{def:model}
A {\em probe interval model} of
a probe interval graph $G=(P,N,E)$ is a consecutive-ones-ordered matrix that has one
row for each vertex, such that two vertices are neighbors in $G$
if and only if their rows intersect and at least one of the vertices is a probe.
An {\em interval model} of an interval graph is a probe interval model
that has no non-probes.
\end{definition}

The consecutive-ones ordered clique matrices of an interval graph are not the only models
of an interval graph that satisfy Definition~\ref{def:model}.  Between any two
consecutive cliques $c_i$ and $c_{i+1}$, it
is easy to see that a column $c$ can be added such that $S(c)$ is a subset
of $S(c_i)$ or of $S(c_{i+1})$ and 
supersets of $S(c_i) \cap S(c_{i+1})$ without affecting the represented graph.  

Notice that the probes in each column induce a complete subgraph in $G$, but the
vertices in a column do not induce a complete subgraph if the column contains
more than one non-probe, since non-probes are nonadjacent.  
The Helly property requires that a each clique of $G[P]$ be a subset of
some column in every probe interval model.

Booth and Lueker treat consecutive-ones ordered clique matrices as the
``canonical'' or ``normal'' representation of an interval model.
Let us call this an interval model in {\em normal form}.
Restricting the focus to this normal form has distinct advantages.
This representation distills down the information that the graph gives 
about possible structures of models, without representing arbitrary details
that cannot be deduced from $G$.
Every one of these models is a consecutive-ones ordering of a single model,
allowing the PQ tree to give a representation of all of them.  They gives a precise
definition to what it means for the model to be uniquely constrained:  the model
is unique, up to reversal of columns.  Also, the number of 1's in the clique
matrix is $O(n+m)$, and this is not true for arbitrary models.  Models that are
not clique matrices are implicitly represented by those that are.

We therefore also seek a generalization of this standard form to probe interval models.
If $M$ is a probe interval model, let the {\em probe set} in a column denote
the probes that are members of the column.
A {\em contraction} of a row in a probe interval model is the operation
of changing its first or last 1 to a 0, resulting in a shorter interval
for the row's vertex.
A row is {\em taut} if contracting it changes the
represented neighborhood of the row's vertex.  A probe
interval model is {\em taut} if every row is taut.
Two consecutive columns in a model can be {\em merged} if they
can be replaced with their union without changing the represented probe interval graph.
A model is {\em minimal} if no two consecutive columns can be merged.
Two such columns can be
merged if one is a subset of the other, or if they have the same
probe set, since making non-probes subsets of a common column
does not represent them as adjacent.

\begin{definition}\label{def:normal}
A probe interval model is a {\em normal probe interval model} 
if it is taut and minimal.
\end{definition}

It is easy to see that in the special case of an interval model, where
there are only probes, a model is
a normal model if and only if it is a consecutive-ones ordered clique matrix.
Therefore, a normal probe interval model is a generalization of the 
an interval model in normal form.
We show below that, just as in the case of interval graphs, every probe interval 
graph has a normal model.  As in the case of clique matrices, a normal model has 
$O(n+m)$ 1's.  This is important for the time bound.
Note that $G[P]$ is an interval graph.
Each clique of $G[P]$ is a subset of exactly one column
of every normal model of $G$ (\ref{lem:cliquesUnique}), just as in the special case of normal
interval models.  If a probe interval graph has no simplicial non-probes, the normal 
models are the consecutive-ones ordering of a single normal model (\ref{thm:allModels}), 
and the PQ tree of this model therefore gives a representation of all normal models,
just as in the case of normal interval models.  

Unfortunately, this is not true of probe interval graphs that have simplicial non-probes.
The difficulties posed by simplicial non-probes were previously identified by Uehara~\cite{U04}.
That a PQ tree does not suffice to represent all probe interval models of a partitioned
graph is illustrated by Figure~\ref{fig:badSimplicials}.

\begin{figure}
\centerline{\includegraphics[]{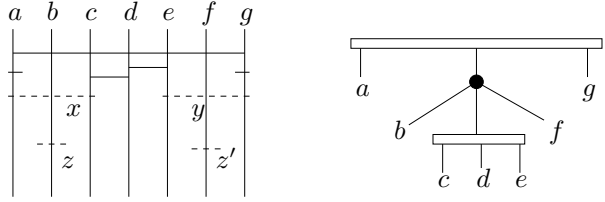}}
\caption{A PQ tree cannot represent all possible arrangements of intervals in a probe interval model
that contains both simplicial and non-simplicial non-probes.
On the left is a probe interval model, where the dashed lines are the
non-probes.  Columns $b$ and $f$ owe their
existence to simplicial non-probes $z$ and $z'$.  The positions of $z$ and $z'$ can
be swapped without otherwise changing the order of the columns to obtain
a new model for the same graph.  This suggests
the PQ tree at the right.  However, now there is no way for the PQ tree
to reflect the constraint, imposed by the other intervals in the model,
that the relative order of $a$ and $g$ constrains the order of $(c,d,e)$.  
The orderings expressed by a PQ tree have a type of ``context-free'' property, 
in the sense that the orderings expressed by a subtree are independent of any larger
context, and this is not sufficient for expressing all probe interval
models of a graph that has both simplicial and non-simplicial non-probes.
}\label{fig:badSimplicials}
\end{figure}

We give an algorithm for finding whether a probe interval graph has a unique
normal model, up to reversal of column order.

Every column of a normal model contains an endpoint of a row.
Let the {\em clique columns} be those that contain a clique of $G[P]$.
We show that a  column is a clique column if and only if it contains both left and right 
endpoints of members of $P$.
Let $N_S$ be the simplicial non-probes.
Let the {\em simplicial columns} be those that are not clique columns and that
only contain endpoints of members of $P \cup N_S$.  The remaining
columns, the {\em semi-clique columns}, contain right endpoints from $P$ and left endpoints
from from $N \setminus N_S$, or vice versa.  The {\em non-simplicial columns} are
the clique and semi-clique columns.

At the highest level, the strategy is
to build up increasingly larger matrices that are submatrices of normal models
of increasingly larger subgraphs of $G$.   At each step, we permute
the columns from the matrix produced by the previous step so that it is a submatrix of a normal model
of the next larger subgraph, then add rows and columns to obtain
a normal model of the larger subgraph.  We halt when we either discover
along the way that $G$ is not a probe interval graph, or else when we return
a normal model of $G$.

The reader may find it helpful to refer
to the following sequence of matrices when reading the paper.  The sequence is an outline
of the steps of the algorithm.  

\begin{itemize}
\item $M_K$:  This is a consecutive-ones ordered clique matrix of $G[P]$, which is
an interval graph; we get this using the interval-graph recognition algorithm
of~\cite{BL76}.  If $M$ is a normal model of $G - N_S$, some consecutive-ones ordering
of $M_K$ is a submatrix of $M[P]$.  
In general, however, not every consecutive-ones ordering of $M_K$ is a submatrix of a normal model
of $G - N_S$.

\item $M^+_K$:  The columns of $M^+_K$ are the clique columns of every normal model of $G - N_S$,
which is the same set of columns as that of $M_K$.
We obtain $M^+_K$ by adding one row for each member of $N \setminus N_S$
to $M_K$, filling in the remainders of the columns in the new rows,
and finding a consecutive-ones ordering of the resulting matrix.
Not all consecutive-ones orderings of $M^+_K$
are submatrices of normal models of $G - N_S$, however.

\item $M'_K$ and $M^*_K$:  $M'_K$ is obtained by adding {\em constraint rows} to $M^+_K$. 
These
correspond to columns that must be consecutive in any ordering of columns of $M^+_K$
that gives a submatrix of a normal model of $G - N_S$.  $M^*_K = M'_K[V \setminus N_S]$
is a submatrix of a normal model of $G - N_S$.
The matrices $M^+_K$ and $M^*_K$ differ only in the order of their columns.
We let $C_K$ denote their column set.

\item $M_N$:  This is a normal model of $G - N_S$ obtained by adding semi-clique columns to $M^*_K$.
This is possible to do because $M^*_K$ is a submatrix of a normal model of $G - N_S$.  
We show that, ignoring members of $N_S$, this gives the clique and semi-clique columns in every
normal model of $G$.  Not all consecutive-ones orderings of columns of $M_N$ are submatrices
of normal models of $G$, however.

\item $M_P$:  For some normal model $M'_G$ of $G$, some ordering of columns of $M_P$ is
equal to the entire set of columns of $M'_G[P]$.  There is one column containing the probe set
of each clique column and each semi-clique column.  In addition, there is one column
for each neighborhood of a simplicial non-probe that is not the probe set of a clique
column or a semi-clique column.  The latter set corresponds to the simplicial columns
of a normal model of $G$.  Thus, for each $x \in N_S$, there is a column
of $M_P$ equal to $N(x)$. The set of rows of $M_P$ corresponds to the set of probes $P$.

\item $M_G[V \setminus N_S]$ and $M_G$:
Let $X$ denote the clique and semi-clique columns in a normal model $M_G$ of $G$.
$M_G[V \setminus N_S][X]$ is a consecutive-ones ordering of
the columns of $M_N$.  
The set columns of $M_G[V \setminus N_S]$ is the same set of columns as of $M_P$,
and the set of rows of $M_G[V \setminus N_S]$ is the same set of rows as of $M_N$.
Note that the rows of $M_P$ are a subset of the rows
of $M_N$ and the columns of $M_N$ are a subset of the columns of $M_P$.
We can find $M_G[V \setminus N_S]$ for some model $M_G$ of $G$
by solving the consecutive-ones probe matrix problem using $M_P$ in the role of $M_R$
and $M_N$ in the role of $M_C$, where $M_R$ and $M_C$ are the matrices from the definition
of the consecutive-ones probe matrix problem (Section~\ref{sect:classes}). Since we need an $O(n+m)$ time bound, and
not a time bound proportional to the number of 1's in $M_R$ and $M_C$, we
can explicitly fill in the $\ast$'s that are 1's, giving a sparse representation
of $M_G[V \setminus N_S]$.  This is because $M_G[V \setminus N_S]$ is a submatrix
of a normal model of $G$, so it has $O(n+m)$ 1's.

The probe set in each simplicial column is a subset of a clique of $G[P]$.  It follows
that $M_G[V \setminus N_S]$ is a model of $G - N_S$.  It is not a normal model of $G - N_S$.
However, for each $x \in N_S$, it now has a column equal to $N(x)$, so $x$ can be placed
in this column and its neighborhood is correctly represented.  Doing this for all
$x \in N_S$ yields a normal model $M_G$ of $G$.  
\end{itemize}

Let us assume that the members of $P$ occupy the top rows of
a model, followed by members of $N \setminus N_S$, followed
by the members of $N_S$.  This is accomplished with a suitable
numbering of the vertices, which fixes the row order in all models.

For each probe $p$, let ${\cal Q}(p)$ denote the set of cliques
of $G[P]$ that are subsets of $N[p]$.  These are the set of
cliques that contain $p$ as a member.  Generalizing this,
for each non-probe $x$, we let ${\cal Q}(x)$ denote the set of 
cliques of $G[P]$ that are subsets of $N(x)$.  Since the sets represented by columns
of $M_K$ are the cliques of $G[P]$, we may represent ${\cal Q}(p)$
and ${\cal Q}(x)$ with the corresponding sets
$Q(p)$ and $Q(x)$ of columns of $M_K$.
The set $Q(p)$ is given for each probe $p$, and
using the fact that $M_K$ is consecutive-ones ordered, we can efficiently
find $Q(x)$ for each non-probe $x$, as described in Section~\ref{sect:n1n2n3}.

We partition the set 
of non-probes into three sets: for a non-probe $x$, $x \in N_1$ 
if $|Q(x)| > 1$, $x \in N_2$  if it is non-simplicial and $Q(x) = \emptyset$,
and $x \in N_S$ if it is simplicial.
Note that if $x \in N_1$, it is not simplicial, so
$\{N_1,N_2,N_S\}$ is a partition of $N$.

To find $M^+_K$, 
we add one row to $M_K$ for each $x \in N_1 \cup N_2$, equal
to $Q(x)$.  That this completes the clique columns
of every normal model of $G - N_S$ follows
from the Helly property and the appearance of each clique of $G[P]$ in 
a unique column.  
For each vertex of $N_2$, the row is empty.
$M^+_K$ is any consecutive-ones ordering of the resulting
matrix.

For each $x \in N_1$,
probes in columns in $Q(x)$ are neighbors of $x$, since they
belong to cliques that are subsets of $x$'s neighborhood.
$M^+_K$, when interpreted as a probe interval model, represents these adjacencies.
$M^+_K$ is not a complete model of $G - N_S$; for a neighbor $p$ of $x$, it can be that $x$
only occurs in columns that also have non-neighbors of $p$.   Therefore, $Q(p)$ and $Q(x)$ do 
not intersect in $M^+_K$, and $M^+_K$ fails to represent their adjacency.
Let us call these {\em unfulfilled adjacencies} in $M^+_K$.

These are the reason semi-clique columns also occur in normal models of $G - N_S$.
Such an unfulfilled adjacency must be resolved by inserting a semi-clique
column where $x$ and $p$ can meet.
This requirement places additional constraints on the ordering
of columns of $M^+_K$.  It must be such that no columns $c$ 
intrudes between the intervals for $x$ and $p$.  This would
block them from meeting each other in a new column, since one of them would
have to cross $c$.  Since  $c$
is not a member of $Q(x)$ or of $Q(p)$, it
contains non-neighboring probes for both $x$ and $p$, and having one
of them cross $c$ would misrepresent the graph.

We can avoid this by adding a {\em constraint row} to $M^+_K$,
equal to $Q(x) \cup Q(p)$ for each such pair, and getting a consecutive-ones
ordering of the resulting matrix.  (See Figure~\ref{fig:constraints}).  
A variant of this trick is required when $x \in N_2$, since then $Q(x)$ is
empty.  

Doing this for all unfulfilled adjacencies would exceed the $O(n+m)$ bound.
Fortunately, many of the constraints are redundant.  We add a constraint only for
{\em representative pairs} of unfulfilled adjacencies, and the omitted
constraints are redundant (Figure~\ref{fig:minConstraints}).  This adds
$O(n+m)$ 1's to the matrix, and a consecutive-ones ordering of it
gives $M'_K$.

Now that they have helped order the columns of $M'_K$, we delete the constraint
rows to obtain $M^*_K = M'_K[V \setminus N_S]$, which
is a submatrix of a normal model of $G - N_S$.

We can now extend $M^*_K$ to the model $M_N$ of $G - N_S$
by inserting the semi-clique columns between each pair of consecutive
columns of $M^*_K$ (Figure~\ref{fig:Cxy}).

A normal model can then be built for $G$ using a solution to the
consecutive-ones probe matrix problem, as described in the last point above.
A key tool in our solution of the consecutive-ones probe matrix
is the idea of the {\em restriction} $T[C]$ of the PQ tree $T$ of
a matrix $M$ to the one-element subsets of a set $C$ of its columns.  This is a type of homomorphism 
that preserves the constraints consecutive-ones orderings of $M$ impose
on the relative orderings of columns in $C$.  (See Figure~\ref{fig:PQRestrict}.)
A special case of this operation has been described in~\cite{McCC1Cert} and in~\cite{McCPQAlg}, but
only for certain sets of columns such that $T[C]$ is the PQ tree of $M[][C]$.
The general solution is trivial to compute (Algorithm~\ref{alg:PQRestrict}).
The concept shows promise as a tool for proofs.
The strategy for applying it to the consecutive-ones probe matrix problem
is outlined in the caption of Figure~\ref{fig:probeOnes}.

\section{Initial steps and observations}\label{sect:initial}

We can run the recognition algorithm separately on each connected component
of $G$ to produce disjoint probe interval models for the components.
The collection of these is a probe interval model of $G$.
If any component fails to be a probe interval graph, then $G$ fails
to be a probe interval graph.  This reduces the problem to that of deciding
whether a connected graph is a probe interval graph, and producing a model if it
is. Henceforth, we will
assume that $G$ is connected.

\subsection{Variations on radix sorting}

Given a collection of lists of integers from $\{1, 2, \ldots, n\}$, whose
sum of lengths is $k$, we 
may sort each list by sorting all the elements of all the lists in a single radix sort
using set number as primary sort key and element value as secondary
sort key.  This takes $O(n+k)$ time.
We can sort the adjacency lists of a graph in linear time, for example.
Also, we can sort the collection of the lists lexicographically in $O(n+k)$ time
even though they have different lengths~\cite{AHU74}.  Thus, we can sort
the rows or columns of a matrix lexicographically in linear time, given
a sparse representation.

Variations of this that we will use are the following.
Suppose we are given $j$ groups of lists of integers from 1 through $n$,
and the sum of lengths of the lists of integers is $k$.
We can sort each list of integers, all lists lexicographically,
and then, using a stable sort, segregate this list back into
the $j$ groups.  This gives each of the $j$ groups, sorted lexicographically,
in $O(n+k)$ time.  Since each list of integers is sorted,
we may eliminate any
duplicate lists in any of the $j$ groups, also in $O(n+k)$ time.
If, instead of sorting
the lists lexicographically in this sequence of operations, we sort 
the lists by length, we get each of the $j$ groups
sorted by length.  We can then determine whether the elements of each of the $j$ groups
induce a chain $X_1 \subseteq X_2 \subseteq \ldots \subseteq X_k$  
in the subset relation, in $O(n+k)$ time.  

\subsection{Properties of normal models}\label{sect:normalModels}

\begin{lemma}\label{lem:normalModel}
For every probe interval graph $G=(P,N,E)$, there exists a normal probe 
interval model of $G$.
\end{lemma}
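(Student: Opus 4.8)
The plan is to start from an arbitrary probe interval model of $G$ (one exists by hypothesis) and transform it into a normal one by two reduction processes that terminate: first make it minimal by merging consecutive columns, then make it taut by contracting rows, while checking that neither process undoes the other. Concretely, suppose $M$ is any probe interval model of $G$ in the sense of Definition~\ref{def:model}. I would first repeatedly merge any two consecutive columns that can be merged (one a subset of the other, or same probe set). Each merge strictly decreases the number of columns, so this terminates, yielding a minimal model $M_1$ representing the same graph.

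Next I would make $M_1$ taut: while some row $r$ is not taut, contract it — change its first or last $1$ to a $0$ — which by definition does not change the represented neighborhood of $r$'s vertex. The subtle point is that $M_1$ has finitely many $1$'s and each contraction removes one, so this process also terminates, yielding a model $M_2$ that is taut. The main thing to verify carefully here is that no row ever becomes empty in a way that destroys the model: a probe $p$ in a nonempty column always has at least its own closed-neighborhood cliques of $G[P]$ to witness, and contracting $r$ only happens when it does not change $N(r)$, so if $r$ still has any neighbor, $r$ cannot be contracted to empty; and an isolated vertex is handled trivially (empty row is fine, or the connectedness assumption from Section~\ref{sect:initial} lets us ignore this case). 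One should also note contraction does not introduce a new mergeable pair of columns in a way that loops — but in fact we don't need $M_2$ to still be minimal mid-process; we just need the final object to be both taut and minimal.

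So the real obstacle, and the reason the two-phase order matters, is the interaction: contracting a row could in principle create two newly-mergeable consecutive columns, breaking minimality. I would handle this by interleaving — after the tautness phase, run the merging phase again; but then merging could in principle make a taut row non-taut? It cannot: merging columns $c, c'$ into $c \cup c'$ only lengthens no row's interval and shortens none either (a row through both stays through the union; a row through one stays through the union), so the represented graph is unchanged and every row's interval is unchanged as a set of vertices it meets, hence taut rows stay taut. Thus merging preserves tautness, so: first contract to a taut model, then merge to a minimal one; the merging phase preserves tautness, and we are done. I would also invoke the earlier remarks (following Definition~\ref{def:model}) that adding/merging such columns or contracting such rows does not change the represented probe interval graph, so the final model still represents $G$.

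The expected main difficulty is purely bookkeeping: arguing termination cleanly (count columns for merging, count $1$'s for contraction) and arguing that the merging phase cannot re-break tautness, which is the one non-obvious monotonicity claim. Everything else — that an initial model exists, that each elementary operation preserves the graph — is either given by hypothesis or stated in the preliminary discussion. I would write the proof as: (i) take any model; (ii) contract until taut, noting termination; (iii) merge until minimal, noting termination and that tautness is preserved; (iv) conclude the result is a normal model of $G$ by Definition~\ref{def:normal}.
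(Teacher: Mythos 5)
Your proof is correct and takes essentially the same approach as the paper: start from an arbitrary model and repeatedly contract non-taut rows and merge mergeable consecutive columns until neither operation applies, noting that each step preserves the represented graph. The paper sidesteps your two-phase ordering and the merging-preserves-tautness argument by observing that both operations strictly decrease the number of 1's in the matrix, so they may be interleaved in any order and the process still terminates in a model that is simultaneously taut and minimal.
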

\begin{proof}
Let $M$ be an arbitrary probe interval model of $G$.
If some row is not taut, we may change an endpoint of its
interval from 1 to 0 without affecting the represented graph.  
If two consecutive columns can be merged without changing the represented 
graph, we merge them.

We iteratively perform one of these operations until none of them can be 
performed. Since each operation reduces the number of 1's in the matrix,
this process eventually results in a normal model,
and since none of the operation changes
the represented probe interval graph, it is a normal model of $G$.
\end{proof}

\begin{lemma}\label{lem:nonemptyProbes}
In every normal model of a connected probe interval graph that has more than one vertex,
every column has a nonempty set of probes.
\end{lemma}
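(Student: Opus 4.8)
The plan is to argue by contradiction: suppose $M$ is a normal model of a connected probe interval graph $G$ with $|V| > 1$, and suppose some column $c$ of $M$ has an empty probe set. Since $M$ is taut, $c$ must contain an endpoint of some row; as $c$ has no probes, that endpoint belongs to a non-probe $x$. I would consider the consecutive columns immediately to the left and right of $c$ (at least one exists, since there are multiple vertices and $G$ is connected, so $M$ has more than one column unless everything collapses — I should handle the degenerate single-column case separately, but there connectivity forces a probe in that column). The key point is that $c$ can be \emph{merged} with a neighbor, contradicting minimality, unless doing so changes the represented graph; so I need to show the merge is harmless.

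The main work is a case analysis on which endpoints $c$ carries. First I would establish that every vertex whose row meets $c$ is a non-probe: if a probe $p$ had a $1$ in column $c$, then $c$ would have a nonempty probe set. So all rows intersecting $c$ are non-probes, and non-probes are pairwise nonadjacent, so merging $c$ into an adjacent column $c'$ can only create edges between a vertex of $c$ and a vertex of $c'$, or extend an interval. Since the rows through $c$ are non-probes, and a merge only makes their intervals coincide with more columns, the only new adjacencies would be between a non-probe in $c$ and a probe in $c'$ — but I claim this does not happen, because tautness pins down where endpoints sit. Concretely, if $c$ contains a \emph{left} endpoint of non-probe $x$, I would merge $c$ with its left neighbor $c'$: this extends $x$'s interval leftward to include $c'$. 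For this to change the graph, $c'$ would need to contain a probe $p$ not already adjacent to $x$; but then $x$'s left endpoint could have been contracted (moved right off $c$) without changing $x$'s neighborhood only if $c$ contributed no adjacency — and since all of $c$'s occupants are non-probes, $c$ indeed contributes no adjacency for $x$, so $x$'s row is not taut at its left endpoint, contradicting tautness. The symmetric argument handles a right endpoint in $c$.

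Thus no row can have an endpoint in $c$ at all, contradicting tautness (every column of a normal/taut model contains an endpoint — this is noted in the overview just before the definition of clique columns, and also follows directly from minimality plus tautness). This contradiction shows every column has a nonempty probe set. I would also need the small remark that connectivity with $|V|>1$ guarantees $M$ is nonempty and, if it has a single column, that column contains all vertices, which must include a probe since otherwise all vertices are pairwise nonadjacent non-probes, contradicting connectedness for $|V|>1$.

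The step I expect to be the main obstacle is making the tautness argument fully rigorous in the presence of \emph{both} a left and a right neighbor of $c$ and the possibility that $c$ holds several endpoints of several non-probes at once: I want to be sure that contracting the offending endpoint of one non-probe $x$ genuinely does not change \emph{any} adjacency of $x$, which requires knowing that the column $c$, having only non-probes, cannot be the unique meeting column of $x$ with any probe neighbor. That is exactly where ``$c$'s probe set is empty'' is used, so the argument closes, but stating it cleanly — separating the left-endpoint case, the right-endpoint case, and the bookkeeping that at least one neighbor of $c$ exists — is the delicate part.
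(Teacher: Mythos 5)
Your argument is correct and is essentially the paper's own proof: a column $c$ with an empty probe set cannot carry any endpoint, because a proper endpoint of a non-probe in $c$ could be contracted without changing that vertex's represented neighborhood (the column contributes no probe neighbors), violating tautness, and a column with no endpoints can be merged with a neighbor, violating minimality. The one case you should state explicitly, which the paper dispatches in a single clause, is a non-probe whose entire interval is the single column $c$ (a degenerate endpoint, where ``contracting off $c$'' is not available): such a vertex would be a simplicial non-probe with no neighbors, contradicting connectedness with $|V|>1$ --- your connectedness remark supplies exactly this, but as written you apply it only to the case where the whole model has one column.
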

\begin{proof}
If $c$ is a column with an empty set of probes, $c$ cannot be
the endpoint of any vertex, since it would either fail to be
taut or be a simplicial non-probe with no neighbors, contradicting
connectedness of the graph.
\end{proof}

\begin{lemma}\label{lem:simplicialPoints}
In every normal model, each non-probe resides
in a single column if and only if it is simplicial.
\end{lemma}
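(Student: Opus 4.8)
The plan is to prove the two implications separately; the forward direction (residing in a single column implies simplicial) holds in an arbitrary probe interval model, while the reverse direction uses only tautness of the row in question. For the forward direction, suppose a non-probe $x$ occupies exactly one column $c$ of a model $M$. Because $x$ is a non-probe, its neighbors are precisely the probes whose rows intersect the row of $x$, and since $x$ has a $1$ only in column $c$, these are exactly the probes in the probe set of $c$. Any two such probes have intervals containing $c$, hence intersect, hence are adjacent; and $x$ is adjacent to each of them. Therefore $N[x]$ is a complete subgraph, so $x$ is simplicial.

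For the reverse direction, I would argue by contradiction: let $x$ be a simplicial non-probe whose interval in a normal model $M$ spans columns $c_\ell, \ldots, c_r$ with $\ell < r$. Since $M$ is taut, contracting the left endpoint of the row of $x$ --- changing its $1$ in column $c_\ell$ to a $0$ --- must change $N(x)$. As $N(x)$ consists only of probes, this change can only be the removal of some probe $p$ from $N(x)$; this forces $p$'s interval to meet $x$'s interval only in $c_\ell$, so the right endpoint of $p$ is $c_\ell$, and $p \in N(x)$. Symmetrically, tautness at the right endpoint of $x$'s row yields a probe $q \in N(x)$ whose left endpoint is $c_r$. Now $p, q \in N(x)$ and $x$ is simplicial, so $p$ and $q$ are adjacent, and hence their intervals intersect; but $p$'s interval ends at $c_\ell$ while $q$'s begins at $c_r$, and $c_\ell$ precedes $c_r$ because $\ell < r$, so the intervals are disjoint --- a contradiction. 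Hence $\ell = r$, i.e., $x$ resides in a single column.

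I expect the main work to be in making the tautness step airtight: precisely, that the only effect of turning the first $1$ of $x$'s row into a $0$ is to delete from $N(x)$ exactly those probes whose interval intersected $x$'s interval solely in column $c_\ell$ --- equivalently, those probes with right endpoint $c_\ell$ --- so that any change in $N(x)$ guarantees such a probe exists. One small caveat also deserves a remark: the statement presumes $x$'s row is nonempty (an empty row cannot be contracted and would make $x$ isolated); this is harmless in our setting, since we have reduced to connected graphs, so a non-probe with no neighbors does not arise (cf.\ Lemma~\ref{lem:nonemptyProbes}).
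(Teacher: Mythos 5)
Your proof is correct. The forward direction is the same as the paper's: the neighbors of a non-probe confined to a single column $c$ are exactly the probes of $c$, which pairwise intersect there, so $N[x]$ is complete. For the reverse direction, however, you take a genuinely different route. The paper applies the Helly property to the pairwise-intersecting intervals of $\{y\} \cup N(y)$ (simpliciality makes $N(y)$ a complete set of probes) to obtain a single column containing all of $N(y)$, and then observes that the 1's of $y$ outside that column are redundant, violating tautness. You instead use tautness directly at each endpoint: contracting the left endpoint can only shrink the represented neighborhood, so it must delete some probe $p$ whose interval meets $x$'s only at $c_\ell$, forcing $p$'s right endpoint to be $c_\ell$; symmetrically you get $q$ with left endpoint $c_r$; then simpliciality forces $p$ and $q$ to be adjacent while their intervals are disjoint. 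Your version avoids invoking Helly and turns the contradiction against simpliciality rather than against tautness, at the cost of having to argue carefully (as you note) that a neighborhood change under contraction is witnessed by a probe whose interval touches $x$'s only in the contracted column; the paper's version is shorter once Helly is taken for granted. Your closing caveat about empty rows is a reasonable observation and is indeed discharged by the standing connectedness assumption.
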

\begin{proof}
If a non-probe $x$ occurs in only one column $c$, its neighbors are $S(c) \cap P$, which form
a complete subgraph due to their presence in a common column, and $x$ is simplicial.
If a simplicial non-probe $y$ has a 1 in more than one column, 
the Helly property dictates that one of the columns has
$N(x)$ as its probe set, and $y$'s 1's in the other columns
can be deleted without affecting the represented graph,
contradicting the tautness of the model.  
\end{proof}

A probe may occur in only one column without being simplicial, since its 
non-probe neighbors are nonadjacent to each other.

\begin{lemma}\label{lem:cliquesUnique}
In a normal model, each clique of $G[P]$ is a subset of exactly one column.
\end{lemma}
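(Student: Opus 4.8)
The plan is to show both existence (at least one column contains the clique) and uniqueness (at most one). Let $G[P]$ be the interval graph on the probes, and let $K$ be a clique of $G[P]$. Since the probes of any column induce a complete subgraph of $G[P]$ (noted just before Definition~\ref{def:model}), no column can properly contain a set strictly larger than a clique; the substance is that $K$ is contained in \emph{some} column, and in \emph{only one}.

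For existence, I would invoke the Helly property exactly as the text does for clique matrices. Fix a normal model $M$ of $G$. The vertices of $K$ are pairwise adjacent in $G$, and since each is a probe, pairwise adjacency forces their rows to pairwise intersect in $M$. Each row of a consecutive-ones-ordered matrix is an interval of columns, so the rows of the members of $K$ form a family of pairwise intersecting intervals; by the Helly property for intervals they share a common column $c$. Then $K \subseteq S(c)$, so $c$ is a column containing $K$. (This also re-proves that $S(c)\cap P$ is then a clique, i.e. $c$ is a clique column in the terminology of the overview, since $K$ is maximal.)

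For uniqueness, suppose two distinct columns $c_1$ and $c_2$ both contain $K$; without loss of generality $c_1$ appears to the left of $c_2$ in the column order. Every row in $K$ has a $1$ in both $c_1$ and $c_2$, hence — the row being an interval of columns — has a $1$ in every column between $c_1$ and $c_2$ inclusive. I claim the two consecutive columns $c_1$ and the column $c'$ immediately to its right can be merged, contradicting minimality of the normal model. To see this I would argue that $S(c')$ is forced to be a subset of $S(c_1)$, or that they have the same probe set, using tautness: any row that has its left endpoint at $c'$ would have to be a probe not in $K$ (since all rows of $K$ already contain $c_1$), but then by Lemma~\ref{lem:nonemptyProbes}-style reasoning on the structure near $K$ one shows such an endpoint cannot exist strictly between $c_1$ and $c_2$ without violating tautness or the maximality of $K$; symmetrically for right endpoints hitting $c'$ from the right side. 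Hence no endpoint of any row lies strictly between $c_1$ and $c_2$ except those accounted for, forcing a mergeable pair, contradicting that $M$ is minimal. Therefore $c_1 = c_2$.

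The main obstacle I anticipate is the uniqueness half: turning ``$K$ sits in two columns'' into an explicit pair of consecutive columns that are mergeable, while carefully handling the non-probe rows (which need not be intervals containing $K$ and which are pairwise nonadjacent, so they cannot be used directly in the Helly argument) and the endpoints of probe rows outside $K$. The cleanest route is probably to pick $c_1, c_2$ to be the \emph{closest} pair of columns each containing $K$ (so nothing strictly between them contains all of $K$, though each intermediate column still contains all of $K$ by the interval property — which already gives a contradiction unless $c_1$ and $c_2$ are adjacent), reducing immediately to the case where $c_1$ and $c_2$ are consecutive, and then deriving mergeability from tautness. I expect the authors' proof of Lemma~\ref{lem:cliquesUnique} to follow essentially this outline, leaning on Lemmas~\ref{lem:normalModel}--\ref{lem:simplicialPoints} already established.
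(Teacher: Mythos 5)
Your existence half is exactly the paper's argument (Helly property applied to the pairwise-intersecting probe rows of $K$), and your overall plan for uniqueness (reduce to mergeable columns and contradict minimality) is also the right shape. But the uniqueness half as written has a genuine gap: you try to derive mergeability from \emph{tautness} via an argument about endpoints lying strictly between $c_1$ and $c_2$, and that argument does not close. The claim that ``any row that has its left endpoint at $c'$ would have to be a probe not in $K$'' is false --- a non-probe can perfectly well have an endpoint in a column between $c_1$ and $c_2$ (e.g.\ a simplicial non-probe whose neighborhood is a subset of $K$), and nothing in Lemma~\ref{lem:nonemptyProbes} or tautness rules this out. So you never actually exhibit a mergeable pair.

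The step you are missing is one you state in your own preamble and then fail to use: since $K$ is a \emph{maximal} clique of $G[P]$ and the probes of any column induce a complete subgraph, every column containing $K$ has probe set \emph{exactly} $K$. By consecutiveness of $1$'s, every column between $c_1$ and $c_2$ contains $K$, hence all of these columns have the same probe set $K$. Two consecutive columns with the same probe set are mergeable by the paper's definition of merging --- the only new intersections created are between non-probes, which are never adjacent in a probe interval graph --- so minimality of the normal model is violated. This is the paper's proof, and it is shorter than your route: no reduction to a closest pair, no case analysis on endpoints, and no appeal to tautness is needed; minimality (not tautness) is the property being contradicted.
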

\begin{proof}
Each clique of $G[P]$ must be a subset of at least one column by the Helly property of interval models.
Suppose a clique $X$ is a subset of more than one column.  By consecutiveness of 1's in each row of $X$,
the intersection of these rows is consecutive.  The columns in this intersection
all have $X$ as their probe set.  
They can be merged, since the intervals that now meet and didn't meet before
are non-probes, contradicting the normality of the model.
\end{proof}

We can now fill out and justify the classification of columns as clique, semi-clique, and
simplicial columns.

\begin{definition}
A vertex that is a member of only one column $c$ in a model has {\em degenerate endpoints}; they 
are both $c$.  The endpoints of a vertex that is a member of more than one column
are {\em proper endpoints}.
A column $c'$ 
of a model is a {\em clique column} if it contains a clique of $G[P]$.
It is a {\em left semi-clique column} of $S(C) \cap (V \setminus N_S)$ if it has a proper right
endpoint of a probe and a proper left endpoint of a non-probe, 
no left endpoint of a probe, and no right endpoint of
a non-probe.  A {\em right semi-clique column} is defined symmetrically.
It is a {\em simplicial column} if it is not a clique column, contains a simplicial
non-probe, and contains no endpoints of non-simplicial non-probes.
\end{definition}

\begin{lemma}\label{lem:C1C2C3}
In a normal model, every column is a clique column, a semi-clique column,
or a simplicial column.
\end{lemma}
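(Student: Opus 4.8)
The plan is to show that every column of a normal model falls into exactly one of the three categories by first establishing a structural dichotomy on the endpoints a column contains, and then matching each case against the three definitions. First I would invoke Lemma~\ref{lem:nonemptyProbes}: in a normal model of a connected graph with more than one vertex, every column contains at least one probe. (The one-vertex case is handled separately and trivially.) Next, I would argue that every column $c$ of a normal model must contain \emph{some} endpoint of \emph{some} vertex: if $c$ contained no endpoint at all, then every row with a $1$ in $c$ also has $1$'s in the columns immediately to the left and right of $c$; one checks that $c$ is then redundant and can be merged with a neighbor, contradicting minimality. So $c$ is the left or right endpoint of at least one vertex.

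The heart of the argument is a case analysis on the endpoints present in $c$, organized by whether $c$ contains a left endpoint of a probe, a right endpoint of a probe, or endpoints of non-simplicial non-probes. The key claim is:
\emph{if $c$ contains both a (proper or degenerate) left endpoint of a probe and a (proper or degenerate) right endpoint of a probe, then $c$ is a clique column.}
To see this, let $L$ be the set of probes whose left endpoint is $c$ and $R$ the set whose right endpoint is $c$; I want to show $L \cup R \cup (\text{probes passing through } c)$ contains a clique of $G[P]$. All these probes pairwise intersect (they share column $c$), so they form a complete subgraph of $G[P]$; I would then argue this complete subgraph is in fact \emph{maximal} — any probe adjacent to all of them would, by the Helly property for its interval together with theirs, have to pass through a column that the left-endpoints-at-$c$ probes do not reach on the left, or the right-endpoints-at-$c$ probes do not reach on the right, and pushing through tautness of those rows forces such a probe to also meet $c$. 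Hence $S(c) \cap P$ is a clique of $G[P]$ and $c$ is a clique column. Conversely, if $c$ is \emph{not} a clique column, then $c$ contains at most one of \{a left endpoint of a probe, a right endpoint of a probe\}.

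With that dichotomy in hand, the remaining cases are: (i) $c$ has a left endpoint of a probe but no right endpoint of a probe; (ii) symmetrically, a right endpoint of a probe but no left endpoint of a probe; (iii) no probe endpoint at all. In cases (i)--(ii), every probe endpoint in $c$ has a fixed orientation, and I would show using tautness that the non-probe endpoints in $c$ must have the opposite orientation (a non-probe whose, say, right endpoint were also at $c$ in case (i) could have that $1$ deleted without changing its neighborhood, since its neighbors among probes all pass through or begin at $c$ — contradicting tautness), and moreover that no non-simplicial non-probe can have a \emph{degenerate} endpoint at $c$ (that would make it reside in a single column, hence simplicial by Lemma~\ref{lem:simplicialPoints}). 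If additionally $c$ contains an endpoint of a non-simplicial non-probe, this yields precisely a left or right semi-clique column; if not, then $c$ contains only endpoints of probes and simplicial non-probes but is not a clique column, and since it must contain \emph{some} endpoint and (by the argument just given) the probe endpoints alone cannot be all of them in a way that avoids the clique-column conclusion — one pushes this to show a simplicial non-probe must be present — so $c$ is a simplicial column. Case (iii), no probe endpoint at all: then by Lemma~\ref{lem:nonemptyProbes} $c$ still has a probe, which merely passes through, so $c$'s only endpoints are of non-probes; by the same tautness/Lemma~\ref{lem:simplicialPoints} reasoning these must be simplicial non-probes, and $c$ is a simplicial column. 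Finally I would remark that the three types are mutually exclusive by their defining conditions on which endpoints are present, so ``exactly one'' holds.

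The main obstacle I anticipate is the \emph{maximality} half of the clique-column claim — showing that a column containing both a left and a right probe endpoint actually contains a full maximal clique of $G[P]$, not merely a complete subgraph. This is where the Helly property of interval models and the tautness of the model must be combined carefully: one has to rule out a probe adjacent to everything in $S(c)\cap P$ that nonetheless avoids $c$, and the cleanest route is probably to take a hypothetical such probe $p'$, note its interval must meet the interval of every probe in $S(c)\cap P$ yet miss $c$, and derive that the leftmost column of $p'$ lies strictly right of $c$ (forced by the right-endpoint-at-$c$ probes) while also strictly left of $c$ (forced by the left-endpoint-at-$c$ probes), a contradiction. The symmetric bookkeeping in cases (i)--(ii) to pin down exactly the semi-clique conditions is routine but must be stated with care so that ``no left endpoint of a probe, and no right endpoint of a non-probe'' etc.\ come out exactly as in the definition.
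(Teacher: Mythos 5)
Your overall architecture matches the paper's proof: a case analysis on which probe endpoints the column $c$ contains, with tautness and mergeability supplying the contradictions, and membership of the relevant non-probes in $N_1\cup N_2$ versus $N_S$ separating semi-clique from simplicial columns. Your Helly-style argument that a column containing both a left and a right probe endpoint is a clique column is a genuine addition: the paper asserts that step without proof, and your argument for it is sound, although your parentheticals have the directions swapped --- a probe with its \emph{right} endpoint at $c$ forces the hypothetical outside probe $p'$ to reach strictly to the \emph{left} of $c$, and a probe with its left endpoint at $c$ forces $p'$ to reach strictly to the right; either way $p'$ must occupy columns on both sides of $c$ and hence, by consecutiveness of its row, $c$ itself.

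The genuine problem is in your cases (i)--(ii). Suppose, as in your case (i), that $c$ contains a left endpoint of a probe and no right endpoint of any probe. Then every probe in $c$ either begins at $c$ or passes through it, so all probes of $c$ also appear in the column immediately to the right. The correct consequences are: (a) a non-simplicial non-probe with a proper \emph{left} endpoint at $c$ is not taut (contract it rightward; it keeps all its probe neighbours), so such endpoints are excluded; and (b) $c$ \emph{must} contain a \emph{right} endpoint of some non-probe, since otherwise every row of $c$ also appears in the column to its right and the probe beginning at $c$ is itself not taut. Your parenthetical argues the opposite: that a non-probe whose right endpoint is at $c$ ``could have that 1 deleted without changing its neighborhood, since its neighbors among probes all pass through or begin at $c$.'' This is false --- a probe that \emph{begins} at $c$ meets a non-probe that \emph{ends} at $c$ only in column $c$, so deleting that 1 destroys the adjacency. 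Worse, if the claim were true it would exclude exactly the endpoints whose presence makes $c$ a (right) semi-clique column, so your case analysis could not close; it also contradicts your own lead-in sentence that the non-probe endpoints have the opposite orientation to the probe endpoints. Relatedly, the subcase in which $c$ has a probe endpoint but no endpoint of the opposite kind at all --- which is where the paper derives that the probe itself is not taut --- is only gestured at (``one pushes this to show a simplicial non-probe must be present''). With the orientations corrected and that subcase made explicit, your argument becomes the paper's.
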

\begin{proof}
Let $c$ be a column and assume for contraction that it is not of one of these three types.  

Suppose that $c$ is not the endpoint of any probe. If $c$ is a proper endpoint of some
non probe $x$ then $x$ is not taut.
If $c$ is a degenerate endpoint of a non-probe then it is a simplicial column.
If $c$ is not an endpoint of any non-probe, then it can be merged with one
of the adjacent columns without affecting the represented graph.

Suppose that $c$ is an endpoint of a probe $p$.
Without loss of generality, suppose
it is a right endpoint.  The column $c$ does not contains a left endpoint of a probe, 
since otherwise it would be a clique column.
Since $x$ has no left endpoint of a probe, no member of
$N_1 \cup N_2$ can have its right endpoint in the column, as it would not be
taut.  If $c$ has no left endpoint at all in the column, then $p$ is not taut.
Thus $c$ must have the left endpoint of a non-probe.
If it contains the left endpoint of a member of $N_1 \cup N_2$, then
it satisfies the definition of a left semi-clique.  Otherwise, the left endpoints
in the column belong to simplicial non-probes, and it is a simplicial column.
In any of the cases we either get a contradiction to the normality of the model,
or to the assumption that $c$ is not of one of the three types. Therefore,
every column must be of one of the tree types.
\end{proof}

Though it would be convenient, we
cannot require that every column have the endpoint of a probe.
When a simplicial non-probe has as its neighbors the
intersection of consecutive cliques of $G[P]$ in the model, its
column cannot contain the endpoint of a probe in the model.

\begin{lemma}\label{lem:noSubsets}
In a normal model, no column is a subset of any other.
\end{lemma}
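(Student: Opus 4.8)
The plan is to argue by contradiction: suppose $c$ and $c'$ are columns of a normal model $M$ with $S(c) \subseteq S(c')$. The first observation is that since the model is consecutive-ones ordered, $c$ and $c'$ may be assumed to be consecutive columns, or at least that there is a pair of consecutive columns with the containment property — in fact, if $S(c) \subseteq S(c')$ then every column between $c$ and $c'$ (there must be such a column if they are not adjacent) has a row that is 1 at both $c$ and $c'$, hence is 1 throughout the block, so that column also contains $S(c)$; replacing $c'$ by the column adjacent to $c$ on the side toward $c'$ preserves the hypothesis. So it suffices to derive a contradiction when $c$ and $c'$ are consecutive.

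Next I would invoke the definition of \emph{minimal} (Definition~\ref{def:normal}): two consecutive columns can be merged, and hence cannot both appear in a normal model, if one is a subset of the other. This is stated explicitly in the discussion preceding Definition~\ref{def:normal}: ``Two such columns can be merged if one is a subset of the other.'' Thus a normal model, being minimal, cannot contain two consecutive columns one of which is a subset of the other. Combined with the reduction of the previous paragraph, this immediately yields the contradiction and proves the lemma.

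The one point that needs a little care — and the place I expect the only real friction — is justifying the reduction to consecutive columns cleanly, i.e. making precise that ``merging'' $c$ into its neighbor toward $c'$ does not change the represented probe interval graph, so that the minimality hypothesis genuinely applies. The merge operation in the paper is defined only for consecutive columns, so one must check that whenever $S(c)\subseteq S(c')$ with an intervening column $b$, one has $S(c)\subseteq S(b)$ as well (which follows from consecutiveness of ones in each row: any row with a $1$ in both $c$ and $c'$ has a $1$ in every column between them), and then apply the argument to the consecutive pair $\{c,b\}$. An even shorter route, avoiding the reduction entirely, is to appeal to Lemma~\ref{lem:cliquesUnique} and Lemma~\ref{lem:C1C2C3}: every column of a normal model is a clique, semi-clique, or simplicial column, each of which contains a proper or degenerate endpoint of a specific vertex, and one can check directly that no such column can be contained in another without violating tautness of that vertex. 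But the merge-based argument is cleaner, so I would present that, spending a sentence or two on the consecutive-columns reduction and then closing with the appeal to minimality.
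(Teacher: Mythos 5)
Your proposal is correct and follows essentially the same route as the paper: the paper also reduces to the column $c''$ adjacent to $c$ on the side toward $c'$, observes $S(c)\subseteq S(c'')$ by consecutiveness of 1's, and concludes that $c$ and $c''$ can be merged, contradicting minimality. The extra care you flag about justifying the merge is already covered by the paper's explicit remark, preceding Definition~\ref{def:normal}, that two consecutive columns can be merged if one is a subset of the other.
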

\begin{proof}
Suppose a column $c$ is a subset of a column $c'$.  Without loss
of generality, suppose $c'$ is to the left of $c$.  By consecutiveness of 1's,
if $c''$ is the adjacent column to the left of $c$
$S(c) \subseteq S(c'')$.  Thus, $c$ can be merged with $c''$ without changing
the represented graph.  Since they are consecutive, this does not affect
consecutiveness of 1's in the model.
\end{proof}

We call a column $c$ that is not a clique column a {\em non-clique column}.

\begin{lemma}\label{lem:nextProbes}
In a normal model $M$, the probe set in every
non-clique column is a proper subset of either the probe set
of the next column to its left or of the probe set of the next column to its right.
\end{lemma}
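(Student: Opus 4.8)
The plan is to argue by analyzing the endpoints forced into the non-clique column $c$ by normality, and then to show that a probe in $c$ whose interval continues past $c$ in the direction of the ``dominating'' neighbor remains present in that neighbor, while a probe whose interval ends at $c$ must be absorbed on the other side. First I would invoke Lemma~\ref{lem:C1C2C3}: since $c$ is a non-clique column, it is either a semi-clique column or a simplicial column. Consider the semi-clique case; say $c$ is a left semi-clique column, so it contains the proper left endpoint of some non-probe in $N\setminus N_S$ and the proper right endpoint of some probe, no left endpoint of a probe, and no right endpoint of a non-probe. Let $c^-$ and $c^+$ be the columns immediately to the left and right of $c$ (at least one exists since the model has more than one column whenever $c$ is non-clique and the graph is connected, by Lemma~\ref{lem:nonemptyProbes} and tautness — and if $c$ is at the very end of the model, then every row with a $1$ in $c$ has its interval ending at $c$, which quickly forces $c$ to be a clique column or contradicts tautness, so I would dispose of that boundary case first).

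Next I would look at which probes lie in $c$. Every probe $p$ with $p\in S(c)$ has an interval that is a block of consecutive columns; since $c$ has no left endpoint of a probe, for each such $p$ either $p\in S(c^-)$ (its interval extends to the left) or $c$ is $p$'s proper right endpoint. Because $c$ is a left semi-clique column it does contain at least one proper right endpoint of a probe, so the set $A$ of probes ending at $c$ is nonempty and the set $B = S(c)\cap P$ of all probes in $c$ contains $A$. I claim $S(c)\cap P \subsetneq S(c^-)\cap P$. For the inclusion $S(c)\cap P \subseteq S(c^-)\cap P$: take $p\in S(c)\cap P$; if its interval extended only to the right of $c$, then $c$ would be $p$'s proper left endpoint, contradicting that $c$ has no left endpoint of a probe, so $p\in S(c^-)$. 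Properness of the inclusion: $c^-$ must contain a probe not in $c$. Here I would use tautness of the non-probe $x$ whose proper left endpoint is $c$: the endpoint at $c$ of $x$ is taut, meaning $c^-$ contains some probe nonadjacent... — more directly, apply Lemma~\ref{lem:noSubsets}, which says no column is a subset of any other, so $S(c^-)\not\subseteq S(c)$ and $S(c)\not\subseteq S(c^-)$; since by the argument just given $S(c)\cap P\subseteq S(c^-)$, and any non-probe of $S(c)$ not in $S(c^-)$ would be one whose left endpoint is $c$ (fine) while any non-probe of $S(c^-)$ not in $S(c)$ has a right endpoint — but $c$ has no right endpoint of a non-probe, so such a non-probe's interval would have to re-enter, impossible by consecutiveness. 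Tracking this carefully shows the symmetric difference $S(c^-)\triangle S(c)$ must contain a probe, and it cannot be a probe of $S(c)\setminus S(c^-)$ (we showed there are none among probes), so it is a probe of $S(c^-)\setminus S(c)$, giving the proper inclusion. The right semi-clique case is symmetric with $c^-$ replaced by $c^+$.

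Finally, the simplicial-column case: $c$ contains no endpoint of a non-simplicial non-probe, is not a clique column, and contains a simplicial non-probe; by Lemma~\ref{lem:C1C2C3}'s proof the endpoints of probes in $c$, if any, together with the absence of a probe clique, force $c$ to lie strictly between two cliques of $G[P]$, so that $S(c)\cap P$ is contained in $S(c^-)\cap P\cap S(c^+)\cap P$; properness then follows again from Lemma~\ref{lem:noSubsets}. I expect the main obstacle to be the careful bookkeeping in the semi-clique case — specifically, pinning down exactly which side receives the proper superset and ruling out, via consecutiveness of $1$'s together with tautness and Lemma~\ref{lem:noSubsets}, the possibility that the only ``new'' elements on the candidate side are non-probes rather than a probe. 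Once the endpoint structure guaranteed by the definitions of the column types is combined with ``no column is a subset of another,'' the conclusion drops out, but making that combination airtight for every sub-case (left vs.\ right semi-clique, simplicial, and the model-boundary degeneracies) is where the work lies.
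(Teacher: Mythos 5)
Your inclusion step is sound and is essentially the paper's: since a non-clique column $c$ lacks one type of probe endpoint, every probe in $c$ persists into the adjacent column $c'$ on the appropriate side, so $S(c)\cap P\subseteq S(c')\cap P$. (The paper gets this in one line from the fact that a non-clique column cannot contain both left and right endpoints of probes, without casing on semi-clique versus simplicial via Lemma~\ref{lem:C1C2C3}.) The genuine gap is in your properness step. Lemma~\ref{lem:noSubsets} gives you a \emph{vertex} in $S(c')\setminus S(c)$, not a probe, and your attempt to rule out that this witness is a non-probe fails: you invoke the condition that $c$ contains no right endpoint of a non-probe, but the witness would be a non-probe whose right endpoint lies at $c'$ (the column to the left of $c$), about which that condition says nothing, and no re-entry of an interval is forced. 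Nothing in the definition of a left semi-clique column prevents the preceding column from containing the right endpoint of a non-probe, so for all your argument shows the symmetric difference $S(c')\triangle S(c)$ could consist entirely of non-probes, and the proper inclusion of \emph{probe sets} does not follow.

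The missing idea is the one the paper leads with: two consecutive columns with \emph{equal probe sets} can be merged (placing non-probes in a common column does not represent them as adjacent), so minimality of a normal model forbids consecutive columns from having the same probe set; combined with your inclusion this immediately yields properness. This is a statement about probe sets and is, for this purpose, strictly stronger than Lemma~\ref{lem:noSubsets}, which concerns whole columns. A secondary slip: in the simplicial case you claim $S(c)\cap P$ is contained in the probe sets of \emph{both} neighbors, but a simplicial column may contain the proper right endpoint of a probe (its definition only excludes probe cliques and endpoints of non-simplicial non-probes), in which case the containment holds on one side only; fortunately one side is all the lemma needs.
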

\begin{proof}
If $c_1$ and $c_2$ are consecutive columns with equal probe sets,
they can be merged without affecting consecutiveness of 1's or
the represented graph.  No consecutive columns have the same probe set.

Let $c$ be a non-clique column.  A non-clique column cannot contain both left and right endpoints
of probes.  Without loss of generality, suppose that it does not contain
the right endpoint of a probe.  Then the set of probes in $c$ is a subset of
the set of probes in the column to its left.  Since consecutive columns cannot
have equal probe sets, the probes in $c$ must be a proper subset of the probes
in the column to its left.
\end{proof}

As we mentioned above, the matrix clique of an interval graph has $O(n)$ columns
and $O(m + n)$ 1's. It is not obvious that a model of a probe interval graph maintains this
property, since there might be $\Theta(n^2)$ adjacencies among non-probes which are
realized by the model but are not represented by edges in the graph. The next lemma
shows however that for normal models this property holds.

\begin{lemma}~\label{lem:boundedOnes}
If $M$ is a normal model of a probe interval graph $G$,
then $M$ has at most $n$ columns and $O(n+m)$ 1's.
\end{lemma}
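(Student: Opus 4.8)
The plan is to bound the number of columns first, and then use that bound together with the structure of endpoints to bound the number of 1's. For the column count: every column of a normal model contains an endpoint of some vertex (otherwise it could be merged with a neighbor, contradicting minimality; this is essentially the reasoning in the proof of \ref{lem:C1C2C3}). A vertex occupying more than one column contributes a left endpoint in one column and a right endpoint in another; a vertex occupying exactly one column contributes a degenerate endpoint to that single column. I would like to say that distinct columns are "charged" to distinct endpoints, but a single column may be the endpoint of several vertices, so a direct injection is too crude. Instead I would argue: walk through the columns left to right, and charge each column to a vertex whose interval \emph{begins} at that column (a proper or degenerate left endpoint). The key claim is that every column is the left endpoint of at least one vertex. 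If column $c$ had no left endpoint at all, then by consecutiveness of 1's, every row with a 1 in $c$ also has a 1 in the column $c''$ immediately to its left, so $S(c) \subseteq S(c'')$ and $c$ could be merged with $c''$ (this is exactly \ref{lem:noSubsets} / the merging argument), contradicting minimality --- unless $c$ is the leftmost column, but the leftmost column certainly contains the left endpoints of all vertices with a 1 in it. So every column is the ``birth'' column of at least one vertex, and distinct columns have distinct sets of vertices born there (a vertex is born exactly once), giving at most $n$ columns.

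For the bound on 1's, I would split the vertices by type using the machinery already set up. For a probe $p$, the columns containing $p$ are exactly the columns containing a clique of $G[P]$ that contains $p$ --- wait, more carefully: the columns containing $p$ form a consecutive block, and by \ref{lem:cliquesUnique} each clique of $G[P]$ sits in exactly one column; but a probe can also lie in semi-clique and simplicial columns. Rather than chase that, I would bound the 1's of $p$ by the number of columns it meets, and observe that since each column $p$ meets contains $p$ together with its other occupants, and $p$'s interval is determined by its neighborhood, the total over all probes of the number of columns met is $O(n+m)$: indeed, consider the clique matrix $M_K$ of $G[P]$, which has $O(n+m)$ ones since $G[P]$ is an interval graph; every clique column of $M$ restricted to $P$ corresponds to a column of some consecutive-ones ordering of $M_K$, and the semi-clique and simplicial columns that $p$ meets are sandwiched between clique columns $p$ meets, so $p$'s interval crosses at most a constant number of extra columns per ``gap'' --- here I need the structural fact (from \ref{lem:nextProbes} and the definition of semi-clique/simplicial columns) that between two consecutive clique columns there is a bounded... actually this is the delicate point.

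So the main obstacle is controlling the number of 1's contributed by the non-clique (semi-clique and simplicial) columns, since a priori a probe could pass through many of them. The clean way around it: count 1's column by column rather than row by row. In each column $c$, the probes form a clique-like set; if $c$ is a clique column, the probes in $c$ form (a superset of) a clique of $G[P]$, and summing clique sizes over clique columns is $O(n+m)$ because each such column contributes, within $M_K$'s ordering, at most the size of a maximal clique, and the clique matrix has $O(n+m)$ ones. For a semi-clique column, by definition it carries a proper right endpoint of a probe and a proper left endpoint of a non-probe (or the mirror image), and by \ref{lem:nextProbes} its probe set is a \emph{proper} subset of an adjacent column's probe set; I would charge the probes of a semi-clique column to the 1's they contribute in the neighboring clique column, using that the semi-clique column's probe set embeds into it. For simplicial columns, the non-probe occupant $x$ is simplicial with $N(x)$ a clique of $G[P]$, so the probes in that column form a subclique, again chargeable against $M_K$. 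Finally the non-probe 1's: each non-probe in $N_S$ contributes exactly one 1 (\ref{lem:simplicialPoints}), and each non-probe $x$ in $N\setminus N_S$ has its interval spanning a set of columns all of which contain a neighbor of $x$ in $P$, so the 1's of $x$ are at most (number of columns met), which telescopes against the probe structure as above; more simply, $x$'s 1's are bounded by the number of edges incident to $x$ plus a constant, because its interval is the minimal one realizing $N(x)$ (tautness) and hence crosses only columns ``justified'' by its own neighbors or by the endpoints forcing the column structure. Assembling these pieces gives $O(n+m)$ total 1's, and combined with the column bound of $n$, the lemma follows. I expect the write-up to route the entire 1's-count through the $O(n+m)$ bound on $|R(M_K)| + |C(M_K)| + (\text{ones of }M_K)$, pushing all the non-clique columns' contributions onto their adjacent clique columns via the proper-subset relations of \ref{lem:nextProbes} and \ref{lem:noSubsets}.
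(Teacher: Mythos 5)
Your column bound is fine and is essentially the paper's argument (the paper charges both a left and a right endpoint to each column via Lemma~\ref{lem:C1C2C3} and divides $2n$ by $2$; your ``every column is the birth column of some vertex'' version works equally well). The count of $1$'s, however, has a genuine gap, and you flag the right place yourself (``this is the delicate point'') without closing it. The specific charging scheme you propose for semi-clique columns --- charge each probe's $1$ in a semi-clique column to that probe's $1$ in the neighboring clique column --- overcounts: by Lemma~\ref{lem:descendingAscending}, between two consecutive clique columns there can be a chain of up to $|K|$ semi-clique columns whose probe sets form a strictly descending chain of subsets of the clique $K$, and a probe lying in all of them has its single $1$ in the clique column charged $\Theta(|K|)$ times. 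Summing gives only something like $\sum_K |K|^2$, which is not $O(n+m)$. Similarly, ``chargeable against $M_K$'' for simplicial columns is not a charging scheme, because many distinct simplicial columns can have probe sets contained in the same clique of $G[P]$. Finally, the claim that a non-probe $x\in N\setminus N_S$ contributes only $O(|N(x)|)$ ones is asserted but not proved; tautness only forbids contracting the two endpoints of $x$'s row, it does not bound the number of columns the row crosses, and the columns $x$ crosses may contain endpoints only of simplicial non-probes, which are not neighbors of $x$.

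The missing idea, which is what the paper's proof runs on, is to charge an entire column to a single probe whose \emph{endpoint} certifies it: if a column contains an endpoint of a probe $p$, then every vertex with a $1$ in that column shares a column with the probe $p$ and is therefore a neighbor of $p$, so the column has at most $|N[p]|$ ones; since $p$ has only two endpoints, it absorbs this charge at most twice, giving $O(\sum_p |N[p]|)=O(n+m)$ over all such columns. The only columns left are simplicial columns with no probe endpoint; for these, Lemma~\ref{lem:nextProbes} supplies a probe $p_j$ with an endpoint in an adjacent column, and every occupant of the simplicial column from $V\setminus N_S$ also occupies that adjacent column (it has no endpoint in the simplicial column) and is hence a neighbor of $p_j$; each probe plays the role of $p_j$ only $O(1)$ times. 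The occupants from $N_S$ are charged to an arbitrary probe $q$ in the column, and since each simplicial non-probe lives in exactly one column (Lemma~\ref{lem:simplicialPoints}), $q$ is never charged twice for the same neighbor, so its total charge is at most $|N(q)|$. Your column-by-column framing is the right shape, but without the ``everything in an endpoint column of $p$ is a neighbor of $p$, and $p$ has only two endpoints'' observation, none of the proposed charges actually terminate in an $O(n+m)$ bound.
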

\begin{proof}
By Lemma~\ref{lem:C1C2C3}, every column contains
a left endpoint and a right endpoint.  There are $2n$ endpoints,
so the number of columns is at most $n$.

If a column has the endpoint of a probe $p$, charge the 1's
in the column to $p$.
The number of 1's in the column is 
bounded by the size of the closed neighborhood of $p$.  Over
all columns, each probe is
charged in at most two columns for the size of its closed
neighborhood, so the number of 1's in these columns is $O(n+m)$.

It remains to bound the number of 1's in simplicial columns.
Let $\{c_1, c_2, \ldots, c_k\}$ be the simplicial columns.
For each $j$ from 1 through $k$, let $p_j$ be a probe with an endpoint
in $c_j$, or if it has no endpoint of a probe, let $p_j$
be a probe with an endpoint in a column next to $c_j$.
By Lemma~\ref{lem:nextProbes}, $p_j$ exists.
Since $c_j$ contains no endpoint of a vertex in
$N_1 \cup N_2$, by definition,
every member of $V \setminus N_S$ in $c_j$
is a neighbor of $p_j$.  The number of 1's in $c_j[V \setminus N_S]$ 
is at most $|N[p_j]|$.  Charge these 1's to $p_j$.
Charge the 1's in $c_j[N_S]$
to an arbitrary probe $q$ in $c_j$; the rows where these 1's
occur are all neighbors of $q$.

Each probe is charged $O(1)$ times in the role of $p_j$, for 
$|N[p_j]|$ 1's.
A probe could be charged many times in the role of $q$, but never
twice for the same neighbor, since they are all simplicial and occur in 
only one column, by Lemma~\ref{lem:simplicialPoints}. The total number
of these charges to $q$ is bounded by $|N(q)|$.
Summing these charges over all probes gives the $O(n+m)$ bound on
the number of 1's in columns $\{c_1, \ldots, c_k\}$.
\end{proof}

To derive more properties of normal models, we make use
of the following insight, which is due to Zhang~\cite{Zhang94} (see also~\cite{McMorris98}).
Let $E'$ be the edges of $G -  N_S$.
He defined the set $E^+ = \{xy \mid x, y \in N_1 \cup N_2$
and $N(x) \cap N(y)$ contains two nonadjacent
vertices.$\}$.  He then showed that an interval model
of $G^* = (V \setminus N_S, E' \cup E^+)$ is a probe interval model
of $G -  N_S$.  

The strategy of a step of the probe interval graph recognition algorithm of Uehara~\cite{U04}
is to construct $G^*$ in order to
find a model of $G - N_S$.  We cannot use that approach
because $G^*$ does not have $O(n+m)$ edges.  A simple example of this
is a graph with two nonadjacent vertices,
$p_1$ and $p_2$, and $n-2$ non-probes, each adjacent to $p_1$ and $p_2$.
The probe interval graph has $O(n)$ edges, but in
$G^*$, the $n-2$ non-probes form a complete subgraph, so $G^*$
has $\Theta(n^2)$ edges.

However, we can derive structural properties of normal models
from it by observing a normal model of $G -  N_S$ is an interval
model of a slight variation of $G^*$.  Therefore, even though this
graph does not have $O(n+m)$ edges, every clique matrix has $O(n+m)$ 1's.

\begin{definition}
Let $E'$ be the edges of $G - N_S$.
Let $E^{++} = \{xy \mid x,y \in N_1 \cup N_2$, and $N(x) \cap N(y)$
is either a clique of $G[P]$ or contains two nonadjacent
vertices$\}$.
Let $G^{**} = (V \setminus N_S, E' \cup E^{++})$.
\end{definition}

\begin{lemma}\label{lem:UInverse}
Every normal model $M$ of $G - N_S$ is a consecutive-ones 
ordered clique matrix of $G^{**}$.
\end{lemma}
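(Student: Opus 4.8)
The plan is to show a two-way equivalence: every clique of $G^{**}$ appears as a column of $M$, and every column of $M$ contains a clique of $G^{**}$ as a subset; since $M$ is a consecutive-ones ordered model whose intervals realize exactly the edges of $G^{**}$ (which is the content already recorded in the discussion of Zhang's construction, adapted to $E^{++}$ instead of $E^+$), this will identify $M$ as a consecutive-ones ordered clique matrix. First I would verify that $M$, interpreted as an interval model, represents precisely the graph $G^{**}$. The edges $E'$ of $G - N_S$ are represented because $M$ is a probe interval model of $G - N_S$. For the extra edges $E^{++}$: if $xy \in E^{++}$ with $x,y \in N_1 \cup N_2$ and $N(x) \cap N(y)$ either a clique of $G[P]$ or containing two nonadjacent vertices, I must argue that the rows of $x$ and $y$ in $M$ intersect. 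If they did not intersect, some column would lie strictly between their intervals; the probes in that column would be common neighbors of $x$ and $y$ forming a complete subgraph of $G[P]$ that is not a subset of a single clique shared in the right way — I would push this to contradict either the Helly property (a clique of $G[P]$ inside $N(x) \cap N(y)$ forces a common column by Lemma~\ref{lem:cliquesUnique}) or tautness/minimality (Lemmas~\ref{lem:nextProbes} and~\ref{lem:C1C2C3}). Conversely, if $x,y$ are both in $N_1 \cup N_2$, their rows intersect in $M$, and $N(x) \cap N(y)$ is neither a clique of $G[P]$ nor contains two nonadjacent vertices, then $N(x) \cap N(y)$ is a proper subset of a clique, and I would show the column witnessing $x \cap y$ can be handled so that no spurious edge is forced; the asymmetry is that a column with the common probe set but not a full clique would be merge-able or non-taut. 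The upshot is that the edge set realized by $M$ is exactly $E' \cup E^{++}$.

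Next I would establish the clique-matrix property itself. By Lemma~\ref{lem:C1C2C3}, every column of $M$ is a clique, semi-clique, or simplicial column, and by Lemma~\ref{lem:noSubsets} no column is contained in another, so the columns form an antichain; combined with the fact that the set in each column induces a complete subgraph of $G^{**}$ (probes in a column are pairwise adjacent in $G[P]$, hence in $G^{**}$; any two non-probes of $N_1 \cup N_2$ sharing a column have intersecting intervals, and by the edge analysis above that means they are adjacent in $G^{**}$, using that $M$ has no $N_S$-rows), each column is contained in a maximal clique of $G^{**}$. It remains to show each column is itself a \emph{maximal} clique, i.e.\ equals a clique of $G^{**}$, and that every clique of $G^{**}$ occurs as a column. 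For the first: a column $S(c)$ that is a proper subset of a clique $K$ of $G^{**}$ would have all of $K$ pairwise-intersecting in $M$, so by the Helly property all of $K$ shares a common point, hence (choosing the column at that point) $K$ is a subset of some column $c'$; but then $S(c) \subsetneq S(c')$ contradicts Lemma~\ref{lem:noSubsets}. For the second: given a clique $K$ of $G^{**}$, Helly gives a common column containing $K$, and by the first part that column equals $K$; distinctness of columns for distinct cliques follows again from the antichain property. This shows the set of columns of $M$ is exactly the set of cliques of $G^{**}$, so $M$ is a consecutive-ones ordered clique matrix of $G^{**}$.

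The main obstacle I anticipate is the edge-set analysis for the non-probes of $N_1 \cup N_2$ — specifically, pinning down exactly why the intervals of $x$ and $y$ intersect in $M$ precisely when $N(x)\cap N(y)$ is a clique of $G[P]$ or contains a nonadjacent pair, and not otherwise. The ``only if'' direction is delicate because it is where the definition of $E^{++}$ was engineered to match normal models rather than arbitrary models: I expect to need a careful case split on whether the column separating $x$ from $y$ (in the non-intersecting case) contains probe endpoints, invoking Lemma~\ref{lem:nextProbes} to track how probe sets shrink toward such a column, and then deriving that the common probe neighborhood would have to be a clique lying in a single column, contradicting that $x$ and $y$ were placed in distinct columns while being adjacent to all of it. Once that equivalence is nailed down, the clique-matrix bookkeeping via Lemmas~\ref{lem:C1C2C3},~\ref{lem:noSubsets}, and the Helly property is routine.
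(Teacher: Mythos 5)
Your high-level plan coincides with the paper's: show that $M$, read as an interval model, realizes exactly the edge set $E' \cup E^{++}$ of $G^{**}$, and then conclude that a normal model realizing $G^{**}$ is a consecutive-ones ordered clique matrix of it. Your second half (each column induces a complete subgraph of $G^{**}$, Helly plus Lemma~\ref{lem:noSubsets} force each column to equal a maximal clique and each maximal clique to appear as a column) is correct and is in fact more explicit than the paper, which compresses this into one sentence. The problem is the first half, which is the entire substance of the proof, and which you leave as a sketch built on a faulty picture. Your argument for the forward direction begins ``if they did not intersect, some column would lie strictly between their intervals; the probes in that column would be common neighbors of $x$ and $y$.'' Both clauses are wrong: two disjoint intervals can occupy adjacent columns with nothing between them, and a probe whose interval sits in a column outside both $x$ and $y$ is a neighbor of neither. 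The correct forward argument is direct and needs no contradiction: if $N(x)\cap N(y)$ is a clique $K$ of $G[P]$, then $\{x\}\cup K$ is pairwise intersecting, so by Helly and Lemma~\ref{lem:cliquesUnique} the interval of $x$ contains the unique column holding $K$, and likewise for $y$; if $N(x)\cap N(y)$ contains nonadjacent probes $p_1,p_2$ with $p_1$ to the left, then each of $x$ and $y$ must contain every column from the rightmost column of $p_1$ to the leftmost column of $p_2$, so they overlap.

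The reverse direction, which you explicitly defer as ``the main obstacle,'' is also misdirected: you describe it in terms of a column separating $x$ from $y$ in the non-intersecting case, but that direction concerns the case where $x$ and $y$ \emph{do} share a column and one must show $xy\in E^{++}$. The paper's argument is: if $N(x)\cap N(y)=\emptyset$ they cannot share a column because every column has a nonempty probe set (Lemma~\ref{lem:nonemptyProbes}); if $N(x)\cap N(y)$ is a complete subgraph properly contained in a clique $K$ and the intervals of $x$ and $y$ met in a consecutive set $Y$ of columns, then $Y$ cannot contain the unique column $c_K$ holding $K$ (else $K\subseteq N(x)\cap N(y)$), so $c_K$ lies to one side of $Y$; every probe in the extreme column of $Y$ on the far side lies in $N(x)\cap N(y)\subseteq K\subseteq S(c_K)$ and hence extends back toward $c_K$, so the endpoint of whichever of $x,y$ terminates there is not taut --- using Lemma~\ref{lem:simplicialPoints} to guarantee that vertex occupies more than one column. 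The tools are Lemmas~\ref{lem:nonemptyProbes} and~\ref{lem:simplicialPoints} together with tautness, not Lemma~\ref{lem:nextProbes} and a shrinking-probe-set analysis as you anticipate. As written, the proposal does not establish the edge equivalence, so the proof is incomplete at its core.
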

\begin{proof}
Let $M$ be a normal model of $G - N_S$.
Let $x$ and $y$ be members of  $N_1 \cup N_2$.

If $N(x) \cap N(y)$ is a clique $K$ of $G[P]$, their intervals
must intersect at the only column that contains $K$, by Lemma~\ref{lem:cliquesUnique}.

If $N(x) \cap N(y)$ contains two nonadjacent
vertices, $p_1$ and $p_2$, suppose without loss of generality
that $p_1$ lies to the left of $p_2$ in $M$.  Let $c$
be the rightmost column of $p_1$ and $C'$ be the leftmost
column of $p_2$.
Since they are both adjacent to $p_1$ and $p_2$, 
$x$ and $y$ must be contained in both $c$ and $c'$.

If $N(x) \cap N(y)$ is empty, then $x$ and $y$ contain
no column in common, by Lemma~\ref{lem:nonemptyProbes}.

Otherwise, $N(x) \cap N(y)$ is a complete subgraph that is not
a clique of $G[P]$.  Then $N(x) \cap N(y)$ is a proper subset
of a clique $K$.  The intersection of $x$ and $y$
is a consecutive set $Y$ of columns that do not contain $K$.
Suppose without loss of generality that the clique column
containing $K$ lies to the left of $Y$, and that the
right endpoint of $x$ is at the right endpoint of $Y$.  
By Lemma~\ref{lem:simplicialPoints}, $x$ 
is contained in more than one column of $M$.  
The right endpoint of $x$ is not taut, contradicting the
normality of the models, so this case cannot happen.

It follows that $x$ and $y$ are contained in a common column
if and only if $xy \in E^{++}$.  For any pair $\{u,v\}$
where at least one of $u$ and $v$ is a probe, $u$ and $v$
are contained in a common column if and only if
$uv \in E'$, by the definition of a probe interval model.
$M$ is an interval model of $G^{**}$, and so it is
a consecutive-ones ordered clique matrix of $G^{**}$ as
required.
\end{proof}

\begin{lemma}\label{lem:U}
When interpreted as a probe interval model, every
consecutive-ones ordered clique matrix of $G^{**}$ is 
a normal model of $G - N_S$
\end{lemma}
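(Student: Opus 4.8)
The statement is the converse of Lemma~\ref{lem:UInverse}, so the plan is to take an arbitrary consecutive-ones ordered clique matrix $M$ of $G^{**}$ and verify the three things it must satisfy: that, reinterpreted with probe set $P$, it represents exactly $G - N_S$; that it is taut; and that it is minimal. Since $M$ is a clique matrix of $G^{**}$, that graph is an interval graph and $M$ is an interval model of it, so two vertices share a column of $M$ if and only if the pair lies in $E' \cup E^{++}$. Every member of $E^{++}$ joins two vertices of $N_1 \cup N_2$, so for a pair in which at least one vertex is a probe, sharing a column is equivalent to membership in $E'$, i.e. to adjacency in $G - N_S$, and two non-probes are non-adjacent both in $G - N_S$ and in the probe interval reading of $M$. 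Hence $M$ is a probe interval model of $G - N_S$, which is therefore a probe interval graph; by Lemmas~\ref{lem:normalModel} and~\ref{lem:UInverse} it has a normal model $M_0$ that is itself a clique matrix of $G^{**}$. Consequently every consecutive-ones ordered clique matrix of $G^{**}$, in particular $M$, has the same column set, namely the maximal cliques of $G^{**}$; no column of $M$ is contained in another; and, $G - N_S$ being connected, Lemma~\ref{lem:nonemptyProbes} applied to $M_0$ shows that every such column has a nonempty probe set.

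Two of the remaining requirements are then easy. A consecutive-ones ordered clique matrix, read as an all-probes interval model of $G^{**}$, is a normal interval model, hence taut: every column, being a maximal clique, is an endpoint of some row, and contracting an endpoint of a row changes the union of the columns that contain it. For a probe $p$, its neighborhood in the probe interval reading of $M$ is exactly this union of columns, so the row of $p$ is taut in that reading as well. And since no column of $M$ is contained in another, two consecutive columns of $M$ can be merged only if they have the same probe set, so minimality reduces to ruling that possibility out.

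The real content is the non-probe case of tautness and the absence of two consecutive columns with equal probe sets, and both rest on the clause in the definition of $E^{++}$ that makes $N(x)\cap N(y)$ being a clique of $G[P]$ sufficient for $xy \in E^{++}$ --- the very point where $E^{++}$ strengthens Zhang's $E^+$. For a non-probe row $v$ with leftmost column $c$ and next column $c'$, suppose contracting $v$'s left endpoint removed no probe from its neighborhood; then $c \cap P \subseteq c' \cap P$, and since $c$ and $c'$ are distinct maximal cliques of $G^{**}$ the set $c \setminus c'$ contains a non-probe $w$, whose interval lies at or to the left of $c$. Because $v$ and $w$ lie together in the clique $c$, $vw \in E^{++}$, so $N(v)\cap N(w)$ is a clique of $G[P]$ or contains two nonadjacent probes, while the nonempty set $c \cap P$ is contained in $N(v)\cap N(w)$. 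Two nonadjacent probes of $N(v)\cap N(w)$, or a probe of a maximal clique $K \supseteq c \cap P$ of $G[P]$ that is not already in $c \cap P$, would --- by consecutiveness of $1$'s in $M$ and the placement of $w$'s interval --- be forced to occur in the column $c$, which is impossible since $c$ is a clique of $G^{**}$ with probe set $c \cap P$; the only surviving case, $c \cap P = K = c' \cap P$, is then excluded by taking a non-probe of $c' \setminus c$, which is non-adjacent in $G^{**}$ to $w$ yet has $K$ in the intersection of its $G$-neighborhood with $N(w)$, contradicting $E^{++}$ again. The same device, applied to $w$ and a non-probe of $c' \setminus c$, shows that two consecutive columns cannot share a probe set. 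With tautness and minimality established, $M$ meets Definition~\ref{def:normal} and is a normal model of $G - N_S$. I expect this last step to be the main obstacle: it is the only place where the precise form of $E^{++}$ (as opposed to Zhang's weaker $E^+$) is actually used, and the bookkeeping about which column a ``forced'' probe must land in has to be carried out with some care.
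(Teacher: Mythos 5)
Your direct verification diverges completely from the paper's proof, which is a short extremal argument: all consecutive-ones ordered clique matrices of $G^{**}$ are column permutations of one another and hence have the same number of $1$'s; any ordering that failed to be normal could be strictly reduced, by contractions and merges, to a normal model of $G - N_S$, which by Lemma~\ref{lem:UInverse} would again be a consecutive-ones ordered clique matrix of $G^{**}$ with the same number of $1$'s --- a contradiction. Most of your argument is nevertheless sound: the identification of the represented graph, the tautness of probe rows, and in particular the tautness of non-probe rows, where you correctly use the pair $(v,w)$ sharing exactly the column $c$ to force $N(v)\cap N(w) = c\cap P$ to be a \emph{maximal} clique $K$ of $G[P]$, after which the pair $(w,u)$ with $N(w)\cap N(u)=K$ contradicts $uw \notin E(G^{**})$.

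The gap is in the minimality step. When you only assume that two adjacent columns $c,c'$ satisfy $c\cap P = c'\cap P = S$, the non-probes $w \in c\setminus c'$ and $u\in c'\setminus c$ give $N(u)\cap N(w) = S$ and $uw\notin E(G^{**})$, but this contradicts the definition of $E^{++}$ only if $S$ is a clique of $G[P]$ (i.e., a \emph{maximal} complete subgraph) or contains two nonadjacent probes. The second alternative is indeed excluded, since both probes would lie in the single column $c$; but nothing in your argument excludes the possibility that $S$ is a nonempty, non-maximal complete subgraph of $G[P]$, in which case $uw\notin E^{++}$ is perfectly consistent and no contradiction arises. In the tautness argument the maximality of $c\cap P$ was supplied by the extra vertex $v$ whose interval begins at $c$; in the minimality setting there is no such vertex, and ``the same device'' does not apply. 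Ruling out this remaining case appears to need a further idea --- for instance, locating a column that contains a maximal clique $K\supsetneq S$ of $G[P]$ and analyzing on which side of the pair $c,c'$ it can lie, or simply falling back on the paper's counting argument --- so as written the proof is incomplete at exactly the point you flagged as the main obstacle.
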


\begin{proof}
Let $M$ be a consecutive-ones ordering of a clique matrix of $G^{**}$
that is a normal model of $G - N_S$.  $M$ exists by Lemma~\ref{lem:normalModel}
and Lemma~\ref{lem:UInverse}.
Let $M'$ be a different consecutive-ones ordering of $M$.
Suppose that $M'$ is not a normal
model of $G - N_S$.  Then it can be turned into a normal
model $M''$ by a series of contractions of endpoints
and merges of columns, as in the proof of Lemma~\ref{lem:normalModel}.
Each of these operations reduces the number of 1's in the matrix.
Therefore $M''$ has fewer 1's than $M'$, hence
fewer 1's than $M$.  Since $M''$ is a normal model of 
$G - N_S$, it is a
consecutive-ones ordering of $M$ by Lemma~\ref{lem:UInverse}, but
$M''$ and $M$ have different numbers of 1's, a contradiction.
\end{proof}

The following is immediate from Lemma~\ref{lem:UInverse}
and~\ref{lem:U}.

\begin{theorem}\label{thm:allModels}
The normal models of $G - N_S$ are the consecutive-ones
orderings of a single matrix.
\end{theorem}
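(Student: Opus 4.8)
The plan is to combine Lemmas~\ref{lem:UInverse} and~\ref{lem:U} into an exact characterization of the normal models of $G - N_S$, and then invoke the convention that the clique matrices of a fixed graph form a single equivalence class under column permutation.

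First I would observe that the two lemmas, taken together, say that a matrix $M$ (with the fixed row order induced by the vertex numbering, in which the probes precede the non-probes) is a normal model of $G - N_S$ if and only if it is a consecutive-ones ordered clique matrix of $G^{**}$: Lemma~\ref{lem:UInverse} gives the forward direction and Lemma~\ref{lem:U} gives the converse. I would also note that Lemma~\ref{lem:normalModel} guarantees at least one normal model exists, so this class is nonempty. Next I would apply the convention of Section~\ref{sect:classes}: two clique matrices are equal precisely when they are equal as matrices, so any two clique matrices of $G^{**}$ differ only by a permutation of their columns, since they share the fixed row order and the same multiset of columns, namely the cliques of $G^{**}$. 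Fix one such clique matrix $M_0$ of $G^{**}$. Then the consecutive-ones ordered clique matrices of $G^{**}$ are exactly the column permutations of $M_0$ that have the consecutive-ones property, i.e., exactly the consecutive-ones orderings of $M_0$. Chaining this with the characterization from the first step yields that the normal models of $G - N_S$ are precisely the consecutive-ones orderings of the single matrix $M_0$, as claimed.

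Since the statement is flagged as immediate from the two preceding lemmas, I do not expect a real obstacle here; the only point requiring a moment's care is making ``a single matrix'' well defined, which rests on the fixed row order induced by the vertex numbering and on the chosen convention that clique matrices are individuated only up to permutation of columns. A routine consequence worth recording after the proof is that the PQ tree of $M_0$ then represents all normal models of $G - N_S$.
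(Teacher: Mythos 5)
Your proposal is correct and takes essentially the same route as the paper, which simply declares the theorem immediate from Lemmas~\ref{lem:UInverse} and~\ref{lem:U}: the two lemmas characterize the normal models of $G - N_S$ as exactly the consecutive-ones ordered clique matrices of $G^{**}$, and these are the consecutive-ones orderings of any one fixed clique matrix of $G^{**}$. Your added care about the fixed row order and the column-permutation convention is a reasonable elaboration of the details the paper leaves implicit.
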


It follows that the set of all normal models of $G - N_S$
is given by the PQ tree of any one of these models.

According to the following theorem, ignoring members in $N_S$, the collection of clique columns and
the collection of semi-clique columns are each invariant 
over all normal models of $G$, and no two clique or semi-clique columns are
equal.

\begin{theorem}\label{thm:invariantCols}
Let $M$ and $M'$ be normal models of $G$.  No two clique or semi-clique columns of 
$M[V \setminus N_S]$ are equal.  Let ${\cal C}$ and ${\cal C'}$ be the collection of sets of vertices 
represented by clique and semi-clique columns of $M[V \setminus N_S]$ and $M'[V \setminus N_S]$, 
respectively.  Then ${\cal C}$ = ${\cal C'}$.
\end{theorem}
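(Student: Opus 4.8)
The plan is to reduce Theorem~\ref{thm:invariantCols} to the corresponding facts about $G - N_S$, which are already available. First I would observe that $N_S$ is defined purely in terms of $G$ and the partition (a non-probe $x$ is in $N_S$ iff it is simplicial in $G$), so $N_S$ is the same set for every model. The key reduction is: if $M$ is a normal model of $G$, then $M[V \setminus N_S]$ is a model of $G - N_S$ that has $O(n+m)$ ones, and I claim its clique columns and semi-clique columns coincide (as column sets of vertices) with those of some normal model of $G - N_S$. This is essentially the content of the matrices $M_N$ and $M_G$ in the overview: the clique and semi-clique columns of a normal model of $G$, restricted to $V \setminus N_S$, are exactly the columns of $M_N$, a normal model of $G - N_S$. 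By Theorem~\ref{thm:allModels}, all normal models of $G - N_S$ are consecutive-ones orderings of a single matrix, so their multisets of columns (and hence, by Lemma~\ref{lem:noSubsets} and the no-equal-columns properties, their sets of columns) are identical. That gives invariance of $\mathcal{C} \cup \mathcal{C'}$-type statements once the reduction is justified.

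So the real work is the reduction. First I would show $M[V \setminus N_S]$, for $M$ a normal model of $G$, can be converted by a sequence of contractions and column merges (as in Lemma~\ref{lem:normalModel}) into a normal model $M^\dagger$ of $G - N_S$, and that these operations \emph{do not touch} clique or semi-clique columns of $M[V\setminus N_S]$. The point is that removing the rows of $N_S$ can only (i) leave a simplicial column of $M$ empty of endpoints, so that it merges away, or (ii) turn a clique column or semi-clique column into something with fewer ones only by dropping degenerate $N_S$-endpoints, which does not change its probe set and does not change the vertex set it represents restricted to $V\setminus N_S$ — so a clique column of $M$ restricted to $V\setminus N_S$ is still a clique column, and similarly for semi-clique columns, because the defining conditions (Definition before Lemma~\ref{lem:C1C2C3}) are stated in terms of endpoints of members of $P$ and of $N \setminus N_S = N_1 \cup N_2$, none of which are affected by deleting $N_S$-rows. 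Conversely, a normal model of $G-N_S$ is a normal model $M_N$ whose columns are exactly the clique and semi-clique columns of $M[V\setminus N_S]$, by Lemma~\ref{lem:C1C2C3} applied to $M_N$ together with the fact (from the overview / Lemma~\ref{lem:U} and Lemma~\ref{lem:UInverse}) that $M_N$ has no simplicial columns once $N_S$ is removed.

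Next I would handle the "no two are equal" claim. Suppose two clique or semi-clique columns of $M[V \setminus N_S]$ are equal as vertex sets. Two clique columns being equal contradicts Lemma~\ref{lem:cliquesUnique} (each clique of $G[P]$ sits in exactly one column), or, more directly, equal consecutive-or-not columns that are equal as sets would be mergeable — one is a subset of the other — contradicting Lemma~\ref{lem:noSubsets} applied to the normal model of $G - N_S$ obtained above. For a clique column equal to a semi-clique column, note a clique column contains both a left and a right endpoint of probes while a semi-clique column contains left endpoints of probes only or right endpoints of probes only, so they cannot be equal. For two semi-clique columns equal as sets, again invoke Lemma~\ref{lem:noSubsets}. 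Finally, for $\mathcal{C} = \mathcal{C'}$: both $M[V\setminus N_S]$ and $M'[V\setminus N_S]$ give rise, by the contraction/merge argument, to normal models of $G - N_S$ with the \emph{same} clique-and-semi-clique column sets $\mathcal{C}$ and $\mathcal{C'}$ respectively; by Theorem~\ref{thm:allModels} these two normal models of $G - N_S$ are column-permutations of one another, hence have the same set of columns, hence $\mathcal{C} = \mathcal{C'}$.

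The main obstacle I expect is making rigorous the claim that deleting the $N_S$-rows and then normalizing does not merge or destroy a clique or semi-clique column of the original model — i.e., that the contractions and merges needed to renormalize $M[V\setminus N_S]$ are confined to (former) simplicial columns. One has to argue that a clique column of $M$ is still taut after deleting $N_S$ (its probe endpoints survive) and that a semi-clique column of $M$ still has a proper probe endpoint and a proper $N_1\cup N_2$ endpoint afterwards (both by definition unaffected by $N_S$), so neither can be contracted away or merged; and that any newly-mergeable adjacent pair must involve a former simplicial column that has lost all its endpoints. Handling the corner case where two \emph{adjacent} simplicial columns collapse, and tracking that this never pulls a clique/semi-clique column into a merge, is the fiddly part, but it follows the template of Lemma~\ref{lem:C1C2C3} and Lemma~\ref{lem:noSubsets} closely.
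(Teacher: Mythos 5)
Your proposal is correct and follows essentially the same route as the paper: the paper likewise passes to $M_G[V\setminus N_S]$ restricted to its non-simplicial columns, verifies directly that this is a taut and minimal (hence normal) model of $G-N_S$, and then concludes via Lemma~\ref{lem:noSubsets} and Theorem~\ref{thm:allModels}. The only difference is cosmetic: where you run the generic contraction/merge normalization and argue it is confined to simplicial columns, the paper deletes the simplicial columns outright and checks tautness of the surviving endpoints by exhibiting witnesses in $V\setminus N_S$ (the opposing probe or $N_1\cup N_2$ endpoint in the same clique or semi-clique column), which is the precise content behind your parenthetical ``its probe endpoints survive.''
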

\begin{proof}
In a normal model $M_G$ of $G$, let $X$ be the set of non-simplicial
columns of $M_G$.  Let $M'_N = M_G[V \setminus N_S][X]$.
Two probes that are contained in a simplicial column $c$ of $M_G$ are contained in a neighboring
column $c'$, by Lemma~\ref{lem:nextProbes}.  They remain adjacent
when $c$ is deleted.  Since a simplicial column
contains no endpoints of non-probes of $V \setminus N_S$, non-probes that are contained
in $c$ are contained in $c'$.  They remain adjacent to the probes
in $c$ when $c$ is deleted.
Deleting a simplicial columns from $M_G$ does not change the graph
represented by $M_G[V \setminus N_S]$.
Therefore, $M'_N$ is a model of $G - N_S$.  

If $d$ is a clique column of $M_G$, then it contains left and right endpoints
of probes, and this remains true in $M'_N$.  Every endpoint
in the column is taut.  

If $d$ is a semi-clique
column, assume without loss of generality that it contains left
proper endpoints of non-probes in $M_G$.  These non-probes are
members of $N_1 \cup N_2$ by Lemma~\ref{lem:simplicialPoints},
and $d$ contains left proper endpoints of non-probes in $M'_N$.
It contains no right proper endpoints of non-probes in $M_G$ by
the properties of semi-clique columns.
Since no simplicial column contains a proper endpoint of a non-probe,
$d$ contains no right endpoints of non-probes in $M'_N$.
It contains right endpoints of probes in $M_G$, and this
remains true in $M'_N$.  Every endpoint in $c$ is taut in
$M'_N$.  

Suppose two columns $c_1$ and $c_2$ of $M'_N$ can be merged.  
Then they are consecutive and have the same probe set $Y$. 
Then since no clique column was deleted from $M_G$ to
obtain $M'_N$,  $Y$ is the probe set in every column 
between $c_1$ and $c_2$ in $M_G$, and all columns in
this interval can be merged in $M_G$, contradicting
the normality of $M_G$.  

$M'_N$ is taut and minimal, so it is a normal model.
No two columns of $M'_N$ are equal, by Lemma~\ref{lem:noSubsets}.
The lemma now follows from Theorem~\ref{thm:allModels}.
\end{proof}

Unfortunately, it is not the case that every 
consecutive-ones ordering of a normal model of $G$ is a normal model if $G$, due to the
presence of simplicial non-probes.  Figure~\ref{fig:naughtySimplicials} gives an example.
There does not seem to be a single PQ tree for representing the possible models
of a probe interval graph once simplicial non-probes are introduced.  

\begin{figure}
\centerline{\includegraphics[]{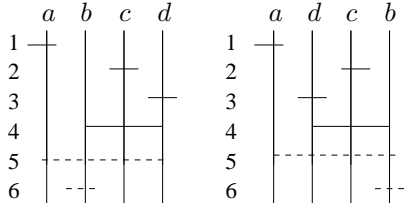}}
\caption{Not every consecutive-ones ordering of a normal model is necessarily a normal model
if it contains simplicial non-probes.
The example on the left is a normal model, with probes $\{1,2,3,4\}$, non-probes
$\{5,6\}$, and simplicial non-probe 6.  The one on the right is a consecutive-ones
ordering of it, but the right endpoint of interval for vertex 5 is not taut.  There is 
a PQ tree for representing all normal models of a probe interval graph if it has
no simplicial non-probes, but there is not a simple PQ tree for representing the possible 
arrangements of intervals of a probe interval graph once simplicial non-probes are introduced.
Any representation of all models, or those models of some class such as normal models,
must be at least as expressive as the gadget
from Figure~\ref{fig:probeOnes}, part I, and it gives families of permutations that
no single PQ tree can give.
}\label{fig:naughtySimplicials}
\end{figure}

\subsection{PQ trees}\label{sect:PQtrees}

We defined PQ-tree in Section~\ref{sect:intro}.
The classification of a node as a P node or a Q node is ambiguous
if it has only two children.  In this paper we adopt the convention of considering
it to be both.

A PQ tree can be represented in $O(n)$ space by letting each leaf carry a column identifier and each
internal node carry a pointer to an ordered list of its children.  Notationally, however, we denote each
node in a PQ tree by a set, namely, the set of columns at leaf descendants. A leaf is a
set whose only element is a column, and an internal node is the disjoint union of its children.  
The root is the set of all columns of the matrix.  A consecutive set of columns in one consecutive-ones
ordering must be consecutive in every consecutive-ones ordering if and only if is a P node or 
a union of consecutive children of a Q node.

\begin{definition}\label{def:validOrderings}
Let $T$ be a PQ tree and
$\pi()$ be  a bijection from its leaves
to $\{1,2, \ldots, k\}$, such that there is an allowed leaf order
where each leaf $\ell$ is in position $\pi(\ell)$ in the ordering.
Then $\pi()$ is a {\em valid ordering} for $T$.
Let $\Pi(T)$ the set of valid orderings for $T$.
\end{definition}

\begin{definition}\label{def:TreePrec}
If $T$ and $T'$ are two PQ trees with the same leaf set,
let $T \prec T'$ denote that $\Pi(T) \subset \Pi(T')$, 
$T \preceq T'$ denote that $\Pi(T) \subseteq \Pi(T')$,
and let $T \equiv T'$ denote that $\Pi(T) = \Pi(T')$.
\end{definition}

\subsubsection{Restricting a PQ tree}

\begin{definition}\label{def:perms}
If $\pi$ is a bijection from a set $C$ to $\{1,2, \ldots, |C|\}$ and $X$ is a nonempty subset of
$C$, then $\pi[X]$ denotes the bijection from $X$
to $\{1,2, \ldots, |X|\}$ such that for $a,b \in X$, $\pi(a) < \pi(b)$
if and only if $\pi[X](a) < \pi[X](b)$.  If $\Pi$ is a set of permutations
from $C$ to $\{1,2, \ldots, |C|\}$, then $\Pi[X]$ denotes $\{\pi[X] \mid \pi \in \Pi\}$.
\end{definition}

\begin{definition}\label{def:PQRestriction}
Let the {\em restriction $T[X]$ of a PQ tree $T$ to $X$ denote
the PQ tree $T'$ such that $\Pi(T') = \Pi(T)[X]$}.
If $T$ has leaf set $C$, $|C| > 1$, and $\{c\}$ is a leaf,
let $T - c$ denote $T[C - c]$.
\end{definition}

That $T[X]$ is well-defined can be seen from the following 
algorithm that computes it, in time linear in the number of nodes of $T$. 
(See Figure~\ref{fig:PQRestrict}.)

\begin{algorithm}\label{alg:PQRestrict}
Delete leaves that of $T$ are not in $X$ and each node that has no leaf descendants 
in $X$.  Then, for each node $u$ that has only one child $w$, we can replace $u$ with $w$
in the ordered list of children at $w$'s parent, since $u$
imposes no constraints on the orderings of $X$.
\end{algorithm}

\begin{figure}
\centerline{\includegraphics[]{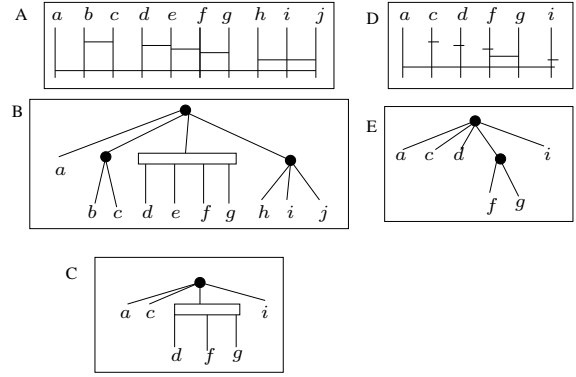}}
\caption{The restriction of a PQ tree to a subset of columns.
A.  A consecutive-ones ordered matrix $M$.  Rows of 1's are depicted with line segments.
B.  The PQ tree $T$ of $M$.  C.  The restriction $T[C]$ of $T$ to columns
$C = \{a,c,d,f,g,i\}$.  It is obtained by Algorithm~\ref{alg:PQRestrict}, and gives
all orderings of $\{a,c,d,f,g,i\}$ that are subsequences of consecutive-ones
orderings of $M$.  D.  The submatrix $M[][C]$ of $M$ given by columns in $C$.
E.  $T(M[][C])$, showing that this is not
the same as $T[C]$.  $T[C]$ retains constraints on the ordering that have been lost
from $M[][C]$.
}\label{fig:PQRestrict}
\end{figure}

\begin{lemma}\label{lem:RestrictPrec}
If $M$ is a consecutive-ones matrix and $C$ is a subset of its columns,
then $T(M)[C] \preceq T(M[][C])$.
\end{lemma}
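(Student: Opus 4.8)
The plan is to show that every valid ordering of $M[][C]$ arises as the restriction to $C$ of some valid ordering of $M$; this is exactly the containment $\Pi(T(M))[C] \subseteq \Pi(T(M[][C]))$, which by Definition~\ref{def:PQRestriction} and Definition~\ref{def:TreePrec} says $T(M)[C] \preceq T(M[][C])$. So the work is entirely at the level of consecutive-ones orderings, and the PQ-tree language is just bookkeeping on top of it.

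First I would unwind the definitions. By Booth and Lueker's result, $\Pi(T(M))$ is precisely the set of consecutive-ones orderings of $M$, and $\Pi(T(M[][C]))$ is precisely the set of consecutive-ones orderings of $M[][C]$. By Definition~\ref{def:PQRestriction}, $T(M)[C]$ is the PQ tree with $\Pi(T(M)[C]) = \Pi(T(M))[C]$, and by Definition~\ref{def:perms}, $\Pi(T(M))[C] = \{\pi[C] \mid \pi \in \Pi(T(M))\}$. So it suffices to prove: for every consecutive-ones ordering $\pi$ of $M$, the induced ordering $\pi[C]$ of the columns in $C$ is a consecutive-ones ordering of $M[][C]$.

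This last claim is the heart of the matter, and it is short. Fix a consecutive-ones ordering $\pi$ of $M$ and consider any row $r$ of $M[][C]$. Its set of $1$'s is $S(M[][C], r) = S(M, r) \cap C$. In the ordering $\pi$, the columns of $S(M,r)$ form a contiguous block, since $\pi$ is a consecutive-ones ordering of $M$. Passing to the subsequence $\pi[C]$ preserves relative order and simply deletes the columns not in $C$; the columns of $S(M,r) \cap C$ were contiguous among all of $\pi$'s columns, and nothing from $C$ lies strictly between two of them, so they remain contiguous in $\pi[C]$. Hence every row of $M[][C]$ has its $1$'s consecutive under $\pi[C]$, i.e.\ $\pi[C]$ is a consecutive-ones ordering of $M[][C]$, so $\pi[C] \in \Pi(T(M[][C]))$.

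I do not expect a genuine obstacle here; the only thing to be careful about is the nonemptiness and well-definedness stipulations in Definition~\ref{def:perms} and Definition~\ref{def:PQRestriction} (one needs $C$ nonempty for $\pi[C]$ to be defined, and one is comparing two PQ trees on the common leaf set $\{\{c\} \mid c \in C\}$, which matches the hypothesis of Definition~\ref{def:TreePrec}). Note the containment can be strict, as Figure~\ref{fig:PQRestrict} shows: $T(M)[C]$ may retain ordering constraints on $C$ that are invisible in $M[][C]$ itself, because they are forced only through rows that touch columns outside $C$. That is precisely why the statement is $\preceq$ and not $\equiv$, and it is also why the restriction operation on PQ trees is more refined than simply recognizing the submatrix.
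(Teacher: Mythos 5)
Your proposal is correct and follows the same route as the paper's one-line proof, which simply observes that every submatrix of a consecutive-ones ordered matrix is consecutive-ones ordered, so $\pi[C] \in \Pi(T(M[][C]))$ for every $\pi \in \Pi(T(M))$. You have merely spelled out the row-by-row contiguity argument that the paper leaves implicit.
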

\begin{proof}
Every submatrix of a consecutive-ones ordered matrix is consecutive-ones
ordered, so for every $\pi \in \Pi(T(M))$, $\pi[C] \in \Pi(T(M[][C]))$.
\end{proof}

The opposite direction of Lemma \ref{lem:RestrictPrec} does not always hold,
as shown by part E of Figure~\ref{fig:PQRestrict}.

\subsubsection{A relationship between the PQ tree and the rows of the matrix}

Let us say that two rows $X$ and $Y$ {\em properly overlap}
if $X \cap Y$, $X \setminus Y$, or $Y \setminus X$ are all
nonempty.  The following is easily verified, and has appeared
in~\cite{MPT98}, among other places:

\begin{lemma}\label{lem:overlap}
If $X$ and $Y$ are properly overlapping rows of a consecutive-ones matrix $M$
then $X \setminus Y$, $X \cap Y$, $Y \setminus X$ and $X \cup Y$ are
consecutive in every consecutive-ones ordering of $M$.  
\end{lemma}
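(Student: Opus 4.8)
The plan is to work directly with a fixed consecutive-ones ordering of $M$ and exploit the interval structure of the rows. Suppose $X$ and $Y$ properly overlap. Fix any consecutive-ones ordering, so that $X$ and $Y$ are each blocks of consecutive columns, say $X = [x_1, x_2]$ and $Y = [y_1, y_2]$ in the left-to-right order, where I write $[a,b]$ for the set of columns from position $a$ to position $b$. Since $X \cap Y \neq \emptyset$, the two intervals overlap; since $X \setminus Y \neq \emptyset$ and $Y \setminus X \neq \emptyset$, neither contains the other. Without loss of generality assume $x_1 \le y_1$; then the conditions force $x_1 < y_1 \le x_2 < y_2$, so in this ordering $X \setminus Y = [x_1, y_1 - 1]$, $X \cap Y = [y_1, x_2]$, $Y \setminus X = [x_2 + 1, y_2]$, and $X \cup Y = [x_1, y_2]$ are each blocks of consecutive columns. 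So in this \emph{particular} ordering all four sets are consecutive.

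The real content is to upgrade ``consecutive in this ordering'' to ``consecutive in every consecutive-ones ordering.'' For this I would invoke the characterization of consecutivity recalled just before Definition~\ref{def:validOrderings}: a set of columns that is consecutive in one consecutive-ones ordering is consecutive in every consecutive-ones ordering if and only if it is a node of the PQ tree $T(M)$ or a union of consecutive children of a Q node of $T(M)$. Equivalently, and more convenient here, a set $S$ of columns is consecutive in every consecutive-ones ordering iff for every consecutive-ones ordering, $S$ forms a block; and a clean sufficient criterion is that $S$ is obtained from sets that are already known to be consecutive-in-every-ordering by the operations of union, intersection, and set difference of two \emph{overlapping} such sets — precisely the operations appearing in the statement. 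The rows $X$ and $Y$ themselves are consecutive in every consecutive-ones ordering, essentially by definition of consecutive-ones: every consecutive-ones ordering makes each row a block.

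Concretely, I would argue as follows. Let $\sigma$ be an arbitrary consecutive-ones ordering of $M$. In $\sigma$ both $X$ and $Y$ are blocks. Two blocks in a linear order that have nonempty intersection and neither containing the other must interleave exactly as above, so $X \setminus Y$, $X \cap Y$, $Y \setminus X$, $X \cup Y$ are each blocks of $\sigma$. Since $\sigma$ was arbitrary, each of these four sets is a block in every consecutive-ones ordering, which is exactly the assertion. The only subtlety is confirming that ``$X$ properly overlaps $Y$'' is preserved across orderings, but proper overlap is defined purely in terms of set operations on the row contents $S(X), S(Y)$ — it does not reference any ordering — so there is nothing to check there; the hypothesis holds simultaneously for all orderings.

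The main obstacle, such as it is, is purely bookkeeping: one must be careful that the equalities $X \setminus Y = [x_1, y_1-1]$ etc.\ hold \emph{as the left-to-right block decomposition in that specific $\sigma$}, and that the argument is then correctly quantified ``for all $\sigma$'' rather than applied to a single ordering. There is no deep combinatorics here; Lemma~\ref{lem:overlap} is essentially a one-line observation about how two overlapping intervals on a line decompose, lifted uniformly over all consecutive-ones orderings. I would therefore present it tersely, perhaps with a sentence noting the overlap of two intervals on a line, and omit a picture.
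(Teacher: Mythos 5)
Your proof is correct: fixing an arbitrary consecutive-ones ordering, observing that $X$ and $Y$ are overlapping intervals with neither containing the other, and reading off the four sets as intervals is exactly the ``easily verified'' argument the paper has in mind (it gives no proof, deferring to \cite{MPT98}). The middle paragraph's appeal to the PQ-tree characterization is an unnecessary detour, since your final quantification over all orderings already closes the argument.
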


\begin{definition}\label{def:CPerp}~\cite{MPT98}
If $M$ is a consecutive-ones matrix on column set $C$.
Let ${\cal C}^{\perp}(M)$ denote $\{X \mid \emptyset \subset X \subseteq C$
and $X$ does not properly overlap any row of $M\}$.
\end{definition}

\begin{lemma}\label{lem:CPerp}~\cite{MPT98}
If $M$ is a consecutive-ones matrix,
${\cal C}^{\perp}(M) = \{X \mid X$ is a node
of $T(M)$ or a nonempty union of children of a 
P node of $T(M)\}$.
\end{lemma}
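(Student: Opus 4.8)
The plan is to prove the two inclusions separately, using the characterization of consecutive sets in a consecutive-ones ordering in terms of the PQ tree (recalled just before Definition~\ref{def:validOrderings}): a set $X$ of columns is consecutive in every consecutive-ones ordering of $M$ if and only if $X$ is a node of $T(M)$ or a union of consecutive children of a Q node of $T(M)$, which of course includes every node and every nonempty union of children of a P node. First I would show the easy direction, that every such $X$ lies in ${\cal C}^{\perp}(M)$. If $X$ is a node of $T(M)$ or a nonempty union of children of a P node, then $X$ is consecutive in every consecutive-ones ordering of $M$. Now suppose toward a contradiction that some row $Y$ of $M$ properly overlaps $X$. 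Fix any consecutive-ones ordering $\pi$ of $M$; both $X$ and $Y$ are blocks of consecutive columns in $\pi$, and two blocks that properly overlap as sets cannot both be intervals of the same linear order — the three pieces $X\setminus Y$, $X\cap Y$, $Y\setminus X$ would have to be arranged so that $X\cap Y$ is adjacent to both $X\setminus Y$ and $Y\setminus X$ on the same side, which is impossible once all three are nonempty. Hence $X$ does not properly overlap any row, so $X \in {\cal C}^{\perp}(M)$.

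For the reverse inclusion, suppose $X \in {\cal C}^{\perp}(M)$, so $\emptyset \subset X \subseteq C$ and $X$ properly overlaps no row of $M$. I must show $X$ is a node of $T(M)$ or a nonempty union of children of a P node. The key step is to show $X$ is consecutive in every consecutive-ones ordering of $M$; once that is established, the structural characterization of such sets forces $X$ to be a node or a union of consecutive children of a Q node, and a union of consecutive children of a Q node that is not a single node is itself ruled out unless it is realized differently — more carefully, I would argue directly that a set that is consecutive in every consecutive-ones ordering but is neither a node nor a nonempty union of children of a P node must be a union of at least two but not all consecutive children of a Q node $q$; but then, since at $q$ both the given order and its reverse are allowed and any subtree rearrangements below $q$'s children are allowed, one can exhibit a consecutive-ones ordering in which $X$ is not consecutive (slide a non-$X$ child of $q$ into the middle of the block, or use the fact that $q$ has $\geq 3$ children so the complement within $q$'s children is nonempty on at least one side and can be interleaved) — contradiction. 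So it remains to show $X$ is consecutive in every consecutive-ones ordering. Here I would use Lemma~\ref{lem:overlap} and an induction, or more slickly the following observation: if $X$ were not consecutive in some consecutive-ones ordering $\pi$, pick the minimal "violating" row. Actually the cleanest route: build the matrix $M'$ obtained from $M$ by adjoining $X$ as a new row. I claim $M'$ is still a consecutive-ones matrix, i.e.\ $\Pi(T(M)) = \Pi(T(M'))$. Since $X$ properly overlaps no row of $M$, for every row $Y$ of $M$ either $X \subseteq Y$, or $Y \subseteq X$, or $X \cap Y = \emptyset$.

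The main obstacle is exactly this claim — that a set overlapping no row can be consistently added as a new row. To handle it I would take any consecutive-ones ordering $\pi$ of $M$ realized by $T(M)$ and show $X$ is automatically consecutive in $\pi$: consider the columns of $X$ in the order $\pi$ and suppose some column $c \notin X$ lies strictly between two columns of $X$. Walk down $T(M)$ to the least node $v$ whose leaf set contains $c$ and at least two columns of $X$ straddling $c$; $v$ is a Q node (a P node containing members of $X$ on both sides of $c$ together with $c$ itself would contradict $X \in {\cal C}^{\perp}$ once we also show no row forces this) — and then among $v$'s children, some child contains only non-$X$ columns and lies between children meeting $X$, and I produce a row $Y$ of $M$ witnessing the Q-node's linear constraint that must then properly overlap $X$, contradicting the hypothesis; the existence of such a witnessing row is guaranteed because every Q node of $T(M)$ arises from genuine overlap constraints among the rows (this is the content of Lemma~\ref{lem:overlap} read in reverse, and is standard for PQ trees of matrices). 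Once $X$ is consecutive in every consecutive-ones ordering, the structural dichotomy above finishes the proof. I expect the bookkeeping in pinning down "the witnessing row at a Q node" to be the delicate part; it may be cleaner to invoke the well-known fact that ${\cal C}^{\perp}(M)$ equals the set of "modules" of the associated set family and that these are exactly the nodes and P-node-children-unions of the PQ tree, citing~\cite{MPT98} for the precise correspondence and only spelling out the overlap argument in the forward direction as done in the first paragraph.
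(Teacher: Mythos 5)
The paper offers no proof of Lemma~\ref{lem:CPerp}; it is imported wholesale from~\cite{MPT98}. Your attempted proof contains genuine errors in both inclusions, and both stem from the same misconception. For the forward inclusion, the assertion that ``two blocks that properly overlap as sets cannot both be intervals of the same linear order'' is false: $X=\{1,2\}$ and $Y=\{2,3\}$ properly overlap and are both intervals of the order $1,2,3$. Indeed, a union of consecutive children of a Q node is consecutive in \emph{every} consecutive-ones ordering and yet typically \emph{does} properly overlap a row (take $M$ with rows $\{1,2\}$ and $\{2,3\}$ on columns $\{1,2,3\}$: the tree is a single Q node, the set $\{1,2\}$ is a union of two consecutive children, and it properly overlaps the row $\{2,3\}$). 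This is exactly why Lemma~\ref{lem:FPerp} and Lemma~\ref{lem:CPerp} describe \emph{different} families, so no argument based only on consecutiveness of $X$ in one or all orderings can establish the forward inclusion. A correct argument must exploit the reordering freedom at the relevant node --- e.g., reverse the subtree rooted at the node $X$, or permute the children of the P node so that the children constituting $X$ move to the other end --- and observe that any row properly overlapping $X$ would then fail to be consecutive in the new (still valid) ordering.

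For the reverse inclusion, your ``key step'' --- that every $X\in{\cal C}^{\perp}(M)$ is consecutive in every consecutive-ones ordering --- is a false statement: if $M$ has no rows (or, in general, if $X$ is a union of two non-adjacent children of a P node), then $X=\{1,3\}$ on columns $\{1,2,3\}$ lies in ${\cal C}^{\perp}(M)$ but is not consecutive in the valid ordering $1,2,3$. For the same reason your claim that $\Pi(T(M))=\Pi(T(M'))$ after adjoining $X$ as a new row fails (adjoining $\{1,3\}$ kills the ordering $1,2,3$), and your proposed route can at best capture the family of Lemma~\ref{lem:FPerp}, which neither contains nor is contained in ${\cal C}^{\perp}(M)$. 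Your attempt to then exclude unions of consecutive children of a Q node is self-contradictory: you propose to ``slide a non-$X$ child of $q$ into the middle of the block,'' but a Q node admits only its given child order and its reverse, so such a set is consecutive in every valid ordering and no such sliding is possible. The true reason these sets are excluded from ${\cal C}^{\perp}(M)$ is that each one that is not itself a node properly overlaps some row of $M$ --- every Q node is witnessed by a chain of pairwise properly overlapping rows spanning its children --- and that structural fact about PQ trees is precisely the nontrivial content that a self-contained proof of Lemma~\ref{lem:CPerp} must supply; it is not a consequence of Lemma~\ref{lem:overlap} ``read in reverse.'' Given the role the lemma plays here, the honest options are to cite~\cite{MPT98} as the paper does, or to rebuild the argument around the reordering-at-a-node idea in both directions.
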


\begin{definition}\label{def:FPerp}
Suppose $M$ is consecutive-ones ordered.  
Let ${\cal F}(M)$ denote those members of ${\cal C}^{\perp}(M)$
that are consecutive in $M$, that is, the nonempty consecutive sets of
columns that do not properly overlap any row of $M$.
\end{definition}

Conceptually, the members of ${\cal F}(M)$ are those consecutive
sets of columns whose order can be reversed to give a new consecutive-ones
ordering of $M$.  This gives the following:

\begin{lemma}\label{lem:FPerp}
Let $M$ be a consecutive-ones ordering of a 0-1 matrix, and let 
$T$ be the corresponding ordering of its PQ tree.
Then ${\cal F}(M) = \{X \mid X$ is a node of $T$ or a nonempty union
of consecutive children of a Q node of $T\}$.
\end{lemma}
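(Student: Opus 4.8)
The plan is to read the statement off from Lemma~\ref{lem:CPerp}. Recall from Section~\ref{sect:intro} the elementary PQ-tree fact that a set of columns which is consecutive in one consecutive-ones ordering is consecutive in every one of them exactly when it is a node of the PQ tree or a union of consecutive children of a Q node. By Definition~\ref{def:FPerp}, ${\cal F}(M)$ consists of those members of ${\cal C}^{\perp}(M)$ that are consecutive in the given ordering $M$; equivalently, by the remark just before the lemma, it is the collection of consecutive blocks of $M$ whose reversal is again a consecutive-ones ordering. Since Lemma~\ref{lem:CPerp} describes ${\cal C}^{\perp}(M)$ as the nodes of $T(M)$ together with the nonempty unions of children of its P nodes, the whole problem reduces to deciding, for the ordered tree $T$ realizing $M$, which of those sets is in addition consecutive in $M$, and to checking that the answer is exactly the family named in the statement.

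For the containment ``$\supseteq$'' I would argue as follows. Every node $X$ of $T$ belongs to ${\cal C}^{\perp}(M)$ by Lemma~\ref{lem:CPerp}, and $X$ is consecutive in every consecutive-ones ordering of $M$, in particular in $M$ itself, so $X \in {\cal F}(M)$. If $Y$ is a union of consecutive children of a Q node $q$ of $T$, then in the ordering $M$ the children of $q$ occur contiguously and in their given order, so $Y$ is a consecutive block of $M$; to see $Y \in {\cal C}^{\perp}(M)$ one checks, using the fact that every row of $M$ is itself a node or a union of consecutive children of a Q node and hence (via Lemma~\ref{lem:overlap} and the structure of $T$) is ``aligned'' with $T$, that no row properly overlaps $Y$. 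For the containment ``$\subseteq$'', take $X \in {\cal F}(M)$. By Lemma~\ref{lem:CPerp} either $X$ is a node of $T$, and we are done, or $X$ is a union of some children of a P node $u$; in the second case consecutiveness of $X$ in $M$ forces those children to occupy a contiguous run among the children of $u$ in the $M$-order, and one then has to recognize $X$ as a union of consecutive children of some Q node of $T$.

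The step I expect to be the genuine obstacle is exactly this last recognition, i.e.\ reconciling the P-node side of Lemma~\ref{lem:CPerp} with the Q-node side of the conclusion: a proper, contiguous-in-$M$ union of children of a P node $u$ need not be a node of $T$, so one must understand precisely how the rows of $M$, together with the fixed ordering $M$, constrain the subtree rooted at $u$ before such a set can be matched against the description ``union of consecutive children of a Q node''. The cleanest handle, I think, is to work contrapositively at the level of reversibility: if a consecutive block $X$ of $M$ is neither a node of $T$ nor a union of consecutive children of any Q node, exhibit a row of $M$ that properly overlaps $X$ — so that reversing $X$ destroys the consecutiveness of that row and hence $X \notin {\cal F}(M)$ — and, conversely, show directly that a node or a consecutive Q-run of $T$ is properly overlapped by no row. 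Lemma~\ref{lem:overlap}, the description of ${\cal C}^{\perp}(M)$ in Lemma~\ref{lem:CPerp}, and the representation of each row as a node or a consecutive Q-run of $T$ are the ingredients for carrying this out.
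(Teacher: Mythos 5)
There is a genuine gap, and it sits exactly where you predicted: the step ``recognize $X$ as a union of consecutive children of some Q node'' cannot be carried out, and the symmetric claim in your ``$\supseteq$'' direction --- that a union $Y$ of consecutive children of a Q node is properly overlapped by no row --- is false. Both failures have the same source: the statement of Lemma~\ref{lem:FPerp}, read literally against Definition~\ref{def:FPerp} and Lemma~\ref{lem:CPerp}, has the roles of P and Q nodes interchanged. Concretely, take columns $\{a,b,c\}$ with rows $\{a,b\}$ and $\{b,c\}$, ordered $(a,b,c)$: the tree is a single Q node with three leaf children, and $\{a,b\}$ is a union of two consecutive children of it, yet $\{a,b\}$ properly overlaps the row $\{b,c\}$, so $\{a,b\} \notin {\cal C}^{\perp}(M) \supseteq {\cal F}(M)$ (equivalently, reversing the block $\{a,b\}$ yields $(b,a,c)$, in which the row $\{b,c\}$ is no longer consecutive). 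In the other direction, take columns $\{a,b,c\}$ with the single row $\{a,b,c\}$: the tree is a single P node with three leaf children, and the block $\{a,b\}$ is consecutive in $(a,b,c)$ and properly overlaps no row, hence lies in ${\cal F}(M)$, but it is neither a node nor a union of consecutive children of a Q node. So neither containment of the identity you are asked to prove holds, and no amount of work on the flagged step will close it.

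You should also be aware that the paper offers no proof here --- the lemma is presented as immediate from the remark that ${\cal F}(M)$ is the family of reversible consecutive blocks --- so you have not missed an argument; you have run into a defect in the statement itself. What your setup actually proves, in essentially one line, is ${\cal F}(M) = \{X \mid X$ is a node of $T$, or a nonempty union of children of a P node of $T$ that occupy consecutive positions among their siblings in the ordering of $T$ corresponding to $M\}$: by Definition~\ref{def:FPerp}, ${\cal F}(M)$ is the intersection of ${\cal C}^{\perp}(M)$ with the consecutive blocks of $M$; Lemma~\ref{lem:CPerp} identifies ${\cal C}^{\perp}(M)$ as the nodes together with the nonempty unions of children of P nodes; every node is consecutive in $M$; and a union of children of a P node is consecutive in $M$ precisely when those children are consecutive among their siblings. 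This corrected form is also the one supported by the ``reversibility'' reading and the one consistent with how ${\cal F}$ is manipulated in the proof of Lemma~\ref{lem:columnDelete} (via proper overlaps with rows). The family actually named on the right-hand side of the printed lemma --- nodes plus unions of consecutive children of Q nodes --- is a different family, namely the sets that are consecutive in \emph{every} consecutive-ones ordering of $M$, not the reversible blocks of one fixed ordering; conflating the two is precisely what makes both of your containments break down.
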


$T(M)$ and ${\cal F}(M)$ are equivalent representations of the constraints
on the consecutive-ones orderings of columns of $M$, but it is sometimes easier
to prove properties of $T(M)$ by expressing them in terms of ${\cal F}(M)$.

\subsubsection{Finding intersections of PQ trees}

In~\cite{McCPQAlg}, it is shown that if $T$ and $T'$
are PQ trees with the same leaf set and $\Pi(T) \cap \Pi(T')$
is nonempty, then $\Pi(T) \cap \Pi(T')$ can also be represented
with a PQ tree, denoted $T \cap T'$.  This is easy to see:
if $M$ is a matrix with $T$ as its PQ tree and $M'$
is a matrix with $T'$ as its PQ tree, then the matrix
$M''$ whose rows are the union of rows of $M$ and $M'$ gives a matrix 
whose PQ tree is represents $\Pi(T) \cap \Pi(T')$, unless this
set is empty, which is the case if and only if $M''$ does not
have the consecutive-ones property.

\begin{definition}\label{def:PQintersection}
Let $T$ and $T'$ be PQ trees with the same leaf sets.  By $T \cap T'$,
we denote the PQ tree $T''$ such that $\Pi(T'') = \Pi(T) \cap \Pi(T')$,
unless this set is empty, in which case we say the intersection
of $T$ and $T'$ is {\em undefined}.
\end{definition}

Booth and Lueker showed that every consecutive-ones matrix has
a PQ tree.  We observe that the converse also applies, by the
following construction, which, given a PQ tree $T$, constructs
a canonical matrix $M(T)$ that has $T$ as its PQ tree.

\begin{itemize}
\item For each P node $p$ that is not the root, let $p$
be a row of $M(T)$;

\item For each Q node $q$, let $(C_1, C_2, \ldots, C_k)$ be the left-to-right
order of the children.  For
every consecutive pair $(C_i, C_{i+1})$ of children,
let $C_i \cup C_{i+1}$ be a row of $M(T)$.
\end{itemize}

The correctness of this is immediate from Lemma~\ref{lem:FPerp}.

\begin{lemma}\label{lem:MofT}
If $T$ is the PQ tree of a matrix $M$ with $n$ columns and rows and $m$ 1's, 
$M(T)$ has $O(n+m)$ 1's and a sparse representation takes 
$O(n+m)$ time to generate.
\end{lemma}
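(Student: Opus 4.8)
The plan is to bound the number of 1's in $M(T)$ and then argue that it can be generated in time proportional to that bound. First I would split $M(T)$ into its two kinds of rows: the ``P-node rows'', one for each non-root P node $p$, and the ``Q-node rows'', one for each consecutive pair of children of a Q node. For the P-node rows, I would charge the 1's of the row $p$ to the parent edge of $p$ in $T$; since the P nodes are pairwise either disjoint or nested, a leaf $c$ lies in the row $p$ only if $p$ is an ancestor of $c$, and each such ancestor that is a P node charges $c$ exactly once, so the total number of 1's in P-node rows is at most the total number of (node, ancestor-P-node) incidences. Here is where I would want a size bound on $T$: $T$ has $O(n)$ nodes and, more to the point, I would invoke the standard fact (implicit in Lemma~\ref{lem:CPerp} / Lemma~\ref{lem:FPerp}, via the correspondence with $M$) that the sum over all nodes $u$ of the number of children of $u$ whose subtree is ``cut'' by some row of $M$ is $O(n+m)$. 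A cleaner route: each P-node row is a node of ${\cal C}^\perp(M)$, hence by Lemma~\ref{lem:CPerp} it is a node of $T(M)$ itself, so it equals a union of columns sitting below one node; counting 1's this way, the P-node rows contribute at most one full ``ancestor chain'' worth of 1's per leaf, which is $O(n + (\text{number of internal nodes that are proper ancestors counted with multiplicity}))$. I expect the crisp statement I actually need is: the sum of the sizes of all nodes of a PQ tree on $n$ leaves, where we only count P nodes, is $O(n+m)$ when the tree is $T(M)$ — this follows because each such node, being in ${\cal C}^\perp(M)$, is a node of $T(M)$, and the sum of sizes of all nodes on a single root-to-leaf path telescopes, while the number of P-node ancestors of a fixed leaf is bounded by the number of 1's charged to that leaf across the rows of $M$ that ``witness'' each such P node.

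For the Q-node rows, the argument is more direct. Fix a Q node $q$ with children $C_1,\dots,C_k$ in order. The rows it contributes are $C_1\cup C_2,\ C_2\cup C_3,\dots,C_{k-1}\cup C_k$. The total number of 1's in these rows is $\sum_{i=1}^{k-1}(|C_i|+|C_{i+1}|) \le 2\sum_{i=1}^k |C_i| = 2|q|$. So the Q-node rows contribute at most $2\sum_{q\ \text{a Q node}} |q|$ 1's. Now $\sum_{q} |q|$ over all Q nodes of $T(M)$ is $O(n+m)$: again each Q node is a node of $T(M)$, the Q nodes along any root-to-leaf path are nested, and the number of Q-node ancestors of a leaf $c$ is bounded by the number of distinct rows of $M$ containing $c$ that cut the corresponding Q node — which I would bound by the degree of $c$ in $M$ (the number of 1's in column $c$). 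Summing over leaves gives $\sum_q |q| = O(n+m)$, hence the Q-node rows have $O(n+m)$ 1's. Combining the two cases, $M(T)$ has $O(n+m)$ 1's, and since it has one row per non-root P node plus one row per adjacent pair of Q-node children — both $O(n)$ — and $n$ columns, a sparse representation has size $O(n+m)$.

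For the time bound, I would observe that the construction is entirely local: do a single traversal of $T$, and at each node compute its leaf-descendant set incrementally from those of its children. Concretely, while I cannot afford to recompute each $|C_i|$ from scratch, I can compute, for each Q node, the lists $C_1,\dots,C_k$ by a postorder pass that concatenates children's leaf lists, and then emit the rows $C_i\cup C_{i+1}$ by reusing pointers into those lists; the work per Q node is proportional to the number of 1's it generates. For P nodes I emit the row equal to the node's own leaf-descendant list, again proportional to its contributed 1's. The leaf-descendant lists themselves can be assembled bottom-up in total time $O(\sum_{\text{nodes}} (\text{contributed 1's})) = O(n+m)$ by the counting above, with a final radix sort (as in Section~\ref{sect:initial}) to put the emitted matrix into the sparse pointer representation. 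The main obstacle, and the step I would spend the most care on, is the 1's-count bound — specifically pinning down why $\sum_{\text{P nodes}}|p| + \sum_{\text{Q nodes}}|q| = O(n+m)$ for $T = T(M)$; once that is in hand, both the space bound and the linear-time generation are routine. The honest tool for that obstacle is Lemma~\ref{lem:CPerp}/\ref{lem:FPerp} together with the observation that along any root-to-leaf path the nodes are nested, so the multiset of sizes telescopes against the number of ancestors, and the number of ancestors of $c$ of each type is dominated by the number of 1's in column $c$ of $M$.
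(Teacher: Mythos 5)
Your decomposition of $M(T)$ into P-node rows and Q-node rows, and your plan to charge the 1's of each such row to 1's of $M$ in a witnessing row, is exactly the paper's strategy, and your Q-node accounting ($\sum_i(|C_i|+|C_{i+1}|)\leq 2|q|$, then bounding $\sum_q |q|$ by exhibiting, for each leaf $c$, one distinct row of $M$ containing $c$ per Q-node ancestor of $c$) is a sound variant of it. The genuine gap is that you never establish the existence of the witnessing rows, and the tool you point to for this --- Lemma~\ref{lem:CPerp}/Lemma~\ref{lem:FPerp} together with nestedness along root-to-leaf paths --- cannot supply them. Those lemmas characterize which column sets do \emph{not} properly overlap any row; they place no lower bound on how many rows exist or what they cover, so they cannot by themselves show that a leaf of small column-degree has few P-node or Q-node ancestors. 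What the paper actually invokes (citing~\cite{MPT98}) are two structural facts about $T(M)$: for each Q node $A$ and each consecutive pair of children $C_i,C_{i+1}$, there is a row $x$ of $M$ such that $A$ is the least common ancestor of $S(x)$ and $C_i\cup C_{i+1}\subseteq S(x)$; and for each non-root P node $B$, either $B=S(w)$ for some row $w$, or $B\subset S(y)$ for some row $y$ whose least common ancestor is $B$'s Q-node parent. These are precisely the facts your charging scheme consumes --- they guarantee one distinct row of $M$ containing $c$ per relevant ancestor of $c$ --- and they are where the real content of the lemma lives: they encode the reason a non-root node can exist in $T(M)$ at all, namely that some row of $M$ forces it.

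Two smaller points. Your P-node paragraph contains a vacuous step (``each P-node row is a node of ${\cal C}^{\perp}(M)$, hence a node of $T(M)$'' --- it is a node of $T$ by construction, and this buys nothing), and the ``telescoping'' of node sizes along a root-to-leaf path is not the right mechanism; the relevant identity is $\sum_u |u| = \sum_c \#\{\mbox{ancestors of } c\}$, after which you still need the witnessing rows to bound the number of ancestors of $c$ by the number of 1's in column $c$ of $M$. Your generation-time argument (a postorder traversal emitting each row in time proportional to its length, followed by a radix sort into the sparse format) is fine once the bound on the number of 1's is in place.
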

\begin{proof}
Let $T$ be the PQ tree of a matrix $M$ with $n$ columns and rows and $m$ 1's.
In various sources, for example,
in~\cite{MPT98}, it is shown that
if $A$ is a Q node and $C_i$ and $C_{i+1}$
are consecutive children, then for some row $x$ of $M$,
$A$ is the least common ancestor of the members of $S(x)$, and
$C_i \cup C_{i+1} \subseteq S(x)$.  
Since $C_i \cup C_{i+1}$ is a row of $M(T)$, we may charge
the 1's in this row to the 1's in columns of $C_i \cup C_{i+1}$ in $S(x)$.
The 1's of $S(x)$ in columns of $C_i$ are charged at most twice, at most once when $C_i$ appears in
$C_i \cup C_{i+1}$ and
at most once when $C_i$ appears in $C_{i-1} \cup C_i$.
Over all Q nodes, this charges each 1 in $M$ for at most two 1's in $M(T)$.

It is also shown in~\cite{MPT98} that if $B$ is a P node and not the 
root, then either there exists a row $w$ of $M$ such that $B = S(w)$,
or there exists a row $y$ such that the least common ancestor of $S(y)$ is a Q node parent $A$, and 
$B \subset S(y)$.  Charge the 1's
of the row corresponding to $B$ in $M(T)$ to the 1's in columns of $B$ in either $S(w)$ or $S(y)$, 
whichever exists.
Over all P nodes, this charges each 1 in $M$ for at most one 1 in
$M(T)$.  If the root is a P node, charge the 1's in the corresponding row of $M(T)$
to columns of the matrix.  Thus, the number of 1's in $M(T)$ is at most the number
of columns of the matrix plus three
times the number of 1's in $M$, which is linear in the size of $M$.
\end{proof}

An $O(n)$ algorithm is
given in~\cite{McCPQAlg} for finding this tree, but it uses sophisticated techniques and a roundabout
set of reductions in order to get this time bound.  
Since we do not need this bound, we use the following
straightforward method in $O(m + n)$ time:  

\begin{algorithm}\label{alg:PQIntersect}
Let $M$ be a matrix whose
rows are the union of rows in $M(T)$ and $M(T')$,
and use Booth and Lueker's algorithm to either generate
the PQ tree of $M$, which is $T \cap T'$, or determine 
that $M$ has no consecutive-ones ordering, in which
case $T \cap T'$ does not exist.
\end{algorithm}

\subsection{Finding $N_1$, $N_2$, and $N_S$ and $Q(x)$ for each $x \in N$}
\label{sect:n1n2n3}

To partition $N$ into $N_1$, $N_2$, and $N_S$,
we begin by finding an arbitrary consecutive-ones ordered clique matrix $M_K$ of 
$G[P]$. That is, we find a normal model of $G[P]$.  If none exists, we reject $G$,
since a requirement for $G$ to be a probe interval graph is for $G[P]$ to
be an interval graph.

We temporarily number the columns of $M_K$ from left to right.
Let $x \in N$ be a non-probe.   We find for every $p \in N(x)$
the left endpoint and the right endpoint of $p$. We keep the
column numbers of these two endpoints, together with their side (left or
right) in a list $L_x$. We 
radix sort the concatenation of these lists 
with $x$ as the primary sort key, column number
as the secondary sort key, and left versus right endpoint as
the tertiary key.  This gives each list $L_x$ in sorted order,
with left endpoints in a column preceding right endpoints in
the same column.  The time required is proportional to the
sum of cardinalities of these lists, $O(m)$.

We sweep through $L_x$ from left to right, keeping a running count of the number of neighbors 
of $x$ in the current column. Each time we 
encounter a left endpoint in $L_x$ we increment the counter, and each 
time we encounter a right endpoint we decrement it.  Each time we 
encounter a right endpoint $e$ that follows a left endpoint, we 
compare the current value of the counter with the size of the clique $K$ 
represented by the
column of $e$, and include $K$ in $Q(x)$ if they are equal.

To find out whether $x$ is simplicial, we test whether the counter reached the size of $N(x)$
at some point.  If it passes this test, $x$ is a member of $N_S$.
If $Q(x)$ is empty but $x$ is not simplicial, it is a member of $N_2$.
Otherwise, it is a member of $N_1$.

These procedures for $x$ take time proportional to $|N[x]|$ for every non-probe $x$.
Summing over all $x$, we have an $O(n+m)$ bound for these operations.
Summarizing, we get the following.

\begin{lemma}\label{lem:n1n2n3}
In linear time we can either split $N$ into $N_1, N_2$ and $N_S$ and 
find $Q(x)$ for every $x \in N$, or else determine that $G$ is not a 
probe interval graph.
\end{lemma}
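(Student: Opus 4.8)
The plan is to follow the construction already sketched in Section~\ref{sect:n1n2n3} and verify that each piece runs within the stated time bound and correctly produces the partition and the sets $Q(x)$.

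First I would establish the preprocessing. We invoke the interval-graph recognition algorithm of~\cite{BL76} on $G[P]$; this either reports that $G[P]$ is not an interval graph, in which case $G$ is not a probe interval graph and we reject, or it returns a consecutive-ones ordered clique matrix $M_K$ of $G[P]$ together with the left-to-right numbering of its columns (the cliques of $G[P]$). Since $G[P]$ is chordal, $M_K$ has $O(n)$ columns and $O(n+m)$ $1$'s, and this step takes $O(n+m)$ time. We then precompute, for every probe $p$, its leftmost column (the left endpoint of $p$) and its rightmost column (the right endpoint of $p$) in $M_K$; this is a single scan over the sparse representation of $M_K$, hence $O(n+m)$.

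Next I would handle a single non-probe $x$. Form the list $L_x$ consisting of, for each $p\in N(x)$, the pair (column number of left endpoint of $p$, ``left'') and the pair (column number of right endpoint of $p$, ``right''). Concatenating these lists over all $x\in N$ and performing one radix sort with $x$ as primary key, column number as secondary key, and left-before-right as tertiary key sorts every $L_x$ simultaneously; since $\sum_{x}|L_x| = O(\sum_x |N(x)|) = O(m)$, this is $O(n+m)$. Now sweep $L_x$ left to right maintaining a counter of the number of neighbors of $x$ whose interval covers the current column: increment on a ``left'' entry, decrement on a ``right'' entry. The counter equals $|N(x)\cap S(K)|$ for the column $K$ currently being swept, because $p$'s interval covers column $K$ exactly when $p$'s left endpoint is at or left of $K$ and $p$'s right endpoint is at or right of $K$. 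Whenever we finish a column $K$ that contained at least one left endpoint followed by the matching right endpoint, we compare the counter value there against $|S(K)\cap P| = |K|$; if they are equal, every probe of $K$ is a neighbor of $x$, so $K\subseteq N(x)$, i.e.\ $K$ is a clique of $G[P]$ that is a subset of $N(x)$, and we add $K$ to $Q(x)$. Conversely every clique of $G[P]$ that is a subset of $N(x)$ is some column of $M_K$ and is detected this way, so $Q(x)$ is computed exactly; correctness here is just the equivalence ``$K\subseteq N(x)$ iff $|N(x)\cap K| = |K|$.'' The maximum value attained by the counter over the whole sweep is $\max_K |N(x)\cap K|$, which equals $|N(x)|$ precisely when some clique of $G[P]$ contains all of $N(x)$, i.e.\ when $N(x)$ is a complete subgraph, i.e.\ when $x$ is simplicial; so testing whether the counter ever reaches $|N(x)|$ decides membership of $x$ in $N_S$. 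Given $Q(x)$ and this test, we classify $x$: it is in $N_S$ if simplicial, in $N_2$ if not simplicial and $Q(x)=\emptyset$, and in $N_1$ otherwise (which, as noted in the text, forces $|Q(x)|>1$ when $x\notin N_S$ and $Q(x)\neq\emptyset$, since a single clique subset of a non-simplicial $N(x)$ is impossible—this is the content of the definition of $N_1$). The work for $x$ is $O(|L_x|+|Q(x)|) = O(|N[x]|)$ since $|Q(x)|\le |N(x)|$.

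The main thing to get right is the time bound: we must never touch the (possibly $\Theta(n^2)$) missing adjacencies among non-probes, so all reasoning goes through $M_K$ and the neighbor lists only, and summing the $O(|N[x]|)$ per-vertex cost over all $x\in N$ together with the $O(n+m)$ preprocessing gives the overall $O(n+m)$. There is no real obstacle beyond bookkeeping; the one point deserving care is arguing that the single combined radix sort legitimately sorts every $L_x$ in the claimed time even though the lists have wildly different lengths, which is exactly the multi-list radix-sorting primitive recalled at the start of Section~\ref{sect:initial}. Assembling these observations yields Lemma~\ref{lem:n1n2n3}.
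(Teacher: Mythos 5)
Your proposal follows the paper's proof essentially verbatim: the same consecutive-ones ordered clique matrix $M_K$ of $G[P]$ obtained from Booth and Lueker's algorithm, the same concatenated and radix-sorted endpoint lists $L_x$, the same left-to-right sweep with a counter to extract $Q(x)$ and to test simpliciality, and the same $O(|N[x]|)$ per-vertex accounting summing to $O(n+m)$. The only slip is the parenthetical claim that a non-simplicial non-probe cannot have $|Q(x)|=1$ (it can, e.g.\ when $N(x)$ consists of one clique of $G[P]$ together with a probe drawn from, but not all of, another clique); this does not affect the algorithm, which, exactly like the paper's, places every non-simplicial $x$ with $Q(x)\neq\emptyset$ into $N_1$.
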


\section{Finding a normal model $M_N$ of $G - N_S$}

\subsection{Finding $M^+_K$}

Recall that $M^+_K$ is the matrix $M_K$ with additional row for each non-simplicial non-probe.

Suppose $x \in N_1 \cup N_2$.  If clique $j$ of $G[P]$ is a member of $Q(x)$,
then $x$'s row must have a  1 in column $j$.  This follows
from the Helly property 
and the fact that there is only one column in a normal model that contains
clique $j$.  If clique $j$ is not in $Q(x)$,
then $x$'s row cannot have a 1 in $j$'s column, since this
would falsely represent $x$ as a neighbor of all members
of clique $j$.  
If $G - N_S$ is a probe interval graph,
this matrix therefore gives us the clique columns of every
normal model of $G - N_S$.
Note that the new rows for $N_2$ are empty sets; 1's will be added to them
later when new columns are added.

Let $M$ be the ordering of these columns in some normal model of
$G - N_S$.
For each $v \in P \cup N_1$, $Q(v)$ must be consecutive 
in $M$, since a submatrix of a consecutive-ones ordered matrix is consecutive-ones
ordered. We find a consecutive-ones ordering $M^+_K$ of the columns, and if no such
an ordering exists we reject $G$.

For $x \in N_1 \cup N_2$, let $N_K(x)$ denote the probes whose rows in $M^+_K$
intersect $x$'s row.  That is, $N_K(x)$ is the neighbors of $x$ given by $M^+_K$
when it is interpreted as a probe interval model.
There may be some vertices in $N(x) \setminus N_K(x)$.
These are the {\em unfulfilled adjacencies}.  They impose additional
constraints on the consecutive-ones orders of $M^+_K$, 
that allow semi-clique columns to be added to it to represent the unfulfilled
adjacencies. From Theorem~\ref{thm:invariantCols} it follows that every normal model
of $G - N_S$ has some ordering of columns of $M^+_K$ as a submatrix.

\subsection{Finding $M'_K$}\label{sect:binding}

Not every consecutive-ones ordering of $M^+_K$ is a submatrix of a normal model
of $G - N_S$.  To find such an ordering, we add constraint rows
to $M^+_K$, and find a consecutive-ones ordering $M'_K$ of the resulting matrix
to reflect the constraints imposed by the constraint rows.  This yields
an ordering $M^*_K = M'_K[V \setminus N_S]$ of $M^+_K$ that is a submatrix
of a normal model of $G - N_S$.

There are two types of constraints that must be reflected in $M'_K$,
{\em non-probe - probe binding constraints} and {\em probe - probe binding constraints}.

\subsubsection{Non-Probe - Probe Binding Constraints}\label{sect:NP-P}

Let $x \in N_1$ and let $p \in N(x) \setminus N_K(x)$. 
We know that $Q(x) \cap Q(p) = \emptyset$, because 
$p \notin N_K(x)$. 
Since $x$ and $p$ are 
adjacent, we know that their intervals must intersect in any model 
of $G - N_S$, and therefore $Q(x) \cup Q(p)$ must be consecutive
in $M^*_K$.
Let us call this additional constraint a {\em non-probe - probe binding 
constraint} imposed by $x$ and $p$.  We can enforce this constraint
by adding a new row equal to $Q(x) \cup Q(p)$ to $M^+_K$.
(See Figure~\ref{fig:constraints}.)

\begin{figure}
\centerline{\includegraphics[]{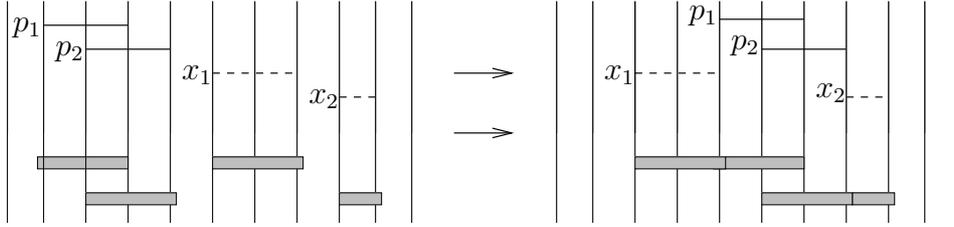}}
\caption{Enforcing binding constraints.  There is one column for each clique
of $G[P]$.  For each vertex $v$, $Q(v)$ is a row, and if
$G$ is a probe interval graph, this matrix has a consecutive-ones
ordering.  The solid lines depict the rows for two of the probes,
and the dashed lines depict the rows for two of the non-probes,
where $p_1$ is a neighbor of $x_1$ and $p_2$ is a neighbor of $x_2$
in $G$.  (Rows for other vertices are not depicted.)
These are {\em unfulfilled adjacencies}.
If $G$ is a probe interval graph, there is an ordering of
the columns where $p_1$ and $x_1$ are contained in adjacent columns and
and $p_2$ and $x_2$ are contained in adjacent columns.  This allows a new
column to be inserted between $p_1$ and $x_1$ where they can meet,
and similarly for $p_2$ and $x_2$.  These are {\em non-probe - probe
binding constraints}.  These constraints can be imposed on the
ordering of columns 
by inserting $Q(p_1) \cup Q(x_1)$ and 
$Q(p_2) \cup Q(x_2)$ as new rows and finding a consecutive-ones
ordering of the resulting matrix (shaded boxes).}\label{fig:constraints}
\end{figure}

Adding such a constraint for every such $x$ and 
$p$ will make $M$ too large for our time bound.
We show that a set of new rows with a linear
number of 1's is enough to enforce the non-probe - probe binding
constraints.

Let $x$ be a non-probe of $N_1$, and
let $c_i$ be the leftmost column of $Q(x)$ and let $c_j$
be the rightmost in (the yet unknown matrix) $M^*_K$.  We can divide $N(x) \setminus N_K(x)$ into the set
$Y_1$ of members that lie in columns to the left of $c_i$
and the set $Y_2$ that lie in columns to the right of $c_j$.
For each $p \in Y_1$, the rightmost column of $Q(p)$ is $c_{i-1}$;
the only way for $x$ and $p$ to be adjacent is to meet at a semi-clique
column between $c_{i-1}$ and $c_i$ in a normal model $M_N$.
Similarly, for each $p' \in Y_2$, the leftmost column of $Q(p')$
is $c_{j+1}$.  
No element of $Y_1$ is adjacent to any element of $Y_2$,
and $Y_1$ and $Y_2$ each induce complete subgraphs, since they
will contain a common endpoint of $x$ in $M_N$.
This implies that the same
$Y_1$ and $Y_2$ arise for $x$ in every normal model of $G - N_S$
(up to interchange).

Recall that the vertices are numbered from 1 through 
$n$. For two vertices $v$ and $u$, let $v \prec u$ denote that 
either $Q(v) \subset Q(u)$ or that $Q(v) = Q(u)$ 
and $v$ has a smaller vertex number than $u$ does; the numbers
serve as tie breakers.  Let $u \preceq v$ denote that $u \prec v$
or that $u = v$.

Since the members of $Y_1$ all end at the column to the left 
of $x$'s left endpoint and they all occupy consecutive cliques, it 
follows that for any $p,p' \in Y_1$, either 
$Q(p) \subseteq Q(p')$ or $Q(p') \subseteq Q(p)$.  
$Y_1$ induces a linear order in the $\prec$ relation.
It has a unique a minimal member $q$ in this relation. 
For example, for $x_2$ in Figure~\ref{fig:minConstraints},
$Y_1 = \{p_1, p_2, p_3\}$, $p_3 \prec p_2 \prec p_1$,
and $p_3$ is the unique minimal member of $Y_1$ in the $\prec$ relation.
Similarly, 
$Y_2$ has a unique minimal member $q'$ in the $\prec$ relation.
Also, $Q(q) \cap Q(q') = \emptyset$ since $q$
and $q'$ must lie on opposite sides of $x$.

By similar reasoning, each for each probe $p$,
the $\prec$ relation on non-probes that $p$ is bound to has at
most two nonadjacent minimal members $x_1$ and $x_2$.
Let us say that $x$ and $p$ are a {\em representative bound pair} if $p$ is
a minimal bound neighbor of $x$ and $x$ is a minimal bound neighbor of $p$
in the $\prec$ relation.  For example, in Figure~\ref{fig:minConstraints},
the two representative bound pairs are $\{p_3,x_2\}$ and $\{p_2,x_3\}$.

\begin{figure}
\centerline{\includegraphics[]{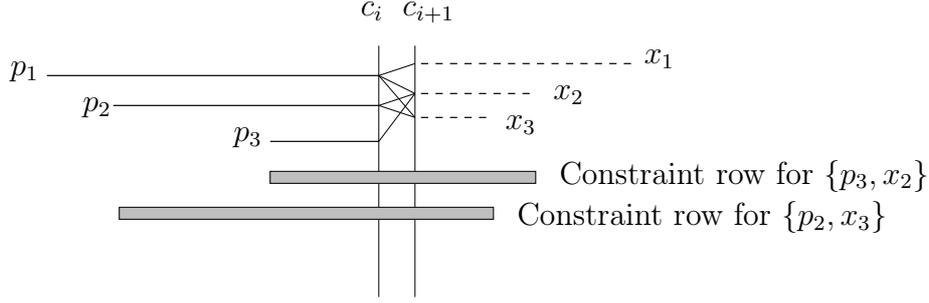}}
\caption{Representative bound pairs.  Probe $p_1$ is a neighbor of
non-probes $x_1$, $x_2$, and $x_3$; probe $p_2$ is a neighbor of $x_2$
and $x_3$, and probe $p_3$ is a neighbor of $x_2$.  That is six
binding constraints.  Since
all three probes share columns and all three non-probes share
columns, then if $G$ is a probe interval graph, the binding constraints
must force all three probes to occupy a column $c_i$ adjacent
to a column $c_{i+1}$ occupied by the non-probes.  The bound neighbor
of $x_2$ that minimizes the number of columns in this set is $p_3$
and the bound neighbor of $p_3$ that minimizes the number of columns
is $x_2$; they are each minimal to the other, so $\{p_2,x_2\}$ is a 
representative bound pair.
Similarly, $\{p_2,x_3\}$ is a representative bound pair.
Placing a constraint row only for each representative bound pair enforces
all binding constraints, since the minimal bound vertices $p_3$
and $x_3$ in the two sets each participate in a representative
bound pair.  Inserting one constraint row for each representative bound
pair in the entire matrix adds $O(n+m)$ 1's, since each vertex can participate
in at most two representative bound pairs, one at each of its endpoints,
and thereby contributes its 1's at most twice to constraint rows.
}\label{fig:minConstraints}
\end{figure}

We augment $M^+_K$ by adding a row for any representative pair $\{x, p\}$ that 
has $1$'s in the columns of $Q(x) \cup Q(p)$.   We show
below that doing this for all representative pairs adds $O(n+m)$ 1's
to the matrix, and enforces all non-probe - probe constraints,
not just those for representative bound pairs.

Let us now consider how to find the representative pairs
when we allow for the possibility that
$G$ is not a probe interval graph.  We have the endpoints
of each vertex in $M^+_K$.
First, for each $v \in P \cup N_1$ , we find its minimal bound neighbors 
in time proportional $|N(v)|$.
We create a list of the unfulfilled neighbors of $v$, and let $w_1$ be an arbitrary
neighbor of $v$ in the list.
For each unfulfilled neighbor $u$ in the list, we use the endpoints 
of the intervals corresponding to $Q(u)$ and $Q(w_1)$ to check whether
$Q(u)$ is disjoint from $Q(w_1)$, in which case it is not in $Y_1$,
or contains $Q(w_1)$, in which case it is in $Y_1$ and we eliminate it from
the list, or it is contained on $Q(w_1)$, in which case $u$ is in $Y_1$, we
eliminate it from the list, and let $w_1 = u$.
This takes $O(1)$ time per neighbor.
If a fourth case occurs, we reject $G$, since a necessary
condition described above is not met.  Otherwise, this
gives us $Y_1$ and the minimal neighbor in it for $v$.
If any unfulfilled neighbors remain, we find $Y_2$ and the minimal
member in it $w_2$ for $v$ in a similar way. If the list remains
nonempty following this, then $G$ is not a probe interval graph. 
  Summarizing:

\begin{lemma}\label{lem:Y1Y2}
In $O(n+m)$ time, we can either reject $G$, or find, for each $v \in P \cup N_1$, 
a partition of the unfulfilled neighbors of $v$ in $P \cup N_1$ into at most two 
sets, $Y_1$, $Y_2$, and label $v$ with $w,w'$,
such that $w$ is the only minimal member of $Y_1$,
$w'$ is the only minimal member of $Y_2$,
and $Q(w) \cap Q(w') = \emptyset$.
\end{lemma}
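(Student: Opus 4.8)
The plan is to analyze the scan described immediately before the lemma statement and show that it either rejects $G$ or outputs, for each $v\in P\cup N_1$, sets $Y_1,Y_2$ and labels $w,w'$ with the required properties, and that it runs in $O(n+m)$ time. First I would record the structural facts that the argument will lean on, all of which are available because $G$, hence $G-N_S$, is a probe interval graph; if $G$ is not a probe interval graph, the only obligation is that any rejection is sound and that a non-rejecting output still literally satisfies the stated combinatorial properties. Fix a normal model $M$ of $G-N_S$ (it exists by Lemma~\ref{lem:normalModel}). For $v\in P\cup N_1$ and an unfulfilled neighbor $u$ of $v$ in $P\cup N_1$ we have $Q(v)\cap Q(u)=\emptyset$, and in $M$ the block of clique columns spanned by $Q(u)$ lies entirely to the left of $v$'s left endpoint or entirely to the right of $v$'s right endpoint; this is the defining split into $Y_1$ and $Y_2$, argued in the text. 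Since the members of $Y_1$ all terminate at the column just left of $v$'s left endpoint and span consecutive cliques, their $Q$-sets are pairwise nested, so $\prec$ restricts to a linear order on $Y_1$ with a unique minimal element $q$; symmetrically $Y_2$ has a unique $\prec$-minimal element $q'$, and $Q(q)\cap Q(q')=\emptyset$ because $q$ and $q'$ sit on opposite sides of $v$ in $M$. By Theorem~\ref{thm:allModels} and Theorem~\ref{thm:invariantCols}, the partition $\{Y_1,Y_2\}$ and the membership of each unfulfilled neighbor are the same in every normal model, up to interchange of the two sides.

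I would then spell out why every primitive the scan uses costs $O(1)$. The leftmost and rightmost clique columns of $Q(u)$ for each $u\in P\cup N_1$ are exactly the endpoints of $u$'s row in the already-computed $M^+_K$; from two such pairs of column indices one reads off in constant time whether $Q(u)$ and $Q(u')$ are disjoint, whether one contains the other, or whether the two blocks properly overlap, because $M^+_K$ is consecutive-ones ordered. Using this, the list of unfulfilled neighbors of $v$ in $P\cup N_1$ is obtained by testing $Q(u)\cap Q(v)=\emptyset$ over $u\in N(v)\cap(P\cup N_1)$, and the scan then compares each remaining $u$ against a running representative $w_1$: it commits $u$ to $Y_1$ when $Q(u)\supseteq Q(w_1)$, commits $u$ to $Y_1$ and resets $w_1:=u$ when $Q(u)\subseteq Q(w_1)$ (with the smaller vertex number chosen on a tie, so that $w_1$ is always the $\prec$-minimum of the set committed so far), defers $u$ to a candidate $Y_2$ when $Q(u)\cap Q(w_1)=\emptyset$, and rejects $G$ when a proper overlap appears. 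Rerunning the scan on the deferred elements yields $w_2$; if this second scan also meets no proper overlap and leaves nothing over, the output is $Y_1$, $Y_2$, $w:=w_1$, $w':=w_2$, and otherwise we reject.

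For correctness I would argue three points. For the running time, each of the at most two scans touches every neighbor of $v$ a constant number of times, so the work for $v$ is $O(|N(v)|)$ and the total is $O(n+m)$, on top of the one-time extraction of the endpoint indices from $M^+_K$. For soundness of rejection, a proper overlap between two unfulfilled neighbors of $v$ in $P\cup N_1$, or a nonempty leftover after the second scan, contradicts the structural facts of the first paragraph, so either event certifies that $G$ is not a probe interval graph. For the validity of a non-rejecting output, $w_1$ is maintained as the $\subseteq$-minimum of the set committed to $Y_1$ (with vertex number breaking ties), so $w$ is the unique $\prec$-minimal member of $Y_1$ at termination, and likewise $w'$ for $Y_2$; moreover $w'$ was deferred to $Y_2$ by passing a ``$Q(\cdot)\cap Q(w_1)=\emptyset$'' test whose right side only ever shrinks afterwards, down to $Q(w)$, so $Q(w)\cap Q(w')=\emptyset$; and $Y_1\cup Y_2$ is exactly the computed unfulfilled-neighbor list. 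Finally, when $G$ is a probe interval graph the structural facts force the scan to reconstruct the two chains of $M$: no proper overlap is ever seen, no element is left over, and $w,w'$ coincide with $q,q'$. The hard part will be this last comparison-against-one-evolving-representative step: because the scan never compares an element with every other, one must use the pairwise nestedness of $Y_1$ (and of $Y_2$) from the structural facts to see that a single pass suffices, and separately verify that, even when $G$ fails to be a probe interval graph but the scan does not reject, the set it calls $Y_1$ is still $\prec$-linearly ordered with $w$ as unique minimum and $Q(w)\cap Q(w')=\emptyset$, so the lemma's conclusion holds verbatim.
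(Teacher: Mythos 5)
Your proposal is correct and follows essentially the same route as the paper: the same single-pass scan against an evolving representative $w_1$ using $O(1)$ endpoint comparisons in the consecutive-ones-ordered $M^+_K$, the same rejection on proper overlap or on a nonempty leftover after the second pass, and the same structural justification (nestedness of $Q$-sets within each of $Y_1,Y_2$ in a normal model). Your added care about why the output still satisfies the stated properties verbatim when $G$ is not a probe interval graph but the scan does not reject (the invariant that the final representative is a $\subseteq$-minimum of everything committed, and that deferral tests only tighten) is a correct filling-in of a detail the paper leaves implicit, not a different approach.
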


If we do not reject $G$, we have labeled each
vertex $v$ with at most two minimal bound neighbors
two minimal bound neighbors $w$ and $w'$.  We identify $\{v,w\}$ 
as a representative pair if $v$ is also one of the minimal bound
neighbors of $w$.  Similarly, we identify $\{v,w'\}$ if as a representative 
bound pair of $v$ is also a minimal bound neighbor of $w'$.
Since each vertex has at most two bound neighbors, there
are $O(n)$ pairs to perform this test on.

This gives the following:

\begin{lemma}\label{lem:repPairs}
In $O(n+m)$ time, we can either reject $G$, or find
the representative pairs for the non-probe - probe binding
constraints.
\end{lemma}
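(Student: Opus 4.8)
The plan is to observe that Lemma~\ref{lem:Y1Y2} already carries out the substantive work, and that extracting the representative pairs from its output is an $O(n)$ postprocessing step. First I would invoke Lemma~\ref{lem:Y1Y2}. If it reports that $G$ is not a probe interval graph, we reject and halt, which is permitted by the ``or else reject $G$'' clause of the statement. Otherwise we have, for each $v \in P \cup N_1$, a label consisting of at most two vertices $w$ and $w'$, where $w$ is the unique $\prec$-minimal member of $Y_1$ for $v$ and $w'$ the unique $\prec$-minimal member of $Y_2$ for $v$ (one or both of which may be absent if the corresponding side of $v$ has no unfulfilled neighbor, and by the lemma $Q(w) \cap Q(w') = \emptyset$). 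Write $B(v)$ for this set of at most two vertices; by construction $B(v)$ is exactly the set of minimal bound neighbors of $v$ in the $\prec$ relation that appears in the definition of a representative bound pair. Note that each $B(v)$ is stored with $v$ and has $O(1)$ size.

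Next, for each $v \in P \cup N_1$ and each $u \in B(v)$, I would test in $O(1)$ time whether $v \in B(u)$. If this test succeeds, then $u$ is a minimal bound neighbor of $v$ and $v$ is a minimal bound neighbor of $u$, so $\{v,u\}$ satisfies the definition of a representative bound pair; conversely, every representative bound pair $\{x,p\}$ has $p \in B(x)$ and $x \in B(p)$, so it is discovered when we process $v = x$ (and, redundantly, again when we process $v = p$). Hence this procedure enumerates exactly the representative pairs, each found at most twice. There are at most $2n$ pairs $(v,u)$ to test, so the enumeration costs $O(n)$ time, and the duplicates are removed in $O(n)$ time --- for instance by emitting $\{v,u\}$ only when $v$ has the smaller vertex number, or by a radix sort of the at most $2n$ discovered ordered pairs of integers in $\{1,\dots,n\}$.

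Combining the $O(n+m)$ cost of Lemma~\ref{lem:Y1Y2} with the $O(n)$ postprocessing gives the claimed $O(n+m)$ bound. I do not expect a genuine obstacle here: the only points requiring care are verifying that the labels produced by Lemma~\ref{lem:Y1Y2} coincide with the minimal bound neighbors used in the definition of a representative pair, and checking that the deduplication of the discovered pairs stays within the linear bound --- both of which are routine once Lemma~\ref{lem:Y1Y2} is in hand.
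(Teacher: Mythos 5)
Your proposal is correct and follows essentially the same route as the paper: the paper likewise obtains the at most two minimal bound neighbors per vertex from Lemma~\ref{lem:Y1Y2} and then declares $\{v,w\}$ a representative pair exactly when the minimality is mutual, noting that only $O(n)$ such mutuality tests are needed. Your additional remarks on deduplication and on matching the labels to the definition of representative pairs are routine and consistent with the paper's argument.
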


\subsubsection{Probe - Probe Binding Constraints} \label{sect:PP}

Consider $x \in N_2$. In this case, $Q(x) = \emptyset$ hence
$N_K(x) = \emptyset$.  All neighbors of $x$ are unfulfilled.
Then $x$ and its adjacencies must be represented exclusively by semi-clique columns.

Since $x \in N_2$, $N(x)$ does not induce a complete subgraph in $G$.
Let $p$ and $p'$ be two neighbors that are nonadjacent to each other.
$Q(p) \cap Q(p') = \emptyset$.  Their intervals must intersect $x$'s
in any model $M_N$ of $G = N_S$.  Therefore, $Q(p) \cup Q(p')$ must
be consecutive in $M^*_K$, so that semi-clique columns containing $x$ and $p$,
and $x$ and $p'$, can be placed in between $Q(p)$ and $Q(p')$.  
This is a {\em probe - probe binding constraint}.

Again, however, doing this for all such pairs $\{p,p'\}$ 
might add more
than $O(n+m)$ 1's.  It again suffices to add such rows for
only a subset of such pairs $\{p,p'\}$.

In a model of $G - N_S$, the 1's in row $x$
therefore lie between two consecutive clique columns $c_{i-1}$ and $c_i$.
Suppose that $c_{i-1}$ lies to the left of $c_i$. 
Let $Y_1 = N(x) \setminus S(c_i)$ and let $Y_2 = N(x) \setminus S(c_{i-1})$. 
The sets $Y_1$ and $Y_2$ satisfy $Y_1 \subseteq S(c_{i-1})$, $Y_2 \subseteq S(c_i)$ 
and $Y_1 \cap Y_2 = \emptyset$. Also, since $x$ is not simplicial, 
neither $Y_1$ nor $Y_2$ is empty.  Although we 
used a specific model to define $Y_1$ and $Y_2$ for $x$, these 
sets are unique for every $x \in N_2$, up to interchange between the two.

As with the non-probe - probe binding constraints, for each non-probe $x \in N_2$,
$Y_1$ and $Y_2$ are ordered by the $\prec$ relation.
We find the minimal members $p$ and $p'$ of $Y_1$ and $Y_2$
in the $\prec$ relation and make them {\em bound partners}.

For each probe, the bound partners can also be partitioned
into at most two sets that are ordered by the $\prec$ relation,
for the same reasons.
A {\em representative pair} is two bound partners that are each
minimal in the $\prec$ relation over bound partners of the other.

Let us now consider how to find the representative pairs
when we allow for the possibility that $G$ is not a probe
interval graph.
We apply the algorithm from the end of the previous section to verify
that for each
$x \in N_2$, $x$ has at most two minimal bound neighbors 
in the $\prec$ relation.
Similarly, for each probe $p$, we verify that $p$ has at most
two minimal bound partners in the $\prec$ relation.
We reject $G$ if these conditions do not apply, as we have
shown that they are necessary.  

The number of bound partners of any probe $p$ is bounded by the number
of neighbors in $N_2$, so it is $O(|N(p)|)$.  This is $O(m)$
over all probes.  We assign at most two minimal bound partners $q$ and $q'$ to
each probe $p$ in $O(n+m)$ time.  For $q$, we check whether $p$
is also one of its two minimal bound partners, and include $\{p,q\}$
as a representative pair if it is, and similarly for $\{p,q'\}$.

Proceeding as in the case of non-probe - probe constraints
gives us analogues of Lemmas~\ref{lem:Y1Y2} and~\ref{lem:repPairs}.
 
\begin{lemma}\label{lem:Y1Y2-2}
In $O(n+m)$ time, we can either reject $G$, or find, for each $v \in P$, 
a partition of the bound partners of $v$ into at most two sets, $Z_1$, $Z_2$, 
and label $v$ with $w_1,w_2$,
such that $w_1$ is the only minimal member of $Z_1$ in the $\prec$ relation,
$w_2$ is the only minimal member of $Z_2$ in the $\prec$ relation,
and $Q(w) \cap Q(w') = \emptyset$.
\end{lemma}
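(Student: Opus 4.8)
The plan is to mirror, essentially line for line, the argument that precedes Lemma~\ref{lem:Y1Y2}, with probes now playing the role of the vertices $v$ and bound partners playing the role of unfulfilled neighbors. The computation will have two phases. In the first phase I will make the bound partnerships explicit: for each $x\in N_2$ I will compute the decomposition of $N(x)$ into the two $\prec$-chains $Y_1(x)$ and $Y_2(x)$, together with their $\prec$-minimal members $p$ and $p'$, and record $\{p,p'\}$ as the bound partnership arising from $x$. In the second phase I will do the same one level up: for each probe $v$ I will gather the probes that were paired with $v$ in the first phase and decompose that list into $Z_1$ and $Z_2$ with their $\prec$-minimal members $w_1$ and $w_2$.

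Both decompositions will be computed by the same list-scan used before Lemma~\ref{lem:Y1Y2}. Using the clique-column endpoints stored in $M^+_K$, we can test in $O(1)$ time whether two $Q$-sets are disjoint, nested, or properly overlapping. To decompose a list I pick an arbitrary element $w$, scan the rest, route each element whose $Q$-set is comparable with $Q(w)$ into the first chain while updating $w$ downward so that it always holds the current $\prec$-minimum, and set the $Q$-disjoint elements aside; I then rescan the set-aside elements to build the second chain. I reject $G$ the moment a proper overlap is encountered, if any element survives the second pass, or if the two minima fail to have disjoint $Q$-sets, since Section~\ref{sect:PP} shows each of these situations is impossible when $G$ is a probe interval graph. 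For a fixed $x$ the scan costs $O(|N(x)|)$, so the first phase is $O(m)$; each probe's bound-partner list has length at most $|N(v)\cap N_2|\le|N(v)|$, so the second phase is also $O(m)$, for $O(n+m)$ overall. Correctness of the scan — that it extracts exactly the $\prec$-minimum of each chain — will be immediate once we know that within a chain the $Q$-sets are linearly ordered by $\subseteq$ and that $\prec$ breaks the remaining ties by vertex number.

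The main obstacle will not be the procedure but the justification of the rejections: I must show that when $G$ is a probe interval graph, every $N(x)$ with $x\in N_2$, and every probe's set of bound partners, really does split into at most two $\prec$-chains whose minimal members have disjoint $Q$-sets. This is the structural content already developed in Section~\ref{sect:PP}: in any model of $G-N_S$ the row of $x\in N_2$ occupies more than one column and lies strictly between two consecutive clique columns $c_{i-1}$ and $c_i$, so its neighbors split according to whether $S(c_{i-1})$ or $S(c_i)$ contains them, each side being nested because its members must share a common endpoint of $x$, and the two sides being disjoint because one side's $Q$-sets abut $c_{i-1}$ from the left and the other's abut $c_i$ from the right; the same squeezing, applied to the semi-clique columns forced by a probe $v$, places all of $v$'s bound partners on the side of $v$ opposite to where $v$'s interval ends, again in a single chain. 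Granting those facts — which are exactly the ones invoked in the non-probe–probe case — the lemma follows by transcribing the proof of Lemma~\ref{lem:Y1Y2}.
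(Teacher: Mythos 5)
Your proposal is correct and follows essentially the same route as the paper: the paper likewise derives the two-chain structure of $N(x)$ for $x\in N_2$ from the fact that $x$ is squeezed between two consecutive clique columns, extracts the two $\prec$-minimal probes as bound partners, and then explicitly says to "apply the algorithm from the end of the previous section" (the same comparable/disjoint list scan) one level up to the bound-partner lists of each probe, with the same $O(|N(v)|)$ per-vertex accounting. Your write-up merely fills in the details that the paper compresses into "proceeding as in the case of non-probe -- probe constraints."
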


A representative pair is two bound partners that are each a minimal
bound partner of the other.  Since each vertex has at most two
bound partners, there are $O(n)$ pairs to test for this condition.

\begin{lemma}\label{lem:repPairs-2}
In $O(n+m)$ time, we can either reject $G$, or find
the representative pairs for the probe - probe binding
constraints.
\end{lemma}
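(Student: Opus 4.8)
The plan is to mirror the argument already carried out for the non-probe--probe case in Lemma~\ref{lem:repPairs}, with $N_2$-non-probes playing the role of $N_1$-non-probes and \emph{bound partner} playing the role of \emph{bound neighbor}. First I would establish the analogue of Lemma~\ref{lem:Y1Y2-2} computationally: for each $x \in N_2$, scan its neighbor list once, using the column numbers of the probe endpoints already computed in $M^+_K$ to run the same chaining test used at the end of Section~\ref{sect:NP-P}. Concretely, pick an arbitrary neighbor $w_1$ of $x$; for each further neighbor $u$ of $x$, compare $Q(u)$ with the current $Q(w_1)$ using their endpoints: if they are disjoint, put $u$ into $Y_2$ and keep a running minimal member of $Y_2$ in the $\prec$ relation; if one of $Q(u), Q(w_1)$ contains the other, keep $u$ in $Y_1$ and update the running minimal member, advancing $w_1$ when $Q(u)$ is the smaller. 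If a fourth case arises, or more than two chains are produced, or the two running minimal members fail to have disjoint $Q$-sets, we reject $G$, since Section~\ref{sect:PP} showed these are necessary conditions. This costs $O(|N(x)|)$ for each $x \in N_2$, hence $O(m)$ in total, and labels every $x \in N_2$ with at most two minimal bound partners $p, p'$ with $Q(p) \cap Q(p') = \emptyset$.

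Next I would perform the symmetric pass over probes. For each probe $p$, form the list of non-probes $x \in N_2$ that named $p$ as a minimal member of their $Y_1$ or $Y_2$; this list has length $O(|N(p) \cap N_2|)$, so over all probes the lists have total length $O(m)$. Running the same chaining test on this list --- now comparing the $Q(x)$'s of $p$'s bound partners --- either rejects $G$ or partitions the bound partners of $p$ into the two chains $Z_1, Z_2$ of Lemma~\ref{lem:Y1Y2-2} and labels $p$ with their unique minimal members $w_1, w_2$, in $O(n+m)$ time overall.

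Finally, by definition a representative pair is a pair $\{p,x\}$, $p \in P$ and $x \in N_2$, such that $x$ is a minimal bound partner of $p$ and $p$ is a minimal bound partner of $x$. Since each probe has at most two minimal bound partners and each $x \in N_2$ has at most two, there are $O(n)$ candidate pairs to examine: for each candidate $\{p,x\}$ we look up $x$'s at most two minimal bound partners and test in $O(1)$ time whether $p$ is one of them. Reporting exactly the candidates that pass this test yields all representative pairs in $O(n)$ additional time, for an $O(n+m)$ bound overall.

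The main obstacle is not the time analysis, which is a routine amortization against $\sum_{x}|N(x)|$ and $\sum_p |N(p)|$, but making the two passes mutually consistent: a single vertex may appear in $Y_1$ from one non-probe's point of view and in $Z_2$ from a probe's point of view, so the ``minimal in the $\prec$ relation'' test must be implemented identically in both passes, and all the necessary-condition checks (at most two chains on each side, disjointness of the two minimal members' $Q$-sets, the chaining test never reaching a fourth case) must be performed so that, whenever the algorithm does not reject $G$, the labels it produces really do constitute a valid set of representative pairs in the sense defined before the lemma.
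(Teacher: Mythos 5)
Your first pass (computing, for each $x \in N_2$, the split of $N(x)$ into two $\prec$-chains and the minimal member of each) matches what the paper does, and the amortized time analysis is fine. The problem is with what you then call a \emph{bound partner} and a \emph{representative pair}. In the probe--probe case these are pairs of \emph{probes}: for each $x \in N_2$, the minimal member $p$ of $Y_1$ and the minimal member $p'$ of $Y_2$ are declared bound partners \emph{of each other}, and a representative pair is a pair $\{p,q\}$ of probes each of which is a minimal bound partner of the other. Your proposal instead takes the bound partners of a probe $p$ to be the non-probes $x \in N_2$ that named $p$, and defines a representative pair to be a probe--non-probe pair $\{p,x\}$ with $x \in N_2$. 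That is the wrong object: the constraint row that must ultimately be added for a representative pair $\{u,v\}$ is $Q(u) \cup Q(v)$, and for $x \in N_2$ we have $Q(x) = \emptyset$, so a row $Q(p) \cup Q(x) = Q(p)$ is already present in the matrix and imposes no constraint at all. The whole point of the probe--probe constraints is to force $Q(p)$ and $Q(p')$ to be consecutive for two probes $p,p'$ lying on opposite sides of $x$, so that a semi-clique column carrying $x$ can be inserted between them; a pair involving $x$ itself cannot express this.

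The misidentification also breaks your second pass. You propose to chain the list of non-probes that named $p$ by comparing ``the $Q(x)$'s of $p$'s bound partners,'' but all of these sets are empty, so the $\prec$ relation on them degenerates to the vertex-number tie-break, and the ``two chains'' and ``minimal members'' you extract carry no structural meaning. What the paper does instead is: for each probe $p$, collect the \emph{probes} $q$ that became bound partners of $p$ via some $x \in N_2$ (that is, $p$ and $q$ are the minimal members of $Y_1$ and $Y_2$ for that $x$), verify that these probes fall into at most two $\prec$-chains as in Lemma~\ref{lem:Y1Y2-2}, take the minimal member of each chain, and output $\{p,q\}$ as a representative pair exactly when each is a minimal bound partner of the other. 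The number of bound partners of $p$ is bounded by $|N(p) \cap N_2|$, which gives the $O(n+m)$ bound. Your accounting would carry over to this corrected version, but as written the algorithm does not produce the representative pairs that the lemma asks for.
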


\subsection{Finding $M^*_K$}

Let $M'_K$ be a consecutive-ones ordering of the matrix obtained by adding
$Q(u) \cup Q(v)$ as a row to $M^+_K$
for each non-probe - probe or probe - probe representative pair $\{u,v\}$, and 
let $M^*_K = M'_K[V \setminus N_S]$.

\begin{lemma}\label{lem:MPTime}
It takes $O(n+m)$ time either to reject $G$ or to compute a consecutive-ones
ordering $M^*_K$ of $M^+_K$ that observes all non-probe - probe and probe - probe constraints,
whether or not $G$ is a probe interval graph.
\end{lemma}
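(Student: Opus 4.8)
The plan is to treat this as an assembly-and-accounting lemma: first check that building $M'_K$ from $M^+_K$ stays within the $O(n+m)$ budget, and then — the substantive part — argue that installing constraint rows only for representative pairs already forces \emph{every} binding constraint, so that after deleting those rows the matrix $M^*_K$ is a consecutive-ones ordering of $M^+_K$ respecting all of them.

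For the time bound I would first observe that $M^+_K$ has $O(n)$ rows, $O(n)$ columns, and $O(n+m)$ ones: $M_K$ is a clique matrix of the chordal graph $G[P]$, so it has $O(n+m)$ ones, and the row added for each $x\in N_1\cup N_2$ is $Q(x)$, whose size is $O(|N(x)|)$ (evident from the sweep of Section~\ref{sect:n1n2n3}), contributing $O(m)$ further ones. By Lemma~\ref{lem:repPairs} and Lemma~\ref{lem:repPairs-2}, in $O(n+m)$ time we either reject $G$ or obtain all representative pairs for the non-probe–probe and probe–probe binding constraints. Since each vertex $v$ lies in at most two representative pairs (one on each side $Y_1,Y_2$, resp. $Z_1,Z_2$), the rows $Q(u)\cup Q(v)$ add $O(n)$ rows and at most $\sum_v 2|Q(v)| = O(n+m)$ ones. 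The augmented matrix thus still has $O(n)$ rows, $O(n)$ columns, and $O(n+m)$ ones, so Booth and Lueker's consecutive-ones algorithm runs on it in $O(n+m)$ time and either returns a consecutive-ones ordering $M'_K$ or reports that none exists, in which case we reject $G$; deleting the constraint rows to form $M^*_K = M'_K[V\setminus N_S]$ is then a further $O(n+m)$ step. None of these steps assumes $G$ is a probe interval graph — each simply rejects when an expected property fails — which gives the ``whether or not'' clause.

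To justify the rejection, suppose $G$ is a probe interval graph. By Lemma~\ref{lem:normalModel}, $G-N_S$ has a normal model; by Theorem~\ref{thm:invariantCols} (and the discussion following the construction of $M^+_K$) its clique columns form an ordering of the columns of $M^+_K$ in which each $x$ and each unfulfilled neighbor $p$ eventually meet in an inserted semi-clique column between the blocks $Q(x)$ and $Q(p)$. Hence that ordering is consecutive-ones and makes every $Q(x)\cup Q(p)$ and every probe–probe set $Q(p)\cup Q(p')$ consecutive — in particular the representative-pair rows — so Booth–Lueker would have succeeded. Therefore failure of the algorithm certifies that $G$ is not a probe interval graph, and symmetrically all earlier rejections are sound.

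The step I expect to be the real obstacle is proving that $M^*_K$ observes \emph{all} binding constraints, not just those of representative pairs. For this I would use the chain structure established in Sections~\ref{sect:NP-P} and~\ref{sect:PP}: given a non-probe $x$ and an unfulfilled neighbor $p$ on, say, the left side $Y_1$, the $\prec$-minimal element $q$ of $Y_1$ has $Q(q)\subseteq Q(p)$, the $\prec$-minimal neighbor $x'$ of $q$ on that side has $Q(x')\subseteq Q(x)$, and $\{q,x'\}$ is a representative pair. In any consecutive-ones ordering of the augmented matrix, $Q(q)\cup Q(x')$ is consecutive and $Q(q),Q(x')$ are disjoint; $Q(x)$ is consecutive, contains $Q(x')$, and is disjoint from $Q(p)\supseteq Q(q)$, hence disjoint from $Q(q)$; symmetrically for $Q(p)$ relative to $Q(x)$. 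A short argument then pins the block of $Q(x)$ to end exactly where $Q(x')$ ends and the block of $Q(p)$ to begin exactly where $Q(q)$ begins, so $Q(x)$ and $Q(p)$ are immediately adjacent and $Q(x)\cup Q(p)$ is consecutive. The probe–probe case is identical after replacing $Y_1,Y_2$ by $Z_1,Z_2$ and swapping the roles of probe and non-probe. Combining this redundancy argument with the time accounting above yields Lemma~\ref{lem:MPTime}.
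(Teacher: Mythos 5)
Your time accounting and the soundness of the rejection step match the paper's proof essentially verbatim: representative pairs are found via Lemmas~\ref{lem:repPairs} and~\ref{lem:repPairs-2}, each vertex contributes $O(|Q(v)|)$ 1's to $O(1)$ constraint rows, and Booth--Lueker runs in time linear in the $O(n+m)$ size of the augmented matrix. You also correctly identify that the substantive content is showing that the representative-pair rows force \emph{all} binding constraints, and your ``adjacency propagation'' step (a consecutive block $Q(x)\supseteq Q(x')$ disjoint from $Q(q)$ must extend away from $Q(q)$, so adjacency of $Q(q)$ and $Q(x')$ lifts to adjacency of $Q(q)$ and $Q(x)$) is exactly the geometric engine the paper uses.

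The gap is your claim that two rounds of minimization land on a representative pair: you take $q$ to be the $\prec$-minimal element of $Y_1$ for $x$, then $x'$ the $\prec$-minimal bound neighbor of $q$ on that side, and assert that $\{q,x'\}$ is a representative pair. That requires $q$ to be the minimal bound neighbor of $x'$, which does not follow: $x'$ may have bound neighbors that are not bound neighbors of $x$ (being adjacent to $x'$ says nothing about adjacency to $x$), so some $q''\prec q$ can be bound to $x'$ without contradicting the minimality of $q$ for $x$. Concretely, with clique columns $c_0,c_1,c_2,c_3$, probes $Q(p_0)=\{c_1\}$, $Q(p_1)=\{c_0,c_1\}$, and non-probes $Q(x_1)=\{c_2\}$, $Q(x_2)=\{c_2,c_3\}$, where $x_1$ is adjacent to $p_0$ and $p_1$ but $x_2$ is adjacent only to $p_1$, your chain from the constraint $\{p_1,x_2\}$ gives $q=p_1$ and $x'=x_1$, yet $\{p_1,x_1\}$ is not a representative pair (the minimal bound neighbor of $x_1$ is $p_0$); the only constraint row is $Q(p_0)\cup Q(x_1)$, and reaching $Q(p_1)\cup Q(x_2)$ needs a chain of length three. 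The paper closes this by induction on the number of 1's in $Q(u)\cup Q(v)$: if the pair is not itself a constraint row, replace one endpoint by a $\prec$-smaller minimal bound neighbor, apply the inductive hypothesis to the strictly smaller constraint, and then use exactly your propagation step. Your argument has all the pieces for this induction; it just stops after two steps when the chain can be arbitrarily long.
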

\begin{proof}
We find the representative pairs $\{u,w\}$ in $O(n+m)$ time by Lemmas~\ref{lem:repPairs}
and~\ref{lem:repPairs-2}.  We insert $Q(u) \cup Q(w)$ as a new row to $M^+_K$.
Since each vertex $v$ is a member of at most two non-probe - probe representative
pairs by Lemma~\ref{lem:Y1Y2}, and, similarly, at most two probe - probe representative
pairs, by Lemma~\ref{lem:Y1Y2-2}, it contributes at most $4|Q(v)|$ 1's to the new rows.
Over all $v$, that is $O(n+m)$ 1's.

If the resulting matrix does not have a consecutive-ones ordering, then
it is impossible to satisfy the probe - probe and probe - non-probe
binding constraints for even the representative pairs, and we can reject $G$.

Otherwise, we find a consecutive-ones ordering $M'_K$ of it
in $O(n+m)$ time, since the matrix has $O(n+m)$ 1's.
Let $M^*_K = M'_K[V \setminus N_S]$.  All conditions of the lemma but the 
last are now immediate.

Suppose a pair $u,v$ is bound by non-probe - probe constraint.  We show
that $Q(u) \cup Q(v)$ is consecutive in $M^*_K$ even if $\{u,v\}$ is not
a representative pair.
If $Q(u) \cup Q(v)$ is a constraint row of $M'_K$, then this
is immediate.
Otherwise, of the two minimal bound neighbors of $v$,
let $u'$ be one such that $u' \preceq u$.  
Of the two minimal bound neighbors of $u$, let $v'$ be the one
such that $v' \preceq v$.  The existence of $u'$ and $v'$ follows
from Lemma~\ref{lem:Y1Y2}.  Since $Q(u) \cup Q(v)$
is not a constraint row, $Q(v') \subset Q(v)$
or $Q(u') \subset Q(u)$.  Suppose without
loss of generality that $Q(v') \subset Q(v)$.
We may assume by induction on the number of 1's in the constraint
that $Q(v') \cup Q(u)$ is consecutive in $M'_K$.
Since $Q(v') \subset Q(v)$ and $Q(v)$ is disjoint
from $Q(u)$, $Q(u) \cup Q(v)$ is also consecutive
in $M'_K$.

All non-probe - probe binding constraints are satisfied in $M'_K$.
Similarly, if the pair $p,q$ is bound by a probe - probe constraint, $Q(p) \cup Q(q)$
is consecutive.  The proof is identical, except that it is applied to bound partners.
All probe - probe binding constraints are satisfied in $M'_K$, hence in $M^*_K$.
\end{proof}

\subsubsection{Sufficiency of the constraints}\label{sect:sufficiency}

By Theorem~\ref{thm:allModels}, the orderings of columns of
(the yet unknown matrix) $M_N$ given
by $T(M_N)$ are all normal models of $G - N_S$.  Since the columns of $M^+_K$
are the set of clique columns, $C_K$, of every normal model, it follows
that $T(M_N)[C_K]$ gives the set of orderings of columns of $M^+_K$ that are
submatrices of normal models of $G - N_S$.  We prove that the non-probe - probe
and probe - probe constraints are sufficient by showing that
$T(M'_K) \equiv T(M_N)[C_K]$.

The following algorithm is not meant to be efficient; it is a tool for proofs.  
(See Figure~\ref{fig:colDelete}.)

\begin{figure}
\centerline{\includegraphics[]{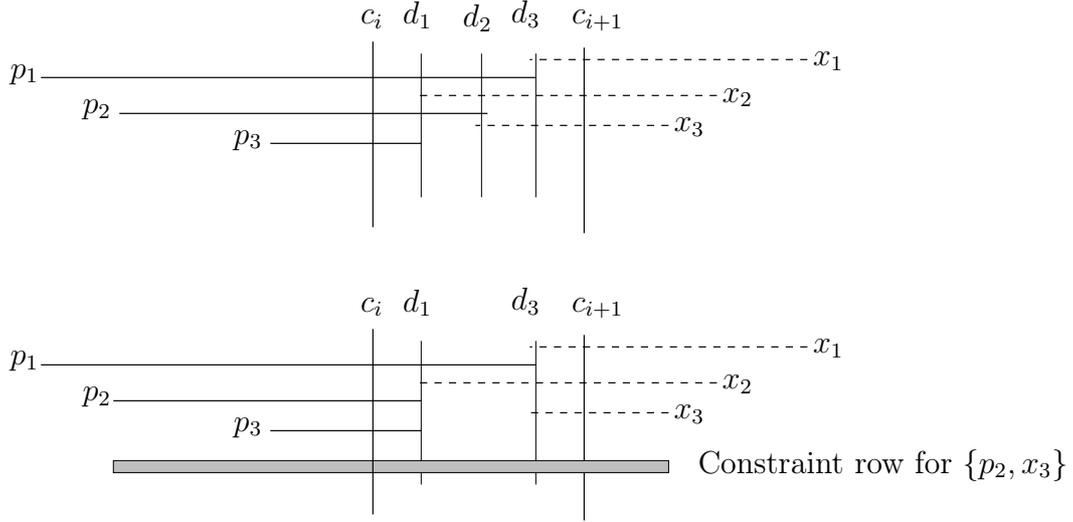}}
\caption{The action of Algorithm~\ref{alg:columnDelete} when semi-clique columns
are deleted from a normal model of $G - N_S$.  When the algorithm deletes
column $d_2$, it adds a constraint row for each pair $(\ell,r)$, where
$\ell$ has its right endpoint and $r$ has its left endpoint in the deleted
column.  In this case, this is just $(p_2,x_3)$.  This simulates the effect
of restricting the PQ tree.
When all of $d_1, d_2, d_3$ are deleted, the constraint rows it has added
are just the binding constraints for Figure~\ref{fig:minConstraints},
not just the ones for representative pairs.  
Doing this for all semi-clique columns shows that $T(M'_K) \equiv T(M_N)[C_K]$,
which means that the binding constraints are sufficient to make $M^*_K$
a submatrix of a normal model $M_N$ of $G - N_S$.
}\label{fig:colDelete}
\end{figure}

\begin{algorithm}\label{alg:columnDelete}

Given a consecutive-ones ordered clique matrix $M$,
delete a column $c$ from $M$ 
and add new rows so that for the resulting matrix 
$M'$, $T(M') \equiv T(M) - c$.

{\bf Precondition:}  {\em Column $c$
contains both a proper right endpoint and a proper left endpoint in $M$}

\begin{itemize}
\item Let $L$ be the rows with proper right endpoints in 
$c$ and let $R$ be the rows with proper left endpoints in $c$.  For 
each element $(\ell,r)$ of $L \times R$,
insert $S(\ell) \cup S(r)$ as a {\em constraint row}.  

\item Let $M''$ be the result.  Delete column $c$ from $M''$, yielding $M'$.
\end{itemize}
\end{algorithm}

\begin{lemma}\label{lem:columnDelete}
Algorithm~\ref{alg:columnDelete} is correct.
\end{lemma}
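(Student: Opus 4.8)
The plan is to prove the two inclusions $T(M') \preceq T(M) - c$ and $T(M) - c \preceq T(M')$ separately, working with the characterization of $\Pi$ via $\mathcal{F}$ (Lemma~\ref{lem:FPerp}) or directly via the consecutive-sets characterization of valid orderings. Throughout, note that $M$ is a consecutive-ones \emph{ordered} clique matrix, so its PQ tree $T(M)$ corresponds to this fixed ordering, and the precondition tells us that column $c$ lies strictly between the endpoints of each row in $L$ and each row in $R$: rows of $L$ have their right endpoint \emph{at} $c$ and extend to the left, rows of $R$ have their left endpoint \emph{at} $c$ and extend to the right, while any row with $c$ in its interior extends properly on both sides. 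The key observation is that deleting $c$ and adding the constraint rows $S(\ell)\cup S(r)$ for $(\ell,r)\in L\times R$ exactly simulates the effect Algorithm~\ref{alg:PQRestrict} has when it removes the leaf $c$: the only orderings that $T(M)$ permits but $T(M)-c$ forbids are those in which some column formerly ``trapped'' between $S(\ell)$ and $S(r)$ by the presence of $c$ now escapes between them.

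For the easy direction, $T(M') \preceq T(M)-c$: take any $\pi \in \Pi(T(M'))$. Since $M' = M''[][C(M)-c]$ and $M''$ contains all rows of $M$ restricted to $C(M)-c$, every row of $M$ other than $c$ is consecutive under $\pi$ in the column set $C(M)-c$. I claim $\pi$ extends to a consecutive-ones ordering of $M$ itself by reinserting $c$ in the appropriate spot; indeed $S(M,c)$ is exactly $\{r : c \in S(r)\}$, which is $L \cup R \cup \{\text{rows with } c \text{ interior}\}$, and one checks that the constraint rows $S(\ell)\cup S(r)$ force, for every such $\ell,r$, that the columns of $S(\ell)\setminus\{c\}$ and $S(r)\setminus\{c\}$ abut, leaving a unique gap where $c$ can be placed to make all of $L$, $R$, and the interior rows simultaneously consecutive. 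This shows $\pi$ is the restriction to $C(M)-c$ of some $\pi' \in \Pi(T(M))$, i.e. $\pi \in \Pi(T(M))[C(M)-c] = \Pi(T(M)-c)$.

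For the reverse direction, $T(M)-c \preceq T(M')$: take $\pi \in \Pi(T(M)-c)$, so $\pi = \pi'[C(M)-c]$ for some consecutive-ones ordering $\pi'$ of $M$. Every row of $M'$ that came from $M$ is automatically consecutive under $\pi$. It remains to verify each \emph{constraint} row $S(\ell)\cup S(r)$ is consecutive under $\pi$. In $\pi'$, both $S(\ell)$ and $S(r)$ are consecutive and both contain the column $c$; since $c$ is a right endpoint of $\ell$ and a left endpoint of $r$, in $\pi'$ the block $S(\ell)$ ends at the position of $c$ and the block $S(r)$ begins at the position of $c$, so $S(\ell)\cup S(r)$ is a consecutive block with $c$ somewhere in its interior (or at an end if one side is degenerate). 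Deleting $c$ leaves $S(\ell)\cup S(r) \setminus \{c\}$ consecutive under $\pi$, which is exactly the constraint row of $M'$. Hence $\pi \in \Pi(T(M'))$.

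The main obstacle I anticipate is the easy direction's claim that the $L\times R$ constraints are \emph{enough} to force $c$ back in — one must argue that they pin down not just pairwise abutment but a globally consistent single gap. The clean way is: let $A = \bigcup_{\ell\in L} (S(\ell)\setminus\{c\})$ be the union of the left-reaching blocks and $B = \bigcup_{r\in R}(S(r)\setminus\{c\})$ the union of the right-reaching blocks; the constraints force $A$ consecutive, $B$ consecutive (each is a nested union since all of $L$ share the endpoint $c$ and are rows of a consecutive-ones matrix, likewise $R$), and $A\cup B$ consecutive with $A$ and $B$ meeting at the single point formerly occupied by $c$; any interior row through $c$ is a superset of the maximal member of $L$ union the maximal member of $R$ on its respective sides, so it too is consecutive once $c$ is reinserted at that meeting point. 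I would also double-check the degenerate sub-cases where $L$ or $R$ is empty, or where a row's endpoint at $c$ is degenerate (the row lies only in $c$) — the precondition's insistence on \emph{proper} endpoints is what rules the latter out, so I would flag that explicitly. With these pieces assembled, both inclusions give $\Pi(T(M')) = \Pi(T(M)-c)$, i.e. $T(M') \equiv T(M)-c$, as required.
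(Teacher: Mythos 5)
Your overall strategy is sound but genuinely different from the paper's. The paper first observes that $T(M'')\equiv T(M)$ because each constraint row is the union of two properly overlapping rows (Lemma~\ref{lem:overlap}), gets $T(M'')-c\preceq T(M')$ for free from Lemma~\ref{lem:RestrictPrec}, and proves the hard inclusion $T(M')\preceq T(M'')-c$ combinatorially via the characterization of Lemma~\ref{lem:FPerp}: if some member of $\mathcal{F}(M')$ were lost, one deduces that it properly overlaps a row $S(v)$ of $M''$ with $S(v)$ exceeding it only in $\{c\}$, so $c$ is a proper endpoint of $v$, and the constraint row pairing $v$ with a row having the opposite proper endpoint at $c$ already properly overlaps that set in $M'$ --- a contradiction. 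You instead prove the hard inclusion by explicitly reinserting $c$ into an arbitrary consecutive-ones ordering $\pi$ of $M'$ at the meeting point of $A=\bigcup_{\ell\in L}(S(\ell)\setminus\{c\})$ and $B=\bigcup_{r\in R}(S(r)\setminus\{c\})$. Your other direction is essentially the paper's, phrased through orderings rather than through Lemma~\ref{lem:RestrictPrec}, and is fine.

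The reinsertion argument has two gaps. First, the claim that an interior row (one with $c$ strictly inside its block in $M$) ``is a superset of the maximal member of $L$ union the maximal member of $R$ on its respective sides'' is false: a short row straddling $c$ need not contain even the minimal member of $L$ or of $R$. What is true, and what you actually need, is that such a row contains the columns immediately to the left and to the right of $c$ in the given ordering of $M$; the first of these lies in every $S(\ell)$ with $\ell\in L$ and the second in every $S(r)$ with $r\in R$, so the row's block under $\pi$ meets both $A$ and $B$ and therefore covers the insertion point. Second, and more seriously, you never check that inserting $c$ at the meeting point does not destroy consecutiveness of the rows that do \emph{not} contain $c$: inserting a column strictly inside the block of such a row breaks it. This requires an argument, for example: a row $w$ with $c\notin S(w)$ whose block under $\pi$ were split by the insertion would have to contain the column of $A$ and the column of $B$ adjacent to the meeting point, hence would intersect some $S(\ell)$ and some $S(r)$; since in the given ordering of $M$ every $S(\ell)$ lies at or to the left of $c$ and every $S(r)$ at or to the right of $c$, the set $S(w)$ would contain columns on both sides of $c$ but not $c$ itself, contradicting the consecutiveness of row $w$ in $M$. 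With these two repairs your proof goes through; without them the hard inclusion is not established.
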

\begin{proof}
When rows $X$ and $Y$ of $M$ properly overlap, then since they are each consecutive,
so is $X \cup Y$, by Lemma~\ref{lem:overlap}.
Therefore, adding $X \cup Y$ as a row to the matrix does nothing
to the PQ tree of the matrix.  $T(M'') \equiv T(M)$.  
It suffices to show that $T(M'')- c \preceq T(M')$ and $T(M') \preceq T(M'') - c$.
That $T(M'') - c \preceq T(M')$ follows from Lemma~\ref{lem:RestrictPrec}.

By Lemma~\ref{lem:FPerp}, to show $T(M') \preceq T(M'') - c$,
it suffices to show that for every $B \in {\cal F}(M')$,
either $B \in {\cal F}(M'')$ or $B \cup \{c\} \in {\cal F}(M'')$.
To obtain a contradiction, suppose that this is not true
for some $B \in {\cal F}(M')$.

This implies $B \not\in {\cal F}(M'')$.
The set $B$ properly overlaps
some row $S(v)$ of $M''$, but $B$ does not properly overlap $S(v) - c$.
In other words, $B \setminus S(v)$, $B \cap S(v)$, and $S(v) \setminus B$
are all nonempty, but one of $B \setminus (S(v) - c)$, $B \cap (S(v) - c)$,
and $(S(v) - c) \setminus B$ is empty.  Since, $B$ is a set of columns of $M'$,
 $c \not\in B$, so $B \setminus (S(v) - c) = B \setminus S(v)$ and $B \cap (S(v) - c) = B \cap S(v)$,
and these are nonempty.
Only $(S(v) - c) \setminus B$ is empty, and since $S(v) \setminus B$ is nonempty,
$S(v) \setminus B = \{c\}$.

Both of $B$ and $S(v)$ are consecutive in $M''$.
Suppose without loss of generality that the left endpoint of $B$
is farthest to the left.  Then $c$ is the right endpoint of $v$
and it is a proper right endpoint.
There exists a row $w$ of $M''$ with a proper left endpoint in $c$
by the precondition of Algorithm~\ref{alg:columnDelete}.
Algorithm~\ref{alg:columnDelete} inserted
$S(v) \cup S(w)$ as a row of $M''$, and $(S(v) \cup S(w)) - c$ is a row of $M'$
that properly overlaps $B$, contradicting $B \in {\cal F}(M')$.
\end{proof}

Recall that $M_N$ is a normal model of $G - N_S$, and let $M_J$ be the submatrix given by its clique
columns.  Note that $M_J$ is just a consecutive-ones ordering of columns of $M^+_K$.
Let $M'_J$ be the result of adding all non-probe - probe and probe - probe constraints
as rows to $M_J$, not just the ones given by representative pairs.  

\begin{lemma}\label{lem:constraintsSufficient}
$T(M'_K) \equiv T(M'_J) \equiv T(M_N)[C_K]$.
\end{lemma}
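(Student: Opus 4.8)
The plan is to prove the two equivalences $T(M'_K)\equiv T(M'_J)$ and $T(M'_J)\equiv T(M_N)[C_K]$ separately, the first by showing that the non-representative binding constraints are redundant, the second by iterating Algorithm~\ref{alg:columnDelete}. I would start from the observation that $M_J=M_N[][C_K]$ has the same row set as $M^+_K$, namely $\{Q(v)\mid v\in V\setminus N_S\}$: the probe set of a clique column of a normal model of $G-N_S$ is exactly the clique it contains (by maximality of cliques together with Lemma~\ref{lem:cliquesUnique}), so the clique columns met by the interval of $v$ are precisely those whose clique lies in $N[v]$, resp.\ $N(v)$, i.e.\ $Q(v)$. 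Hence $M'_J$ is $M^+_K$ with all binding-constraint rows $Q(u)\cup Q(v)$ adjoined and $M'_K$ is $M^+_K$ with only the representative ones adjoined, and both have column set $C_K$.

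For $T(M'_K)\equiv T(M'_J)$ it suffices to show that for every binding pair $\{u,v\}$ the set $Q(u)\cup Q(v)$ is consecutive in \emph{every} consecutive-ones ordering of $M'_K$, not merely in the one fixed ordering considered in Lemma~\ref{lem:MPTime}; then adjoining the non-representative binding constraints changes nothing, and the matrix so obtained is $M'_J$ up to column order. I would reuse the induction from the proof of Lemma~\ref{lem:MPTime}, reading each step as a statement about an arbitrary consecutive-ones ordering: the base case is $\{u,v\}$ representative, where $Q(u)\cup Q(v)$ is literally a row of $M'_K$; the inductive step replaces $v$ by a minimal bound neighbour $v'$ with $Q(v')\subset Q(v)$, applies the hypothesis to $Q(u)\cup Q(v')$, and uses that $Q(v)$ is a row of $M^+_K$ (hence consecutive) and is disjoint from $Q(u)$, so $Q(v)$ can only extend the block $Q(u)\cup Q(v')$ on the side away from $Q(u)$.

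For $T(M'_J)\equiv T(M_N)[C_K]$, recall that $M_N$ is $M^*_K$ with semi-clique columns inserted, so its columns are $C_K$ together with semi-clique columns, each of which meets the precondition of Algorithm~\ref{alg:columnDelete} (it carries a proper left and a proper right endpoint). Deleting these one at a time with the algorithm gives, by Lemma~\ref{lem:columnDelete} and induction, a matrix $\widehat M$ on column set $C_K$ with $T(\widehat M)\equiv T(M_N)-c_1-\cdots-c_s=T(M_N)[C_K]$; it then remains to prove $T(\widehat M)\equiv T(M'_J)$. One inclusion is immediate: by Theorem~\ref{thm:allModels} the consecutive-ones orderings of $\widehat M$ are exactly the restrictions to $C_K$ of the normal models of $G-N_S$, and in such a model an adjacent non-probe--probe pair, or a pair of non-adjacent probes with a common $N_2$-neighbour, has disjoint $Q$-sets lying in adjacent runs of clique columns, so its binding constraint is consecutive; hence $T(\widehat M)\preceq T(M'_J)$.

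The reverse inclusion $T(M'_J)\preceq T(\widehat M)$ is the crux. Since we may not assume here that $M^*_K$ extends to a normal model (that conclusion is meant to follow \emph{from} this lemma), we cannot argue by extending an ordering; instead I would show directly that every constraint row created during the deletion sequence, once the remaining non-clique columns are dropped and we restrict to $C_K$, is a union of $Q$-sets of pairwise bound vertices, hence consecutive in every consecutive-ones ordering of $M'_J$ by the induction above with all binding constraints available — intuitively, deleting a semi-clique column reproduces exactly the binding constraints that the non-probes sitting in it impose. I expect the main obstacle to be the bookkeeping needed to carry this through while keeping Algorithm~\ref{alg:columnDelete}'s precondition satisfied at every step: an $N_2$ non-probe, whose interval consists only of semi-clique columns, may become degenerate before all of its columns are deleted and then stop providing a proper endpoint. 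I would handle this by deleting the columns of each maximal run of semi-clique columns (the columns between two consecutive clique columns) inward from its two ends, and by using that a constraint row already inserted is itself a consecutive set — a union of two properly overlapping consecutive sets — and can take over the role of the missing proper endpoint; this is also precisely the mechanism by which the probe--probe binding constraints, never produced ``directly'' by an $N_2$ non-probe, emerge as endpoints of probes shift during the deletions.
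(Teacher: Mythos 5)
Your proposal is correct and follows essentially the same route as the paper: the paper likewise gets $T(M'_K)\equiv T(M'_J)$ from the ordering-independent induction in the proof of Lemma~\ref{lem:MPTime}, and gets $T(M'_J)\equiv T(M_N)[C_K]$ by iterating Algorithm~\ref{alg:columnDelete} on the non-clique columns of $M_N$ and tracing how the inserted rows, restricted to $C_K$, become exactly the binding-constraint rows $Q(u)\cup Q(v)$. The two points where you are more careful are genuine: the paper asserts the deletions can be done ``in any order'' without checking that the precondition of Algorithm~\ref{alg:columnDelete} survives once an $N_2$ interval becomes degenerate (your observation that a previously inserted constraint row $S(\ell)\cup S(r)$ inherits the needed proper endpoint is exactly the right repair, and is the same mechanism the paper uses implicitly when it pairs the row $R$ with $S(q)$ in the probe--probe case), and the paper states rather than proves that the deletion-generated rows coincide with the binding constraints, whereas you supply the inclusion $T(\widehat M)\preceq T(M'_J)$ cleanly via Theorem~\ref{thm:allModels}.
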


\begin{proof}
Iteratively applying Algorithm~\ref{alg:columnDelete} to non-clique
columns of $M_N$ in any order leaves the columns of $M_J$, but adds rows,
yielding a matrix $M''_J$.
By Lemma~\ref{lem:columnDelete} and
induction on the number iterations, $T(M''_J) \equiv T(M_N)[C_K]$.

To show $T(M'_J) \equiv T(M_N)[C_K]$, we show that $T(M''_J)$ is the
result of adding one constraint row realizing each probe - probe and probe - probe constraint
to $M_J$.  Since we have shown that the rows for representative pairs added to $M_J$ to obtain
$M'_J$ realize these constraints, and they are a subset of the rows added
to $M''_J$, the result will follow.

Suppose $A_1$ is the initial set of columns of $M_N$, and that
$\{x,p\}$ have a non-probe - probe binding constraint in $M_J$.
Suppose without loss of generality that $k_i$
is the rightmost clique column in $S(p)$ and $k_{i+1}$ is the leftmost in $S(x)$.
The constraint means that $S(p)$ and $S(q)$ meet at a semi-clique column between $k_i$ and $k_{i+1}$.
Let $A_2$ be the set of columns just after the last column in $S(p) \cap S(x)$
is deleted.  At that time, Algorithm~\ref{alg:columnDelete} has just added $(S(p) \cap A_2) \cup (S(x) \cap A_2)$
as a new constraint row.  When only clique columns remain, this row is
$(S(p) \cap C_K) \cup (S(x) \cap C_K) = Q(p) \cup Q(x)$, the non-probe - probe binding constraint
for $p$ and $x$.

Suppose $p$ and $q$ are probes that will have a probe - probe binding constraint
in $M_J$.  Suppose without loss of generality
that $k_i$ is the rightmost clique column
of $S(p)$ and $k_{i+1}$ is the leftmost clique column of $S(q)$.  The constraint
means that $p$ and $q$ are not neighbors, but that they
have a common neighbor $x \in N_2$.  Since $x \in N_2$, $S(x)$ does not contain
a clique column, so $S(x)$ is confined
to the columns of $M_N$ between $k_i$ and $k_{i+1}$.
Let $A_3$ be the columns that remain right after the last column in $S(p) \cap S(x)$
is deleted and $A_4$ be the columns that remain right after the last column
in $S(x) \cap S(q)$ is deleted.  Assume without loss of generality that the last column in 
$S(p) \cap S(x)$ is deleted first.
When $A_3$ remains, Algorithm~\ref{alg:columnDelete} inserts $(S(p) \cap A_3) \cup (S(x) \cap A_3)
= (S(p) \cup S(x)) \cap A_3$ as a new row, $R$.
When $A_4$ remains, Algorithm~\ref{alg:columnDelete} inserts $(R \cap A_4) \cup (S(q) \cap A_4)$
as a new row.  This is equal to $[(S(p) \cup S(x)) \cap A_4] \cup (S(q) \cap A_4) = 
(S(p) \cup S(x) \cup S(q)) \cap A_4$.
What remains of this row after the column set is $C_K$ is $(S(p) \cup S(x) \cup S(q)) \cap C_K =
(S(p) \cup S(q)) \cap C_K = Q(p) \cup Q(q)$, the probe - probe binding constraint
for $p$ and $q$.
\end{proof}

\subsection{Adding columns to $M^*_K$ to obtain a normal model $M_N$ of $G - N_S$}

We now know by Lemma~\ref{lem:constraintsSufficient} that $M^*_K$
is a submatrix of a normal model of $G - N_S$.
The next lemma describes the structure of the columns 
that must be inserted between each pair $\{c_i$,$c_{i+1}\}$ of consecutive 
columns of $M^*_K$ to obtain a normal model $M_N$ of $G - N_S$.
(See Figure~\ref{fig:Cxy}.)

\begin{definition}
A sequence $(S_1, S_2, \ldots, S_k)$ of sets is {\em ascending}
if $S_i \subset S_{i+1}$ for each $i$ such that $1 \leq i < k$ and {\em descending}
if $S_{i+1} \subset S_i$ for each $i$ such that $1 \leq i < k$.
\end{definition}

\begin{lemma}\label{lem:descendingAscending}
Let $(c, d_1, d_2, \ldots, d_k, c')$ be a consecutive set of columns, in left-to-right order,
in a normal model $M_N$ of $G - N_S$, such that $c$ and $c'$ are clique columns
and for each $i$ such that $1 \leq i \leq k$, each $d_i$ is a semi-clique column.

\begin{itemize}
\item The columns of $\{d_1, d_2, \ldots, d_k\}$ whose probe sets are subsets of $S(c)$
precede the columns whose probe sets are subsets of $(c')$.
Let $(d_1, d_2, \ldots, d_h)$ be the ones whose probe sets
are subsets of $S(c)$.

\item The probe sets $(S(c) \cap P, S(d_1) \cap P, S(d_2) \cap P, \ldots, S(d_h) \cap P)$ are a descending
sequence.

\item The probe sets 
$(S(d_{h+1}) \cap P, S(d_{h+2}) \cap P, \ldots, S(d_k) \cap P, S(c') \cap P)$ are an ascending sequence.
\end{itemize}
\end{lemma}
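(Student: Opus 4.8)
The plan is to exploit the classification of each $d_i$ as either a left or a right semi-clique column, which is available from Lemma~\ref{lem:C1C2C3} and the definition of semi-clique columns, and to combine it with consecutiveness of 1's and Lemma~\ref{lem:nextProbes} to pin down how the probe set of each $d_i$ compares with the probe sets of its two neighbours in $M_N$.

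The key local observation I would establish first is this: if $d_i$ is a \emph{left} semi-clique column with left neighbour $a$ and right neighbour $b$ in $M_N$, then $S(d_i)\cap P \subset S(a)\cap P$ while $S(d_i)\cap P \not\subseteq S(b)\cap P$, and symmetrically for right semi-clique columns. The inclusion $S(d_i)\cap P \subseteq S(a)\cap P$ holds because a left semi-clique column contains no left endpoint of a probe, so every probe occurring in $d_i$ occurs in $a$ by consecutiveness of 1's. The non-inclusion holds because, by definition, a left semi-clique column contains a proper right endpoint of some probe $p$, so $p \in S(d_i)$ but $p \notin S(b)$, and hence $p$ also lies in none of the columns to the right of $b$. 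Finally, $d_i$ is a non-clique column, so by Lemma~\ref{lem:nextProbes} its probe set is a proper subset of the probe set of one of its neighbours; since we have ruled out $b$, the inclusion in $a$ is proper.

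Given this, the first bullet follows quickly. If some right semi-clique column $d_i$ were immediately followed by a left semi-clique column $d_{i+1}$ with $1 \le i < k$, the local observation (applied to $d_i$ with right neighbour $d_{i+1}$, and to $d_{i+1}$ with left neighbour $d_i$) would force both $S(d_i)\cap P \subset S(d_{i+1})\cap P$ and $S(d_{i+1})\cap P \subset S(d_i)\cap P$, a contradiction. Hence the left semi-clique columns among $d_1,\ldots,d_k$ form a prefix $(d_1,\ldots,d_h)$ and the right ones the suffix $(d_{h+1},\ldots,d_k)$. A short induction with the local observation, taking $d_0 := c$, gives $S(d_j)\cap P \subset S(d_{j-1})\cap P \subseteq S(c)\cap P$ for $1 \le j \le h$ (the left neighbour of $d_1$ being $c$), and symmetrically $S(d_j)\cap P \subset S(c')\cap P$ for $h < j \le k$. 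The non-inclusion part of the local observation also shows the two classes are disjoint: a left semi-clique column $d_j$ contains a probe whose right endpoint lies in $d_j$, so that probe is absent from $c'$, whence $S(d_j)\cap P \not\subseteq S(c')\cap P$. Thus $(d_1,\ldots,d_h)$ is exactly the set of $d_i$ whose probe set is a subset of $S(c)$, and these precede the rest.

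The second and third bullets are then just a restatement of the inclusions already obtained: $S(d_j)\cap P \subset S(d_{j-1})\cap P$ for $1 \le j \le h$ says that $(S(c)\cap P, S(d_1)\cap P,\ldots,S(d_h)\cap P)$ is descending, and $S(d_j)\cap P \subset S(d_{j+1})\cap P$ for $h < j < k$ together with $S(d_k)\cap P \subset S(c')\cap P$ says that $(S(d_{h+1})\cap P,\ldots,S(d_k)\cap P, S(c')\cap P)$ is ascending. I expect the only delicate point to be keeping the ``left neighbour / right neighbour'' bookkeeping straight as one walks along the block $(c,d_1,\ldots,d_k,c')$, together with noting that the boundary columns $c$ and $c'$, being clique columns, are used only as the left neighbour of $d_1$ and the right neighbour of $d_k$ respectively, so the semi-clique dichotomy is never applied to them; properness of all the inclusions, which is the only place normality of $M_N$ is needed, is already packaged inside Lemma~\ref{lem:nextProbes}.
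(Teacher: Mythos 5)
Your proof is correct. It takes a somewhat different route from the paper's. You read the required inclusions directly off the definition of left and right semi-clique columns: the absence of left endpoints of probes in a left semi-clique column gives $S(d_i)\cap P\subseteq S(a)\cap P$ for the left neighbour $a$ by consecutiveness of 1's; the mandatory proper right endpoint of a probe gives $S(d_i)\cap P\not\subseteq S(b)\cap P$ for the right neighbour $b$ (and, pushed further right, $\not\subseteq S(c')\cap P$); and Lemma~\ref{lem:nextProbes} upgrades the first inclusion to a proper one. The prefix/suffix structure and both monotone chains then follow by purely local, column-by-column reasoning. The paper instead argues globally: it first shows that no probe of $S(c)\setminus S(c')$ can share a column of $\{d_1,\ldots,d_k\}$ with a probe of $S(c')\setminus S(c)$ (such a column would be a clique column between $c$ and $c'$), and that no column can have exactly $(S(c)\cap P)\cap(S(c')\cap P)$ as its probe set (tautness of endpoints), deduces from these the same two-class partition of the $d_i$ by whether their probe sets sit inside $S(c)$ or inside $S(c')$, and then invokes consecutiveness of 1's together with Lemma~\ref{lem:nextProbes} for the descending and ascending chains. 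Your version has the merit of making fully explicit why the left class precedes the right class (the step the paper compresses into ``consecutiveness of 1's then forces\ldots''), at the cost of leaning on the precise endpoint bookkeeping in the definition of semi-clique columns; the paper's version re-derives that dichotomy from normality and the impossibility of a clique column inside the block, so it is less sensitive to the exact phrasing of that definition. Both arguments use the same underlying facts, so the difference is one of organization rather than substance.
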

\begin{proof}
No probe in $S(c_i) \setminus S(c_{i+1})$ can meet a probe in $S(c_{i+1}) \setminus S(c_i)$
in any column of $\{d_1, \ldots, d_k\}$, since this would be a clique
column between $c$ and $c'$, a contradiction.   No column can have
$(S(c) \cap P) \cap (S(c') \cap P)$ as its probe set, since then any endpoint in
the column fails to be taut, or it has no endpoints, contradicting the normality of $M_N$ in
either case.
The columns can be uniquely partitioned into $\{d_1, \ldots, d_h\}$ and $\{d_{h+1}, \ldots, d_k\}$
such that the probe sets in the first of these are subsets of $S(c) \cap P$ but not of $S(c') \cap P$,
and the probe sets in the second are subsets of $S(c') \cap P$ but not of $S(c) \cap P$.
Since $M_N$ is normal, no probe sets in consecutive columns are equal,
by Lemma~\ref{lem:nextProbes}.
Therefore, consecutiveness of 1's then forces
$(S(c) \cap P, S(d_1) \cap P, S(d_2) \cap P, \ldots, S(d_h) \cap P)$ to be a decreasing sequence.
By symmetry, $(S(d_{h+1}) \cap P, S(d_{h+2}) \cap P, \ldots, S(d_k) \cap P, S(c') \cap P)$ is 
an ascending sequence.
\end{proof}

\begin{figure}
\centerline{\includegraphics[]{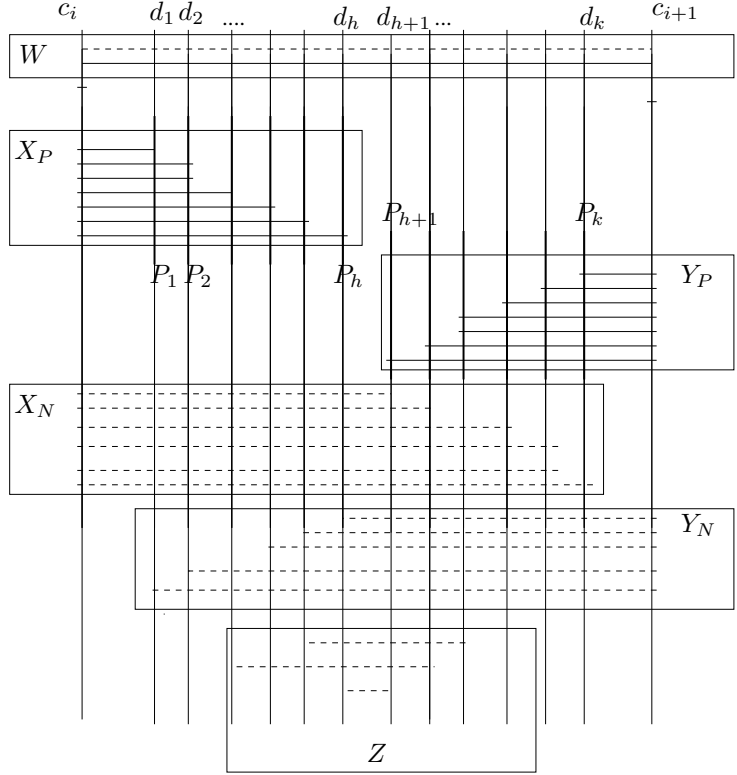}}
\caption{The structure of columns $(d_1, d_2, \ldots, d_k)$ inserted between two
columns $c_i$ and $c_{i+1}$ of $M^*_K$ in obtaining $M_N$.
Solid lines are probes; dashed lines are
non probes.  $W = W(i) = S(c_i) \cap S(c_{i+1})$.  $X_P = X(i) \cap P$ and $X_N \ X(i) \cap N$ are the probes
and non-probes, respectively,
that have right endpoints at $c_i$, but also
some neighbors in $c_{i+1}$.  
$Y_P$ and $Y_N$ are defined symmetrically.
$X(i) = X_P \cup X_N$ and $Y(i) = Y_P \cup Y_N$ are the elements with unfulfilled
adjacencies that must be represented by insertion of $(d_1, d_2, \ldots, d_k)$,
and and $Z(i)$ is the members of $N_2$ 
whose unfulfilled adjacencies with
probe neighbors in $c_{i}$ and
and probe neighbors with left endpoints in $c_{i+1}$ that must also be
represented by these columns.
}\label{fig:Cxy}
\end{figure}

\begin{definition}\label{def:newStretchers}
Let $c_i$ and $c_{i+1}$ be consecutive columns in $M^*_K$.
Let $Z(i)$ be the members of $N_2$ that have
neighbors in both $S(c_i) \setminus S(c_{i+1})$
and in $S(c_{i+1}) \setminus S(c_i)$.
Let $X(i)$ be the set of vertices in $S(c_i) \setminus S(c_{i+1})$ that have
neighbors in $Z(i) \cup S(c_{i+1}) \setminus S(c_i)$.
Let $Y(i)$ be the set of vertices in $S(c_{i+1}) \setminus S(c_i)$ that have
neighbors in $Z(i) \cup S(c_i) \setminus S(c_{i+1})$.
\end{definition}

Clearly, the unfulfilled adjacencies that must be represented by adding new rows
between $c_i$ and $c_{i+1}$ are those adjacencies between members
of any two of $\{X(i),Y(i),Z(i)\}$.

\begin{lemma}\label{lem:stretchEndpoints}

Let $\{X(i),Y(i),Z(i)\}$ be as in Definition~\ref{def:newStretchers}.
Let $M_N$ be a normal model of $G - N_S$
such that $M^*_K$ is a submatrix of $M_N$.
For each non-probe $x$ in $Y(i) \cup Z(i)$, the probe set of $x$'s left
endpoint in $M_N$ is $N(x) \cap S(c_i)$.  By symmetry, the probe set of the right endpoint
of each non-probe $x'$ in $X(i) \cup Z(i)$ is $N(x') \cap S(c_{i+1})$.
\end{lemma}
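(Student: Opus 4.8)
The plan is to locate the column that serves as $x$'s left endpoint precisely inside the block of non-clique columns that $M^*_K$ inserts between $c_i$ and $c_{i+1}$ in $M_N$, and then to read off the claimed probe set by two inclusions that use only consecutiveness of $1$'s. Assume without loss of generality that $c_i$ lies to the left of $c_{i+1}$ in $M_N$. Since the column set of $M^*_K$ is precisely the set of clique columns of every normal model of $G-N_S$, and hence of $M_N$ (recall the construction of $M^+_K$ and Lemma~\ref{lem:constraintsSufficient}), the columns of $M_N$ strictly between $c_i$ and $c_{i+1}$ are exactly its non-clique columns in that range; write them as $d_1,\dots,d_k$ from left to right. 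Every non-probe of $G-N_S$ lies in $N_1\cup N_2$ and is non-simplicial: a witnessing non-adjacent pair in its closed neighborhood consists of probes, or of the vertex itself, since non-probes are pairwise non-adjacent, and such a pair survives the deletion of $N_S$, which removes only non-probes and changes neither $G[P]$ nor any probe neighborhood. Hence $G-N_S$ has no simplicial non-probe, so by Lemma~\ref{lem:C1C2C3} no column of $M_N$ is a simplicial column, and each $d_\ell$ — being a non-clique column — is a semi-clique column. Lemma~\ref{lem:descendingAscending} therefore applies to $(c_i,d_1,\dots,d_k,c_{i+1})$, and I would record the index $h$ it yields: $S(d_\ell)\cap P\subseteq S(c_i)$ for $1\le\ell\le h$ (the descending part) and $S(d_\ell)\cap P\subseteq S(c_{i+1})$ for $h<\ell\le k$.

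Next I would pin down $x$'s left endpoint. Let $x$ be a non-probe of $Y(i)\cup Z(i)$. From the definitions, and because non-probes are pairwise non-adjacent, $x$ has a probe neighbor $p$ with $p\in S(c_i)\setminus S(c_{i+1})$, and $x\notin S(c_i)$ — directly when $x\in Y(i)$, and because $Q(x)=\emptyset$ prevents the non-probe $x$ from lying in a clique column when $x\in Z(i)\subseteq N_2$. By consecutiveness of $1$'s, $x$'s left endpoint lies strictly to the right of $c_i$: when $x\in Y(i)$ because $x\in S(c_{i+1})$; when $x\in Z(i)$ because $x$ also has a neighbor in $S(c_{i+1})\setminus S(c_i)$, so $S(x)$ can neither lie entirely left of $c_i$ nor contain the clique column $c_i$. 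Also $p$'s interval ends strictly to the left of $c_{i+1}$, and $x$ meets $p$, so $x$'s left endpoint lies strictly to the left of $c_{i+1}$. Therefore the column that is $x$'s left endpoint is one of $d_1,\dots,d_k$; call it $d_\alpha$. Now $p$ occupies some column of $S(x)$, which lies at or to the right of $d_\alpha$; since $p\in S(c_i)$ and $c_i<d_\alpha$, consecutiveness of $p$'s $1$'s forces $p\in S(d_\alpha)$. If $\alpha>h$, then $S(d_\alpha)\cap P\subseteq S(c_{i+1})$ would give $p\in S(c_{i+1})$, contradicting the choice of $p$; hence $\alpha\le h$ and $S(d_\alpha)\cap P\subseteq S(c_i)$.

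It remains to check $S(d_\alpha)\cap P=N(x)\cap S(c_i)$, which is routine. For $\subseteq$: any $q\in S(d_\alpha)\cap P$ shares the column $d_\alpha$ with $x$, hence $q\in N(x)$, and $S(d_\alpha)\cap P\subseteq S(c_i)$ by the previous paragraph. For $\supseteq$: any $q\in N(x)\cap S(c_i)$ is a probe, since $x$ is a non-probe, so $q$ occupies some column of $S(x)$ at or to the right of $d_\alpha$; together with $q\in S(c_i)$ and $c_i<d_\alpha$, consecutiveness of $q$'s $1$'s gives $q\in S(d_\alpha)$, so $q\in S(d_\alpha)\cap P$. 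This proves the first assertion. The second follows by applying the first to the left-to-right reversal of $M_N$, which is again a normal model of $G-N_S$ containing $M^*_K$ as a submatrix and which interchanges $c_i$ with $c_{i+1}$, $X(i)$ with $Y(i)$, and left endpoints with right endpoints.

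The step I expect to be the main obstacle is the second paragraph: making fully rigorous the claim that $x$'s left endpoint lands on one of the $d_\ell$, and, more delicately, that it lands in the descending part $d_1,\dots,d_h$ rather than the ascending part. This is pure bookkeeping with consecutiveness of $1$'s and the ordering $c_i<d_\alpha<c_{i+1}$, but it must be done with care, and it presupposes the preliminary observation — easy but essential — that every column of $M_N$ is a clique or semi-clique column, which is what makes Lemma~\ref{lem:descendingAscending} available. The $Z(i)$ case carries the extra wrinkle of confining $S(x)$ to the block $\{d_1,\dots,d_k\}$, for which one uses $Q(x)=\emptyset$ together with $x$ having neighbors on both sides of the block.
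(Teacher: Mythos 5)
Your proof is correct and follows the same route as the paper, which simply declares the lemma ``immediate from Lemma~\ref{lem:descendingAscending}''; you have filled in the details that the paper leaves implicit (that the columns strictly between $c_i$ and $c_{i+1}$ are all semi-clique columns, that $x$'s left endpoint falls in the descending part of that block, and the two inclusions giving the probe set). No gaps.
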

\begin{proof}
Immediate from Lemma~\ref{lem:descendingAscending}.
\end{proof}

Lemma~\ref{lem:stretchEndpoints}
gives probe sets of columns that must be inserted between clique columns
$c_i$ and $c_{i+1}$.  No other probe
set can occur in them: any endpoints in the column would not
be taut, and if the column has no endpoints, it would be a subset
of another, contradicting normality of $M_N$.  

Using the radix sorting
technique mentioned in Section~\ref{sect:initial}, we can find,
for each consecutive pair $\{c_i,c_{i+1}\}$ of columns of $M^*_K$, 
the probe sets of left endpoints in descending order of size,
and the probe sets of right endpoints in ascending order of size,
and we may eliminate duplicate copies of the same set.  This takes $O(n+m)$ time.
We reject $G$ if they do not form descending and ascending sequences,
as required by Lemma~\ref{lem:descendingAscending}, and this
also takes $O(n+m)$ time to check, as described in Section~\ref{sect:initial}.

If we have not rejected $G$, this identifies the unique order of the
columns containing these probe sets.  This gives the position of
left endpoint of every non-probe in $Y(i) \cup Z(i)$, and the
position of the right endpoint of every non-probe in $X(i) \cup Z(i)$.
For each $z \in Z(i)$, we must add 1's between the left and right endpoint
of $z$.  For each non-probe $x \in X(i)$, we must add 1's between $c_i$
and $x$'s right endpoint.  For each non-probe $y \in Y(i)$, we must add 1's
between $c_{i+1}$ and $y$'s left endpoint.  

This gives the members of $X(i)$, $Y(i)$, $Z(i)$ in each column, as well as the
probes in $W(i) = c_i \cup c_{i+1}$, which must also appear in each of the
new columns.  For each non-probe
$w \in W(i)$, we must add 1's between $c_i$ and $c_{i+1}$.
Since the order of probe sets satisfies the requirements of
Lemma~\ref{lem:stretchEndpoints}, this
fulfills the adjacencies between $X(i)$, $Y(i)$, and $Z(i)$.  

We cannot add any other 1's to these columns without violating the 
requirements of a normal model.  
The columns between $c_i$ and $c_{i+1}$
and their ordering is uniquely determined.  Therefore,
performing this operation at all pairs of columns of $M^*_K$,
we obtain a normal model of $G - N_S$,
which is uniquely determined, given $M^*_K$.  Since it is
a normal model, it has $O(n+m)$ 1's and we have spent $O(n+m)$ time.

\section{The Consecutive-Ones Probe Matrix Problem}\label{sect:C1PM}

Recall that an instance the consecutive-ones probe matrix
problem is a matrix $M$ whose elements
are 0's, 1's, and $\ast$'s, and the $\ast$'s
form a submatrix.  We seek to find a way to replace the $\ast$'s with
0's and 1's so that the resulting matrix has the consecutive-ones property.
We assume that the instance of the problem is given
by two matrices:  the submatrix $M_R$ consisting of those rows that do not
contain $\ast$'s, and the submatrix $M_C$ consisting of those columns that
do not contain $\ast$'s.  The columns of $M_C$ are a subset of the columns
of $M_R$ and the rows of $M_R$ are a subset of the rows of $M_C$.
Denote the set of row of $M_R$ by $R$ and the set of columns of $M_C$ by $C$.
Since the $\ast$'s form a submatrix, all entries that are 0 or 1 occur
in a row of $M_R$ or a column of $M_C$ (or both), while no $\ast$ appears
in either matrix.  By using sparse representations of $M_R$ and $M_C$,
we get a representation of the instance in space proportional to the number
of rows, columns, and 1's of $M$.

\begin{lemma}\label{lem:C1PMTest}
The consecutive-ones probe matrix problem on $M_R$ and $M_C$
has a solution if and only if there exists a consecutive-ones
ordering $\pi$ of columns of $M_R$ such that $\pi[C]$ is also
a consecutive-ones ordering of $M_C$.
\end{lemma}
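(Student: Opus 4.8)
The plan is to prove both directions by the construction already outlined at the end of Section~\ref{sect:classes}, using only the elementary fact (noted in the proof of Lemma~\ref{lem:RestrictPrec}) that every submatrix of a consecutive-ones ordered matrix is consecutive-ones ordered. Throughout I use that, since the $\ast$'s form a submatrix, the rows containing $\ast$'s are exactly the rows outside $R$, the columns containing $\ast$'s are exactly the columns outside $C$, and for a row $r$ outside $R$ the $\ast$'s of $r$ lie precisely in the columns outside $C$.

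For the ``only if'' direction, I would start from a solution: an assignment of $0$'s and $1$'s to the $\ast$'s that turns $M$ into a $\ast$-free matrix $M'$ with the consecutive-ones property. Because the assignment leaves every row of $R$ and every column of $C$ untouched, $M_R$ is the submatrix of $M'$ on the rows of $R$ and all columns, and $M_C$ is the submatrix of $M'$ on all rows and the columns of $C$. Taking any consecutive-ones ordering $\pi$ of $M'$, the restriction of $\pi$ to the columns is a consecutive-ones ordering of the submatrix $M_R$, and $\pi[C]$ is a consecutive-ones ordering of the submatrix $M_C$, as required.

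For the ``if'' direction, I would take $\pi$ as given and exhibit an explicit assignment, defined only on rows outside $R$ (all other rows have no $\ast$'s). Fix such a row $r$. Because $\pi[C]$ is a consecutive-ones ordering of $M_C$, the $1$'s of $r$ among the columns of $C$ occupy a contiguous block of $C$ with respect to $\pi$. If $r$ has no $1$ in any column of $C$, set all of its $\ast$'s to $0$. Otherwise let $a$ and $b$ be, respectively, the leftmost and rightmost columns of $C$ in which $r$ has a $1$, with respect to $\pi$; set an $\ast$ of $r$ to $1$ exactly when the column containing it lies between $a$ and $b$ in $\pi$, and to $0$ otherwise.

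It then remains to verify that the resulting $\ast$-free matrix $M'$ is consecutive-ones ordered under $\pi$, which makes it a solution. Rows of $R$ are unchanged and their $1$'s are already consecutive in $\pi$, since $\pi$ is a consecutive-ones ordering of $M_R$. For a row $r$ outside $R$ with at least one $1$ in $C$, the $1$'s of $r$ in $M'$ are, by construction, exactly the columns of $\pi$ in the closed interval from $a$ to $b$ — the $1$'s already present in columns of $C$ together with the columns outside $C$ we just filled in between them — which is a contiguous block; a row $r$ with no $1$ in $C$ becomes entirely $0$. Hence $M'$ has the consecutive-ones property. I do not expect a genuine obstacle: the only point needing care is the case analysis in the ``if'' direction, namely stating the fill-in rule so that it closes exactly the interior gaps of the block $[a,b]$ of row $r$ and leaves everything outside it at $0$, so that no spurious block is created in an $\ast$-row and — since the assignment never touches a row of $R$ — no previously consecutive row is disturbed.
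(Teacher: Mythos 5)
Your proposal is correct and follows essentially the same route as the paper: the ``only if'' direction observes that $M_R$ and $M_C$ are untouched submatrices of any filled-in solution, and the ``if'' direction fills in each $\ast$-row by setting to $1$ exactly the $\ast$'s lying between the first and last $1$ among the columns of $C$ under $\pi$. Your treatment is marginally more explicit about the degenerate case of a row with no $1$'s in $C$, but the argument is the same.
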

\begin{proof}
Let $\pi$ be an ordering of columns of $M$ (hence of columns of $M_R)$
that makes it possible to fill in the $\ast$'s so that $M$
is consecutive-ones ordered.  Since each row of $M_R$ is a row
of $M$, $\pi$ must be a consecutive-ones ordering of $M_R$.
If $\pi[C]$ is not a consecutive-ones ordering of $M_C$, then
in some row of $M_C$, hence of any assignment of $\ast$'s
in $M$ has a 0 occurs between two 1's, a contradiction.  Therefore,
$\pi$ is a consecutive-ones ordering of $M_R$ and $\pi[C]$ is
a consecutive-ones ordering of $M_C$.

Let $\pi'$ be an ordering of columns of $M_R$ such that $\pi'[C]$
is also a consecutive-ones ordering of columns of $M_C$.
Then each row of $M$ in $R$ is consecutive-ones ordered.
In each row $y$ of $M$ that is not in $R$, let $\pi(c_1)$ and 
$\pi(c_2)$ be the first and last positions of columns of $C$
that have 1's in the row.  Then for every column $c_3 \in C$
such that $\pi(c_1) < \pi(c_3) < \pi(c_2)$, $\pi(c_3)$
has a 1 in row $y$ and for every column $c_4 \in C$ such that
$\pi(c_4) < \pi(c_1)$ or $\pi(c_4) > \pi(c_2)$, $c_4$ has a 0
in row $y$.  Setting any $\ast$'s between $\pi(c_1)$
and $\pi(c_2)$ to 1 results in a consecutive-ones ordered matrix
$M$.
\end{proof}

\begin{definition}\label{def:C1PM}
We will let $\CiPM(M_R, M_C)$ denote an instance of the consecutive-ones probe matrix problem,
where the columns of $M_C$ are a subset $C$ of the columns of $M_R$ and the rows of $M_R$
are a subset $R$ of the rows of $M_C$.  By Lemma~\ref{lem:C1PMTest}, a solution is any ordering 
$\pi$ of columns of $R$ such that $\pi$ is a consecutive-ones ordering $M^*_R$ of $M_R$ and $\pi[C]$ is
a consecutive-ones ordering $M^*_C$ of $M_C$.  The matrix $M$ obtained by assigning
$\ast$'s to 1 if and only if they occur between 1's in columns of $C$ is
the {\em taut} matrix implied by $\pi$.
\end{definition}

The reason for the distinguishing the taut matrix implied by a solution $\pi$ is that
$\pi$ does not always uniquely specify a required assignment of $\ast$'s.
Let $y$, be a row of $M$, and let $c$ be the first column of $M$ to
the left of the block of 1's in $y$.  Then the 0 assigned to a $\ast$ in row $y$, column $c$
in the taut matrix can be changed to a 1 without violating the constraints.
This can be iterated until the column to the left of the block of 1's is a column
in $C$.  Similarly, 0's after the rightmost 1 in
$y$ in a taut solution might be able to be reset to 1's.  Conceptually,
tautness is analogous to tautness in a probe interval model:
a solution $M$ is taut if no endpoint of a row can be set to 0 to
obtain a smaller solution consistent with the constraints.  When we
use it in the probe interval graph recognition algorithm, this allows
$M$ to be extended to a normal model, which must be taut.  By definition,
the implied taut matrix is unique for each solution $\pi$.

It is not necessary to construct a sparse representation of $M$ explicitly, and generally
it would not be possible to do this in time linear in the size of the inputs.
The $\ast$'s that must be 1 can greatly exceed the number of 1's in the inputs.
Given a solution $\pi$, we may create, in time linear in the size of the inputs 
a representation of $M$ that allows $O(1)$-time lookup of any entry.
It suffices to record, for each row, the position of the first and last 1 in the row.
If the row is in $R$, then this is the first and last position of a 1
in $M^*_R$, and if the row is not in $R$,  it is $\pi(c_1)$
and $\pi(c_2)$ for the first and last columns $c_1$ and $c_2$, respectively,
of $M^*_C$, that have $1$'s in the row.

This representation of $M$ takes space proportional to the sizes of $M_C$ and $M_R$.
For a row $i$ and column $j$, it 
takes $O(1)$ time to determine the value of the element at row $i$, column $j$
of $M$, by determining whether $j$ is in the interval between the first and 
last 1 of row $i$.

\begin{figure}
\centerline{\includegraphics[]{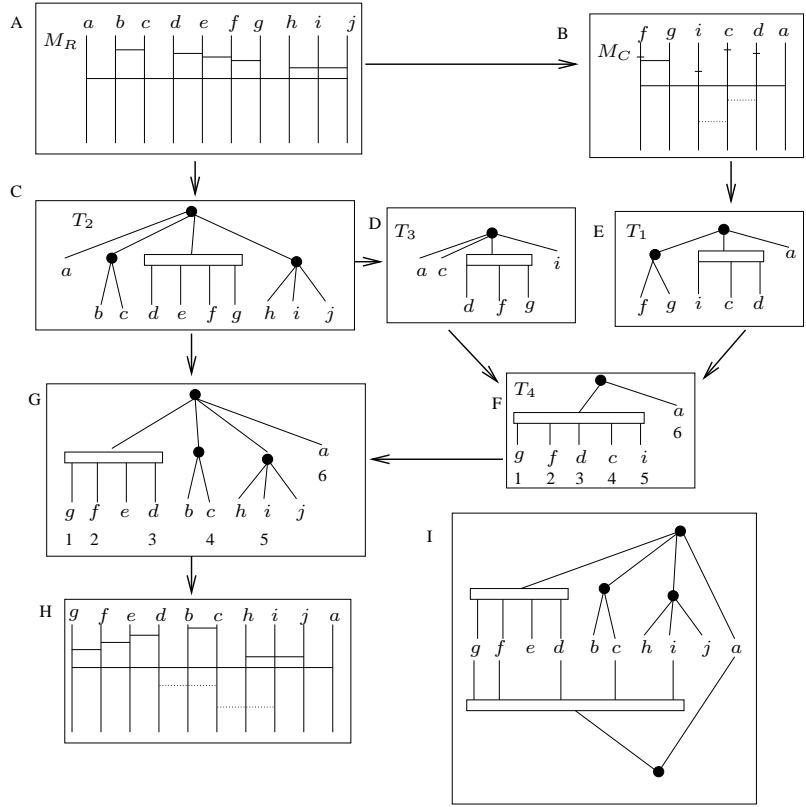}}
\caption{Solving the consecutive-ones probe matrix problem.  A:  $M_R = M[R]$ for an unknown
matrix $M$, consecutive-ones ordered.  Blocks of 1's are represented with line segments.
B:  $M_C = M[][C]$ for $M$, consecutive-ones ordered,
for column set $C = \{a,c,d,f,g,i\}$.  Rows of $M_C$ that are not in $M_R$ are dotted.   
Elements of $M$ that are neither in a row of $M_R$ nor a column of $M_C$ are implicitly $\ast$'s, and can be freely
assigned a value of 0 or 1 in $M$.  In this example, these are elements that are in
the two additional rows in $M_C$, depicted with 1's, and
in the columns $\{b,e,h,j\}$
that are in $M_R$ but not in $M_C$.
C:  The PQ tree $T_2$ of $M_R$.
D:  $T_3 \equiv T_2[C]$.  E: The PQ tree $T_1$ of $M_C$.  F:  $T_4 \equiv T_1 \cap T_3$,
which gives a left-to-right numbering of $C$ that is consistent with
both $T_1$ and $T_2$.  G:  A reordering of $T_2$ consistent with this numbering.
H:  The resulting ordering of columns of $M$, including those in $M_C$.  Since the
submatrix given by columns of $M_C$ is consecutive-one ordered, $\ast$'s lying between 1's in
these columns are assigned a value of 1, yielding a consecutive-ones ordering
of $M$.  I: A gadget suggested by the procedure.  $T_4$ (inverted) and $T_2$ share leaves.
Since $T_4 \preceq T_2[C]$, $T_4$ can be ordered without interference from $T_2$, and
once that has been done, $T_2$ can be ordered to place the remaining leaves.
}\label{fig:probeOnes}
\end{figure}

\begin{algorithm}\label{alg:C1PM}
Solve an instance $\CiPM(M_R, M_C)$ of the consecutive-ones probe matrix problem, or determine
that no solution exists.

\begin{enumerate}
\item Let $T_1$ be the PQ tree of $M_C$ and $T_2$ be the PQ tree of $M_R$.
Return that the problem has no solution if $T_1$ or $T_2$ does not exist.

\item Let $T_3 \equiv T_2[C]$.   (See Figure~\ref{fig:probeOnes}.)
Now $T_1$ and $T_3$ have the same leaf sets.  
Find $T_1 \cap T_3$ or return that the problem has no
solution if $T_1 \cap T_3$ does not exist.

\item Let $\tau \in \Pi(T_1 \cap T_3)$.  This gives a left-to-right
numbering of the subset $C$ of columns of $M_R$.

\item Number each leaf of $T_2$ that is in $C$ in ascending
order of this numbering.  Label each internal node of $T_2$
with a {\em descendant number}, the leaf number of any
descendant, if a numbered leaf descendant exists.
The descendant number of a leaf is its leaf number.

\item For each P node, sort the children in ascending order of descendant labels.
Place unlabeled children anywhere in the ordering.

\item For each Q node, if it has at least two children with labels,
orient the Q node so that the child with the smaller descendant
label is earlier.  Otherwise, choose one of the two possible
orderings arbitrarily.

\item Return the resulting ordering $\pi$ of leaves of $T_2$.
\end{enumerate}
\end{algorithm}

\begin{lemma}\label{lem:C1PMCorrect}
Algorithm~\ref{alg:C1PM} is correct.
\end{lemma}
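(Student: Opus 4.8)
<br>

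The goal is to prove that Algorithm~\ref{alg:C1PM} correctly solves the consecutive-ones probe matrix problem. By Lemma~\ref{lem:C1PMTest}, it suffices to show two things: first, that if a solution exists, the algorithm produces an ordering $\pi$ of columns of $M_R$ that is a consecutive-ones ordering of $M_R$ and such that $\pi[C]$ is a consecutive-ones ordering of $M_C$; and second, that if the algorithm reports ``no solution'', then none exists. The plan is to handle the failure cases first, then argue that the constructed $\pi$ works.

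The easy direction is soundness of the failure reports. If $T_1$ (the PQ tree of $M_C$) or $T_2$ (the PQ tree of $M_R$) does not exist, then $M_C$ or $M_R$ has no consecutive-ones ordering, so by Lemma~\ref{lem:C1PMTest} there is no solution. For the second failure point: suppose a solution $\pi^*$ exists. Then $\pi^* \in \Pi(T_2)$, so $\pi^*[C] \in \Pi(T_2)[C] = \Pi(T_3)$ (using $T_3 \equiv T_2[C]$ and Definition~\ref{def:PQRestriction}). Also $\pi^*[C] \in \Pi(T_1)$ since $\pi^*[C]$ is a consecutive-ones ordering of $M_C$. Hence $\pi^*[C] \in \Pi(T_1) \cap \Pi(T_3)$, so this set is nonempty and $T_1 \cap T_3$ is defined (Definition~\ref{def:PQintersection}). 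Thus the algorithm does not report failure when a solution exists, which gives soundness.

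For the main direction, assume both trees exist and $T_1 \cap T_3$ exists, and let $\tau \in \Pi(T_1 \cap T_3)$ as chosen in Step~3. The key claim is that steps 4--7 produce an ordering $\pi \in \Pi(T_2)$ with $\pi[C] = \tau$. Since $\tau \in \Pi(T_3) = \Pi(T_2)[C]$, there exists some $\sigma \in \Pi(T_2)$ with $\sigma[C] = \tau$; steps 4--6 amount to choosing such a $\sigma$ greedily by orienting each P node and Q node of $T_2$ so that the $C$-labeled leaves come out in $\tau$-order, with the unlabeled leaves (columns in $C(M_R)\setminus C$) placed consistently. I would verify this by induction on the structure of $T_2$: at each internal node, the descendant-number labeling identifies, for each child, the relative position of its $C$-leaves; because $\tau$ restricted to the $C$-leaves below the node is realizable by \emph{some} ordering of $T_2$, the children's $C$-leaf blocks must themselves be orderable in $\tau$-order (for a P node, any permutation of children is allowed, so sorting by descendant number works; for a Q node, the constraint $\tau\in\Pi(T_3)$ forces the two allowed orientations to be compatible with at most one direction once two labeled children exist). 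Hence $\pi\in\Pi(T_2)$, so $\pi$ is a consecutive-ones ordering of $M_R$, and $\pi[C]=\tau\in\Pi(T_1)$, so $\pi[C]$ is a consecutive-ones ordering of $M_C$. By Lemma~\ref{lem:C1PMTest}, $\pi$ is a valid solution.

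The main obstacle is the inductive argument in the final direction: one must carefully argue that greedily orienting each node of $T_2$ independently, using only the \emph{descendant numbers}, cannot ``paint itself into a corner'' --- i.e., that local choices consistent with $\tau$ at each node aggregate to a global ordering in $\Pi(T_2)$ realizing $\tau$ on $C$. The clean way to see this is to use the fact that $\tau\in\Pi(T_2)[C]$ guarantees a witness ordering $\sigma\in\Pi(T_2)$, and then show that the orientation the algorithm picks at each node agrees with \emph{some} such witness (the witnesses may differ in the placement of unlabeled leaves, but on the labeled leaves all relevant witnesses agree with $\tau$, and the P/Q orientation rules are exactly those forced by $\tau$ when at least two labeled children are present, and free otherwise). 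A subtle point worth spelling out is the two-child node convention (a node with exactly two children is treated as both a P and a Q node), but since both interpretations permit exactly the same two orderings, the orientation rule in Step~6 handles it uniformly. Finally one notes that $\pi$ restricted to $C$ equals $\tau$ exactly, not merely an ordering in $\Pi(T_1\cap T_3)$, because Step~4 numbers the $C$-leaves by $\tau$ and steps 5--6 preserve that order among labeled leaves.
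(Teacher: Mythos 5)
Your proposal is correct and follows essentially the same route as the paper's proof: soundness of the failure reports via the observation that any solution $\pi^*$ yields $\pi^*[C] \in \Pi(T_1) \cap \Pi(T_3)$, and for the constructive direction, the existence of a witness $\sigma \in \Pi(T_2)$ with $\sigma[C] = \tau$ guaranteeing that the children's blocks of labeled leaves have disjoint, consistently ordered label ranges, so that the greedy per-node reordering aggregates by induction on the subtree into a valid ordering of $T_2$ realizing $\tau$ on $C$. The paper makes the disjoint-interval argument slightly more explicit (assigning each node the interval $[i,j]$ of its min and max descendant labels), but the substance is identical.
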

\begin{proof}
Suppose an instance of the problem has a solution.  Then let $\pi$ be a solution.
$T_2$ exists, since $\pi$ is a consecutive-ones ordering of $M_R$, and $T_1$
exists, since $\pi[C]$ is a consecutive-ones ordering of $M_C$.
Thus, $\pi[C] \in \Pi(T_1)$ and $\pi \in \Pi(T_2)$, hence $\pi[C] \in \Pi(T_3)$.
Since $\pi \in \Pi(T_1) \cap \Pi(T_3)$, this set is nonempty, so
$T_1 \cap T_3$ is defined, and the algorithm correctly claims that the problem
has a solution.  

Conversely, suppose that the algorithm claims that the problem has a solution.
Then $T_1 \cap T_3$ is defined.  The algorithm finds
$\tau \in \Pi(T_1 \cap T_3)$.  This implies
$\tau \in \Pi(T_1)$, so $\tau$ is a consecutive-ones ordering of $M_C$.
Since $\tau \in \Pi(T_3)$,  $\tau = \pi'[C]$ for for some $\pi' \in \Pi(T_2)$.
Because $\pi' \in \Pi(T_2)$,
$T_2$ can be ordered so that leaf numbers are ordered
in increasing order of leaf number.  This gives a consecutive-ones
ordering of $M_R$, since $T_2$ is the PQ tree of $M_R$.  A solution exists.
For each node $u$ of $T_2$, let $i$ and $j$ be
the minimum and maximum leaf label assigned to leaf descendants
by the algorithm, using $\tau$, and let $[i,j]$ be $u$'s {\em interval}.
Since $\pi'$ exists, it is a valid ordering of the leaves of $T_2$.
Since $\tau = \pi'[C]$, the intervals of the children are disjoint, and for each 
Q node, the ordering of these intervals on the number line is consistent
with the ordering of children of the Q node, or its reverse.
The procedure for ordering $T_2$ will therefore produce an ordering
of children at each internal node where the intervals of children are
disjoint and consistent with their ordering on the number line.
By induction on the size of the subtree rooted at a node of $T_2$,
this gives an ordering $\pi'' \in \Pi(T_2)$ where numbered leaves
are in ascending order, hence where $\pi''[C] = \tau$.  The algorithm returns
a correct solution.
\end{proof}

\begin{lemma}\label{lem:C1PMTime}
An instance $\CiPM(M_R, M_C)$ of the consecutive-ones probe matrix problem can be solved in time
linear in the number of rows, columns and 1's in  $M_R$ and $M_C$.
\end{lemma}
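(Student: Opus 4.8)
The plan is to bound the running time of Algorithm~\ref{alg:C1PM} step by step; its correctness is Lemma~\ref{lem:C1PMCorrect}. Write $\rho$ for the number of rows, columns, and $1$'s of $M_R$ and $\gamma$ for the corresponding quantity of $M_C$; the goal is an $O(\rho+\gamma)$ bound. Step~1 runs Booth and Lueker's algorithm on $M_R$ and on $M_C$, in $O(\rho)$ and $O(\gamma)$ time, and this also reports the no-solution cases in which $T_1$ or $T_2$ does not exist. Step~2 first builds $T_3\equiv T_2[C]$ using Algorithm~\ref{alg:PQRestrict}, whose cost is linear in the number of nodes of $T_2$; since $T_2$ has one leaf per column of $M_R$, it has $O(\rho)$ nodes. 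The computation of $T_1\cap T_3$ in Step~2, and Steps~3--7, are treated next.

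The one genuinely delicate point, which I expect to be the main obstacle, is the cost of $T_1\cap T_3$. Following Algorithm~\ref{alg:PQIntersect}, I would run Booth and Lueker's algorithm on the matrix whose rows are those of $M_C$ (whose PQ tree is $T_1$) together with those of $M(T_3)$; both matrices have column set $C$, so by the construction preceding Lemma~\ref{lem:MofT} this yields $T_1\cap T_3$, or else reports that no consecutive-ones ordering exists, which is the last no-solution case. The running time is linear in the size of this union, so everything hinges on showing that $M(T_3)$ has $O(\rho)$ ones and can be built in $O(\rho)$ time. Lemma~\ref{lem:MofT} gives exactly such a bound when the tree is \emph{the} PQ tree of a small matrix; $T_3=T_2[C]$ need not be the PQ tree of $M_R[][C]$ (see part~E of Figure~\ref{fig:PQRestrict}), but the charging argument in the proof of Lemma~\ref{lem:MofT} adapts, because every node of $T_3$ is a node $p$ of $T_2=T(M_R)$ with the leaves outside $C$ removed (and degree-one nodes contracted), so the row it contributes to $M(T_3)$ is $p\cap C$ (ambiguous two-child nodes being treated as $Q$ nodes). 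For $Q$ nodes the consecutive-pair estimate of Lemma~\ref{lem:MofT} applies unchanged, charging to the rows of $M_R$ restricted to $C$. For a $P$-node row $p\cap C$ with $p$ a non-root $P$ node of $T_2$, the fact from~\cite{MPT98} used in Lemma~\ref{lem:MofT} says either $p$ equals some row of $M_R$ --- and distinct $P$ nodes are distinct sets, so charging $p\cap C$ to that row of $M_R[][C]$ charges each $1$ at most once --- or $p$ is a child of a $Q$ node $A$ of $T_2$, in which case, since the children of $A$ are pairwise disjoint, charging $p\cap C$ to the row of $M_R$ witnessing $A$ (restricted to $C$) again charges each $1$ at most once per $Q$ node. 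Adding the root's contribution, charged to the columns, gives $|M(T_3)|=O(\rho)$, and the same bookkeeping produces a sparse representation of $M(T_3)$ in $O(\rho)$ time.

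It remains to bound Steps~3--7. Step~3 reads an ordering $\tau$ off $T_1\cap T_3$ in $O(|C|)$ time. Step~4 numbers the leaves of $T_2$ that lie in $C$ according to $\tau$ and propagates a descendant label up $T_2$ by one bottom-up traversal, in time linear in the number of nodes of $T_2$, hence $O(\rho)$. Step~5 sorts the labelled children at each $P$ node; doing this for all $P$ nodes together by collecting the (parent, child, label) triples and applying the two-pass stable radix sort of Section~\ref{sect:initial} --- first on the label, which lies in $\{1,\dots,|C|\}$, then on the parent --- costs $O(\rho)$ in total, after which the unlabelled children can be appended in any order. Step~6 orients each $Q$ node by inspecting its first two labelled children, in $O(1)$ per node, and Step~7 outputs the resulting order $\pi$ of the columns of $M_R$; both are linear in the number of nodes of $T_2$. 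Summing over all steps gives the claimed $O(\rho+\gamma)$ bound, with every no-solution exit accounted for within it.
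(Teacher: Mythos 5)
Your proof is correct and follows essentially the same route as the paper's: it bounds Algorithm~\ref{alg:C1PM} step by step, with the crux being that $M(T_3)$ has size $O(\rho)$ so that $T_1\cap T_3$ can be computed by one further run of Booth and Lueker's algorithm, followed by the radix sort of $P$-node children. The only substantive difference is that you explicitly justify applying the charging argument of Lemma~\ref{lem:MofT} to $T_3=T_2[C]$, which is not literally the PQ tree of a small matrix --- a point the paper passes over by citing Lemma~\ref{lem:MofT} directly --- and your adaptation of that argument is sound.
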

\begin{proof}
Finding $T_1$ and $T_3$ takes linear time by the algorithm
of Booth and Lueker.  

Finding $M(T_1)$ and $M(T_3)$ takes time linear in the sizes
of $M_R$ and $M_C$, by Lemma~\ref{lem:MofT}.
$T_1 \cap T_3$, is the PQ tree of the matrix whose rows are the union of rows
of $M(T_1)$ and $M(T_3)$. It takes linear time to construct $T_1 \cap T_3$, or to determine
that it does not exist by the algorithm of Booth and Lueker.

If $T_1 \cap T_3$ exists, then
numbering the leaves of this tree and then using it
to label the descendant numbers of leaves of $T_2$ takes $O(n)$
time.  It takes $O(n)$ time to label the descendant numbers
of the internal nodes of $T_2$ in postorder, letting each
node inherit its label from a child, if it has a labeled child.

By numbering the P nodes,  we may sort the children of all P nodes
with a single radix sort that uses parent number as the primary sort key
and descendant number as the secondary sort key.  There are $O(n)$
nodes in the tree, so this takes $O(n)$ time.  
\end{proof}

\begin{definition}\label{def:equivalence}
Two different bijections $\pi$ from a set $X$ to $\{1, 2, \ldots, |X|\}$
are {\em equivalent} if they are equal or one is the reverse of the other.
Two normal models $M$ and $M'$ of $G$ are {\em equivalent} if they are equal matrices, or one can be
obtained by reversing the column order of the other.  The normal model
for $G$ is {\em unique} if it only has two (equivalent) normal models.
\end{definition}

\begin{algorithm}\label{alg:C1PMUnique}
Determine whether a solution $\pi$ to an instance $\CiPM(M_R,M_C)$ found by Algorithm \ref{alg:C1PM}
is unique.

\begin{itemize}
\item If $M_R$ has at most two columns, return true.

\item If the only internal node of $T_2$ is a Q node, return true.

\item If $M_C$ has at least three columns and $T_1 \cap T_3$ fails to have a single internal node
that is a Q node, return false.

\item If not all children of a P node of $T_2$ are labeled with a descendant
number, return false.

\item If fewer than two children of a Q node of $T_2$ are labeled with descendant
numbers, return false.

\item Else return true.
\end{itemize}
\end{algorithm}

Note that failing to have a unique solution does not imply that
the taut matrices implied by the solutions are not equal up to reversal of column order, since the
different solutions could all be automorphisms of two matrices, where one is the reversal of
column order of the other.  
We do not address this second notion of uniqueness in this paper, since we are dealing
with matrices where the columns are labeled.  

\begin{lemma}\label{lem:C1PMUnique}
Algorithm~\ref{alg:C1PMUnique} is correct.
\end{lemma}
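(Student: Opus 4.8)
The plan is to verify that each of the six cases in Algorithm~\ref{alg:C1PMUnique} returns the correct verdict, using the characterization of solutions from the proof of Lemma~\ref{lem:C1PMCorrect}: a solution $\pi$ is obtained by choosing $\tau \in \Pi(T_1 \cap T_3)$ and then completing $\tau$ to a valid ordering of $T_2$. Two solutions are inequivalent precisely when either (a) there are two inequivalent choices of $\tau$, or (b) some single $\tau$ extends to two inequivalent orderings of $T_2$. So the algorithm must return \emph{true} exactly when neither kind of freedom is present. I would organize the proof around this dichotomy.

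\textbf{First}, I would handle the trivial and easy-true cases. If $M_R$ has at most two columns, any two orderings are equivalent by Definition~\ref{def:equivalence}, so \emph{true} is correct. If the only internal node of $T_2$ is a Q node, then $T_2$ itself has exactly two orderings (the Q node ordering and its reverse); every valid $\pi \in \Pi(T_2)$ is one of these, so regardless of how many choices of $\tau$ exist, all resulting solutions are equivalent, and \emph{true} is correct. (Here I would note that $\Pi(T_2) \subseteq$ the set of solutions once $T_1 \cap T_3$ is known to be defined, so uniqueness of $\Pi(T_2)$ up to reversal already forces uniqueness of the solution.)

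\textbf{Second}, I would handle the false cases, which is where source (a) and source (b) of non-uniqueness get detected. For source (a): if $M_C$ has at least three columns and $T_1 \cap T_3$ has a P node, or more than one internal node, then $\Pi(T_1 \cap T_3)$ contains at least two inequivalent orderings $\tau, \tau'$; I would argue these extend to inequivalent solutions $\pi, \pi'$ (the restriction to $C$ already differs by more than a reversal, so the full orderings cannot be equal or reverses of each other). For source (b): suppose $\tau$ is fixed. A P node of $T_2$ all of whose children carry descendant numbers is rigidly ordered by $\tau$; but if some child is \emph{un}labeled, that child can be placed in at least two positions (e.g.\ at either end among two labeled children, or if the P node has at least two children we can produce two inequivalent overall orderings), so \emph{false} is correct. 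Similarly, a Q node with at least two labeled children is oriented by $\tau$, but a Q node with fewer than two labeled children can be flipped independently of the rest of the tree, again giving two inequivalent solutions — this requires checking that such a flip is not globally cancelled by a simultaneous reversal, which holds because there is at least one other internal node (otherwise we would have returned \emph{true} in the second easy case) contributing an orientation that is not flipped.

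\textbf{Third}, the else branch: if none of the earlier conditions fired, then $M_R$ has $\geq 3$ columns, $T_2$ has an internal node that is not a Q node (i.e.\ a genuine P node or multiple internal nodes), $T_1 \cap T_3$ is a single Q node (so $\tau$ is unique up to reversal), every P node of $T_2$ has all children labeled, and every Q node of $T_2$ has at least two labeled children. I would then show that these conditions force $\pi$ to be determined up to global reversal: the unique $\tau$ rigidly orders every P node and orients every Q node, so by induction on subtree size the only freedom left is the reversal of $T_2$'s root. \textbf{The main obstacle} I expect is this induction and the careful bookkeeping of \emph{when a local flip is, or is not, absorbed by a compensating global reversal} — that is, making precise that two ``local'' choices at distinct nodes always produce genuinely inequivalent (not merely reverse-equivalent) leaf orderings. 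This is exactly the subtlety flagged in the remark after the algorithm about automorphisms, and it is where I would spend the most care, probably by fixing an orientation of some reference node and arguing all other choices are then independent.
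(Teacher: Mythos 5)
Your proposal is correct and follows essentially the same case analysis as the paper's proof: the two easy \emph{true} cases, detecting non-uniqueness either through inequivalent choices of $\tau\in\Pi(T_1\cap T_3)$ or through the residual freedom in extending a fixed $\tau$ within $T_2$ (unlabeled P-node children, under-labeled Q nodes), and the final rigidity argument for the \emph{else} branch. One small slip: in your parenthetical for the second easy case the inclusion should read that the solution set is contained in $\Pi(T_2)$, not $\Pi(T_2)\subseteq$ the solution set; your surrounding sentence already uses the correct direction, so the argument stands.
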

\begin{proof}
Every 0-1 matrix with fewer than three columns is consecutive-ones ordered, since there
is no way for a 0 to appear between two 1's in a row.  If $M_R$ has fewer than three
columns, then so does $M_C$, and any of the two orderings of $M_R$ is a solution.

Suppose $M_R$ has at least three columns.
By Lemma~\ref{lem:C1PMCorrect}, $T_1 \cap T_3$ is defined, since we have assumed
that the instance has a solution.
If the only internal node of $T_2$ is a Q node, then since every
solution is an element of $\Pi(T_2)$, there are only two solutions,
and one is the reverse of the other.

If $M_C$ has at least three columns, then $T_1 \cap T_3$ has at least three leaves.
Unless its only internal node is a Q node, it admits two orderings that are not
the reverse of each other.  In this case, Algorithm~\ref{alg:C1PM} can
produce two solutions solutions that are not the reverse of each other.

Otherwise, $T_2$ has at least three leaves, but does not have a single internal
node that is a Q node.
If not all children of a P node of $T_2$ are labeled with a descendant
number, then there are choices about how to order the unlabeled children,
and if fewer than two children of a Q node are labeled with descendant
numbers, we may choose one of the two allowed orderings of the children.
Since $T_2$ has at least three leaves and it does not have a single internal
node that is a Q node,  
these choices give rise to solutions that are not equivalent.
If on the other hand these conditions are satisfied, there is a unique way
to order $T_2$ (up to reversal).
\end{proof}

\begin{lemma}\label{lem:C1PMUniqueTime}
Algorithm~\ref{alg:C1PMUnique} takes $O(n+m)$ time.
\end{lemma}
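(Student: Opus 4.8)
The plan is to observe that Algorithm~\ref{alg:C1PMUnique} does essentially no fresh work: every object it inspects has already been produced by the call to Algorithm~\ref{alg:C1PM} that computed the solution $\pi$, and each inspection is either an $O(1)$ integer comparison or a single scan of a tree of size $O(n)$. First I would recall that by the time Algorithm~\ref{alg:C1PM} terminates it has built the PQ tree $T_2$ of $M_R$, the PQ tree $T_1 \cap T_3$, a leaf numbering of the leaves of $T_2$ lying in $C$, and a descendant-number label on every internal node of $T_2$ (steps 1, 2, and 4 of Algorithm~\ref{alg:C1PM}); by Lemma~\ref{lem:C1PMTime}, producing all of this takes $O(n+m)$ time. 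It therefore suffices to bound the additional work performed by Algorithm~\ref{alg:C1PMUnique}, assuming these structures are at hand.

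Next I would go through the six tests of Algorithm~\ref{alg:C1PMUnique} in order. Deciding whether $M_R$ has at most two columns, or whether $M_C$ has at least three columns, is immediate from the cardinalities of the respective column sets, which cost $O(1)$ if recorded when the matrices are built and $O(n)$ otherwise. Deciding whether the only internal node of $T_2$ is a Q node, and whether $T_1 \cap T_3$ has a single internal node that is a Q node, each amounts to one traversal of a PQ tree; since a PQ tree on $\ell$ leaves has $O(\ell)$ nodes, and $T_2$ and $T_1 \cap T_3$ each have $O(n)$ leaves, these traversals cost $O(n)$. The last two tests --- whether every P node of $T_2$ has all of its children labeled with a descendant number, and whether every Q node of $T_2$ has at least two labeled children --- are handled together by a single walk of $T_2$ that, at each internal node, inspects the descendant labels of its children; the cost is proportional to the number of edges of $T_2$, which is $O(n)$. (Under the paper's convention a node with exactly two children counts as both a P node and a Q node, but for such a node both tests reduce to ``both children are labeled,'' so the ambiguity is harmless.)

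Summing up, Algorithm~\ref{alg:C1PMUnique} does $O(n)$ work beyond what Algorithm~\ref{alg:C1PM} has already done, and by Lemma~\ref{lem:C1PMTime} the latter is $O(n+m)$, which gives the claimed bound. I expect no real obstacle here; the only points that need care are the bookkeeping that $T_2$, $T_1 \cap T_3$, and the descendant labeling are genuine outputs of the preceding call to Algorithm~\ref{alg:C1PM} and need not be recomputed, together with a routine but careful handling of the two-children convention so that the P-node and Q-node tests are applied consistently.
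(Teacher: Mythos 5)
Your proposal is correct and follows the same route as the paper's proof, which simply notes that all the work of Algorithm~\ref{alg:C1PMUnique} is bounded by the operations already performed by Algorithm~\ref{alg:C1PM} and invokes Lemma~\ref{lem:C1PMTime}. Your version just spells out the per-test accounting that the paper leaves implicit.
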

\begin{proof}
The time for the operations is bounded by the time for the operations of
Algorithm~\ref{alg:C1PM}, so the lemma follows by Lemma~\ref{lem:C1PMTime}.
\end{proof}

\section{Finding a normal model of $G$}\label{sect:finalStep}

Suppose $G$ is a probe interval graph.
Our strategy is to find a matrix $M_P$ that has a
consecutive-ones ordering $M^*_P$ equal to 
$M_G[P]$ for some normal model $M_G$ of $G$.
Each column of $M_P$ that is not a column of $M_N[P]$ is
the neighborhood of a member of $N_S$.
By Theorem~\ref{thm:invariantCols},
some consecutive-ones ordering $M^*_N$ of $M_N$ is 
a submatrix of $M_G$.  

Therefore, given $M_P$ and $M_N$, we can find a solution
to $\CiPM(M_P, M_N)$ or reject $G$ if no solution exists.
If a solution exists, let $M$ be the implied
taut matrix.  We show that $M$ is a model of $G - N_S$, and, for each
$x \in N_S$, there is a column of $M$ whose probe set is
$N(x)$; we add $x$ to such a column.   We first show that
this is a probe interval model for $G$, and then
show that the model is a normal one, which gives
an $O(n+m)$ bound on the number of 1's in the resulting
matrix.  The $O(n+m)$ time bound therefore follows from 
Lemma~\ref{lem:C1PMTime}.

\begin{definition}\label{def:MPcols}
Let ${\cal S} = \{N(x) \mid x \in N_S$ and $N(x) \neq C[P]$ for any
column $C$ of $M_N\}$.
Let $M_P$ be a matrix obtained by adding each set of ${\cal S}$
as a new column to $M_N[P]$.
\end{definition}

\begin{lemma}\label{lem:MP}
If $M_G$ is a normal model of $G$, 
then some consecutive-ones ordering of $M_P$ is a submatrix of $M_G[P]$. 
\end{lemma}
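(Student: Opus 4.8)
The plan is to exhibit an injective, probe-content-preserving map $\phi$ from the columns of $M_P$ to the columns of $M_G$, and then to read off a consecutive-ones ordering of $M_P$ from the column order of $M_G$, realized as the submatrix of $M_G[P]$ on row set $P$ and column set $\phi(C(M_P))$.

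First I would fix a normal model $M_G$ of $G$ and let $X$ be its set of non-simplicial (i.e.\ clique and semi-clique) columns. As shown in the proof of Theorem~\ref{thm:invariantCols}, deleting the simplicial columns from $M_G$ does not change the graph represented by $M_G[V \setminus N_S]$, and the result $M_G[V \setminus N_S][X]$ is taut and minimal, hence a normal model of $G - N_S$. By Theorem~\ref{thm:allModels}, $M_G[V \setminus N_S][X]$ is therefore a consecutive-ones ordering of $M_N$; in particular $X$ is the column set of $M_N$, and for every such column $c$ the probe set of $c$ in $M_N$ equals its probe set in $M_G$. Restricting to probe rows, this says that the columns of $M_N[P]$ are exactly the columns of $M_G[P]$ lying in $X$, with identical contents, and I let $\phi$ be the identity on them.

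Next I would handle the columns of $M_P$ coming from ${\cal S}$. Let $x \in N_S$ with $N(x)$ not the probe set of any column of $M_N$. Since $x$ is simplicial and $M_G$ is normal, Lemma~\ref{lem:simplicialPoints} puts $x$ in a unique column $c_x$ of $M_G$; because non-probes are pairwise nonadjacent, the neighbors of $x$ in $G$ are precisely the probes that share $c_x$, so the probe set of $c_x$ is $N(x)$. By the previous paragraph the probe sets of columns in $X$ are exactly the probe sets of columns of $M_N$, so $c_x \notin X$, i.e.\ $c_x$ is a simplicial column of $M_G$ (using the trichotomy of Lemma~\ref{lem:C1C2C3}). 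I set $\phi(N(x)) = c_x$. This $\phi$ is injective and total: distinct members $N(x) \neq N(x')$ of ${\cal S}$ yield columns $c_x, c_{x'}$ with distinct probe sets, hence distinct columns; the columns $c_x$ lie outside $X$ while the images of the columns of $M_N[P]$ lie in $X$; and by the definition of ${\cal S}$ no $N(x)$ equals the probe set of a column of $M_N[P]$, so there is no collision there. By construction the probe set of each column $c$ of $M_P$ equals the probe set of $\phi(c)$ in $M_G$.

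Finally I would order the columns of $M_P$ according to the left-to-right order of their $\phi$-images in $M_G$. Since $M_G$ is consecutive-ones ordered (Definition~\ref{def:model}), the $1$'s of any probe row occupy a contiguous block of columns of $M_G$; intersecting this block with the subsequence $\phi(C(M_P))$ leaves a contiguous block of that subsequence, and since $\phi$ preserves probe content this block is exactly the $1$-set of that probe in $M_P$. Hence the chosen ordering is a consecutive-ones ordering of $M_P$ that appears as the submatrix of $M_G[P]$ on rows $P$ and columns $\phi(C(M_P))$. I expect the main obstacle to be the bookkeeping of the middle step — verifying that each $N(x) \in {\cal S}$ genuinely occupies a simplicial column of $M_G$ whose probe set is exactly $N(x)$, and that these columns are pairwise distinct and distinct from the columns inherited from $M_N$ — so that $\phi$ is simultaneously total, injective, and probe-content-preserving; once that is secured, the consecutivity claim is the elementary observation that a subsequence of a contiguous block is contiguous.
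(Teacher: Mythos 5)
Your proposal is correct and follows essentially the same route as the paper: the columns of $M_P$ inherited from $M_N[P]$ are located in $M_G[P]$ via Theorem~\ref{thm:invariantCols}, the columns from ${\cal S}$ are located via the unique column occupied by each simplicial non-probe (Lemma~\ref{lem:simplicialPoints}), and consecutivity follows because a submatrix of a consecutive-ones ordered matrix is consecutive-ones ordered. The paper compresses the first two points into one sentence (calling the ${\cal S}$ case ``self-evident''), whereas you spell out the injectivity and probe-content-preservation bookkeeping explicitly; the substance is the same.
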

\begin{proof}
The necessity of columns in ${\cal S}$ in $M_G[P]$ is self-evident, and the necessity
of the remaining columns of $M_P$, which are given by $M_N[P]$ follows from 
Theorem~\ref{thm:invariantCols}. $M_G$ is consecutive-ones ordered, and every 
submatrix of a consecutive-ones ordered matrix is consecutive-ones ordered.
\end{proof}

\begin{lemma}\label{lem:M}
If $\CiPM(M_P,M_N)$ 
has no solution, then $G$ is not a probe interval graph.  Otherwise,
the taut matrix $M$ implied by a solution is a probe interval
model of $G - N_S$.
\end{lemma}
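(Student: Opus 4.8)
The plan is to establish the two halves of the statement separately, using the machinery already built. For the first half, suppose $\CiPM(M_P, M_N)$ has no solution. By Lemma~\ref{lem:C1PMTest}, this means there is no consecutive-ones ordering $\pi$ of the columns of $M_P$ such that $\pi$ restricted to the columns of $M_N$ is also a consecutive-ones ordering of $M_N$. But by Lemma~\ref{lem:MP}, if $G$ were a probe interval graph with normal model $M_G$ (which exists by Lemma~\ref{lem:normalModel}), then some consecutive-ones ordering of $M_P$ would be a submatrix of $M_G[P]$, and by Theorem~\ref{thm:invariantCols} the same ordering, restricted to the clique and semi-clique columns, would be a submatrix of $M_G[V \setminus N_S]$, hence a consecutive-ones ordering of $M_N$. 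This contradicts the nonexistence of a solution. So $G$ is not a probe interval graph.

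For the second half, assume a solution $\pi$ exists and let $M$ be the implied taut matrix. By Definition~\ref{def:C1PM}, $\pi$ is a consecutive-ones ordering of $M_P$, and $M$ is obtained by setting an $\ast$ in row $v$ (a probe) and column $c$ (from ${\cal S}$) to $1$ exactly when $c$ falls between two $1$'s of $M_N$-columns in $\pi$. First I would observe that $M$ is consecutive-ones ordered: each row of $M_P$ is already consecutive in $\pi$, and each row that is only in $M_N$ (i.e.\ a non-probe of $V \setminus N_S$) becomes consecutive precisely because the $\ast$'s between its extreme $M_N$-column $1$'s are filled with $1$'s, exactly as in the proof of Lemma~\ref{lem:C1PMTest}. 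Then I would argue that $M$ represents $G - N_S$ as a probe interval graph. The key point is that the probe set of each column of $M$ that came from ${\cal S}$ is $N(x)$ for some simplicial $x \in N_S$, which is a \emph{subset} of a clique of $G[P]$; therefore inserting such a column between two clique columns does not create any new adjacency among probes beyond those already forced, and in particular it does not create any adjacency between two vertices of $G - N_S$ that are nonadjacent in $G$. Conversely, every edge of $G - N_S$ is already witnessed in $M_N$, hence in $M$, since $M_N$ is a model of $G - N_S$ and the columns of $M_N$ survive (in the same relative order) as columns of $M$. A probe $p$ and a non-probe $y \in V \setminus N_S$ share a column of $M$ iff they share a column of $M_N$ (the new ${\cal S}$-columns contain no non-probes of $V \setminus N_S$), which happens iff $py \in E$. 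This gives that $M$ is a probe interval model of $G - N_S$.

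I expect the main obstacle to be the verification that the new columns drawn from ${\cal S}$ do not corrupt the represented graph on $V \setminus N_S$ --- specifically, ruling out that a probe $p$ gets incorrectly joined, via one of these inserted columns, to a probe or a non-probe of $V \setminus N_S$ that it should not be adjacent to. The handle on this is that each such column's probe set is a subset of a single clique $K$ of $G[P]$ (because $N(x)$ is complete for simplicial $x$, hence contained in some clique by the Helly property), so any probe in it is already adjacent to every other probe in it via the clique column containing $K$; and since the column contains no non-probe of $V \setminus N_S$, it introduces no probe--non-probe edge at all among $V \setminus N_S$. One must also check that the insertion does not merge or shift the existing $M_N$-columns in a way that loses an adjacency, but since $\pi[C(M_N)]$ is by hypothesis a consecutive-ones ordering of $M_N$, the relative order of the $M_N$-columns is preserved and every block of $1$'s of $M_N$ remains a block in $M$. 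Tautness of $M$ is immediate from Definition~\ref{def:C1PM}. This completes the argument that $M$ is a probe interval model of $G - N_S$.
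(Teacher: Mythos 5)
Your first half is correct and follows the paper's route: a normal model $M_G$ exists by Lemma~\ref{lem:normalModel}, and Lemma~\ref{lem:MP} together with Theorem~\ref{thm:invariantCols} shows that $M_G[P]$ yields a solution to $\CiPM(M_P,M_N)$, so the absence of a solution lets us reject $G$.

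The second half has a genuine gap. Your argument rests on the parenthetical claim that ``the new ${\cal S}$-columns contain no non-probes of $V \setminus N_S$,'' and you repeat this premise when you dismiss the probe--non-probe case. That claim is false. By Definition~\ref{def:C1PM}, the taut matrix $M$ sets an $\ast$ in row $y$, column $c$ to $1$ whenever $c$ lies between two $1$'s of $y$ in columns of $M_N$; so a non-probe $y \in N_1 \cup N_2$ whose interval in $M^*_N$ spans two consecutive columns $c_1, c_2$ of $M^*_N$ acquires a $1$ in every inserted ${\cal S}$-column placed between them. (Figure~\ref{fig:badSimplicials} shows exactly this: the dashed non-probe intervals pass through the simplicial columns $b$ and $f$.) Consequently your case analysis never rules out the one dangerous scenario: a probe $p \in N(x)$ and a non-probe $y \in N_1\cup N_2$ that are nonadjacent in $G$ both ending up in an inserted column $c$, which would misrepresent the graph. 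The paper closes this case by observing that the probe set of such a $c$ is a subset of the probe set of $c_1$ or of $c_2$ (Lemma~\ref{lem:nextProbes}); since any $y$ that contains $c$ in the taut matrix must contain both $c_1$ and $c_2$, every probe of $c$ already shares a column of $M^*_N$ with $y$, hence is already a neighbor of $y$ in $G - N_S$, and no new adjacency is represented. Your probe--probe argument (via completeness of $N(x)$) and your argument that existing edges survive are fine; it is only this probe--non-probe interaction inside the inserted columns that needs the additional containment argument.
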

\begin{proof}
Suppose $G$ is a probe interval graph.
Let $M_G$ be a normal model of $G$.  By Theorem~\ref{thm:invariantCols} and Lemma~\ref{lem:MP},
$M_G[P]$ contains a submatrix that is a solution to the consecutive-ones
probe matrix problem on $M_P$ and $M_N$.  Therefore, we can reject
$G$ if this problem has no solution.

Otherwise, let $\pi$ be a solution, and let $M^*_P$ and $M^*_N$ be the consecutive-ones
ordering of $M_P$ and $M_N$ that it gives.
Since $M_N$ is a model of $G - N_S$, so is $M^*_N$.
Let $c$ be a simplicial column of $M_P$ that $\pi$ places between
two neighboring columns $c_1$ and $c_2$ of $M^*_N$.
Since $c$ is not a clique column, the probe set of $c$ is a subset
of $c_1$ or the probe set of $c_2$, by Lemma~\ref{lem:nextProbes}.
Therefore, its probe set is a subset of the neighborhood
of every $v \in V \setminus N_S$ that contains $c$ in $M$,
and its inclusion n $M$ does not alter the represented
neighborhood of $v$.
\end{proof}

\begin{algorithm}\label{alg:MG}
Find a normal model of $G$ from $M_P$ and $M_N$, or determine that
$G$ is not a probe interval graph.

\begin{enumerate}
\item If $\CiPM(M_P,M_N)$ has no solution, return that $G$ is not
a probe interval graph.

\item Otherwise, let $M$ be the taut matrix implied by a solution.  Fill
in the $\ast$'s to get a sparse representation of $M$.
This is $M_G[V \setminus N_S]$.

\item For each $x \in N_S$, add a row for $x$, and put a 1 in a column $c$ of $M$
that has $N(x)$ as its probe set, and return the resulting matrix.
This is $M_G$.
\end{enumerate}
\end{algorithm}

\begin{lemma}\label{lem:MGCorrect}
Algorithm~\ref{alg:MG} is correct.
\end{lemma}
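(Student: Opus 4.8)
The plan is to follow the two branches of Algorithm~\ref{alg:MG}. If $\CiPM(M_P,M_N)$ has no solution, Lemma~\ref{lem:M} says $G$ is not a probe interval graph, so the first branch is correct. If it has a solution, let $M$ be the taut matrix implied by it; by Lemma~\ref{lem:M}, $M$ is a probe interval model of $G-N_S$, so $M = M_G[V\setminus N_S]$ is what step~2 produces. I first check that step~3 is well defined: for each $x\in N_S$, Definition~\ref{def:MPcols} ensures $N(x)$ is either the probe set of a column of $M_N$ or a member of ${\cal S}$, which was added as a column of $M_P$; either way this column occurs in $M$, and the only rows $M$ adds to $M_P$ are non-probes of $N_1\cup N_2$, so its probe set is still exactly $N(x)$, and $x$ has a column to go in.

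Next I would verify that $M_G$ is a probe interval model of $G$ by checking Definition~\ref{def:model} for each pair of vertices. For two vertices of $V\setminus N_S$ the intervals in $M_G$ are the same as in $M$, since step~3 adds no columns, and their adjacency in $G$ agrees with that in $G-N_S$, so $M$ being a model of $G-N_S$ suffices. For $x\in N_S$ placed in a column $c$ with probe set $N(x)$: since $x$ is a non-probe, the pair $\{x,v\}$ satisfies the adjacency condition exactly when $v$ is a probe contained in $c$, i.e.\ when $v\in N(x)$, which, since $x$ is simplicial, is precisely the neighbourhood of $x$. Two vertices of $N_S$ are non-probes, hence never adjacent, in $G$ or in $M_G$. (Connectedness of $G$ gives $N(x)\neq\emptyset$, which is used below.)

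The main obstacle is to show that $M_G$ is a \emph{normal} model of $G$, that is, taut and minimal (Definition~\ref{def:normal}). The structural facts I would use: because the solution restricted to the columns of $M_N$ is a consecutive-ones ordering of $M_N$, the submatrix of $M$ on the columns of $M_N$ is a consecutive-ones ordering $M^*_N$ of $M_N$, hence by Theorem~\ref{thm:allModels} a normal model of $G-N_S$; the remaining columns of $M_G$ come from ${\cal S}$, each carries the $N_S$ vertices with that neighbourhood, and by Lemma~\ref{lem:nextProbes} each such column has a probe set that is a proper subset of the probe set of a neighbouring column. For tautness: an $N_S$ row has a single $1$ and nonempty neighbourhood, so contracting it changes the represented neighbourhood. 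For a row $v$ of $V\setminus N_S$, every endpoint of $v$ in $M_G$ is either (i) a column from ${\cal S}$, where the $N_S$ vertex placed there, which occurs in no other column, witnesses tautness; or (ii) a column $c$ of $M_N$, where the endpoint is taut in the normal model $M^*_N$, and I must argue that its witness survives in $M_G$. The key point in case~(ii) is that consecutiveness of the probe rows of $M_P$ forces a column of ${\cal S}$ inserted immediately toward the interior of $v$'s interval to have a probe set so constrained that some probe of $c$ must still have its block of $1$'s ending at $c$, and such a probe remains a witness, so contracting $c$ still changes $N(v)$. For minimality: two consecutive columns that could be merged are either two columns of $M_N$, which remain consecutive after deleting the interleaved ${\cal S}$-columns and the $N_S$ rows, contradicting the minimality of $M^*_N$ (with Lemma~\ref{lem:noSubsets} ruling out a subset relation); or one is a column $N(x)$ of ${\cal S}$, whose merge with a neighbour -- a merge can only enlarge columns -- would place $x$ in a column with a larger probe set than $N(x)$, changing $N(x)$, a contradiction.

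Finally, once $M_G$ is known to be normal, Lemma~\ref{lem:boundedOnes} bounds its number of $1$'s by $O(n+m)$, which justifies the remark that step~2 may fill in the $\ast$'s explicitly within the overall time budget; but the substance of this lemma is the correctness argument above, the delicate part being the interaction in case~(ii) between the inserted ${\cal S}$-columns and the tautness witnesses inherited from the normal model of $G-N_S$.
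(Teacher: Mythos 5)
Your proof follows essentially the same route as the paper's: reject via Lemma~\ref{lem:M}, use Definition~\ref{def:MPcols} to place each $x\in N_S$ in a column with probe set $N(x)$ and conclude $M_G$ is a model, then establish normality by combining tautness and minimality of $M^*_N$ with the special properties of the inserted simplicial columns. If anything you are more explicit than the paper on the one delicate point (that a tautness witness for an endpoint in a non-simplicial column survives the insertion of adjacent ${\cal S}$-columns, which follows from consecutiveness of probe rows of $M_P$ together with the fact that no member of ${\cal S}$ equals the probe set of a column of $M_N$), which the paper asserts without comment.
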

\begin{proof}
By Lemma~\ref{lem:M}, if $\CiPM(M_P,M_N)$ has no solution, $G$ is not a probe
interval graph.  Otherwise, the taut matrix $M = M_G[V \setminus N_S]$ implied by a solution
is a model of $G - N_S$.
The only other adjacencies that need to be represented are between
members of $N_S$ and members of $P$.
By Definition~\ref{def:MPcols}, for each $x \in N_S$, there exists a 
column $c$ whose probe set is equal to $N(x)$.
Placing $x$ in $c$ correctly represents its neighborhood.
This has no effect on adjacencies between other pairs.  
It follows that doing this for all $x \in N_S$
gives a model $M_G$ of $G$.

To see that this is a normal model, observe
that simplicial columns have
different probe sets from all other columns, by Definition~\ref{def:MPcols},
and at least one simplicial non-probe.  Therefore, no simplicial column 
can be merged with a neighboring column.  No non-simplicial column can be merged
with any neighboring non-simplicial column, since these columns
contain a submatrix $M^*_N$ that is a consecutive-ones ordering of $M_N$,
which is a normal model by Theorem~\ref{thm:allModels}.  
Every simplicial non-probe is taut since it occupies only one column.
Every endpoint of a member of $N_1 \cup N_2$ is taut in the taut
matrix $M$; since $M$ is taut, each element of $N_1 \cup N_2$
continues to have its endpoint in a column of $M^*_N$, where it was
taut, since $M^*_N$ is a normal model.
The same argument applies for every endpoint
of a probe that remains in a non-simplicial column.
An endpoint of a probe in a simplicial column is taut, because
the column contains a simplicial non-probe neighbor that resides
only in that column.  The model is a normal one.
\end{proof}

\begin{lemma}\label{lem:MGTime}
Algorithm~\ref{alg:MG} takes $O(n+m)$ time.
\end{lemma}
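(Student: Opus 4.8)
The plan is to show that each of the three steps of Algorithm~\ref{alg:MG} runs in $O(n+m)$ time, which reduces to checking that every matrix the algorithm touches has $O(n+m)$ rows, columns, and 1's.

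First I would handle Step~1. It calls the solver of Algorithm~\ref{alg:C1PM} on $\CiPM(M_P,M_N)$, whose cost is, by Lemma~\ref{lem:C1PMTime}, linear in the number of rows, columns, and 1's of $M_P$ and $M_N$. If $G$ has not already been rejected, then $M_N$ is a normal model of $G-N_S$, so Lemma~\ref{lem:boundedOnes} gives it at most $n$ columns and $O(n+m)$ 1's, and its row set $V\setminus N_S$ has at most $n$ elements. For $M_P$ (Definition~\ref{def:MPcols}): its row set is $P$; its columns are those of $M_N[P]$ together with the distinct sets $N(x)$ for $x\in N_S$, at most $2n$ columns in all; and its 1's are those of $M_N[P]$, numbering $O(n+m)$, plus at most $\sum_{x\in N_S}|N(x)|=O(m)$ more among the added columns, since the neighborhood of a non-probe consists of probes. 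Hence both matrices have size $O(n+m)$ and Step~1 costs $O(n+m)$.

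For Step~2, I would invoke the discussion following Definition~\ref{def:C1PM}: a solution yields, in time linear in the sizes of $M_P$ and $M_N$, the compact representation of the taut matrix $M$ recording the first and last 1 in each row. By Lemma~\ref{lem:MGCorrect} the output $M_G$ is a normal model of $G$, so $M=M_G[V\setminus N_S]$ is a submatrix of a normal model and has $O(n+m)$ 1's by Lemma~\ref{lem:boundedOnes}; thus emitting, for each row, the contiguous block of 1's between its recorded endpoints produces a sparse representation of $M$ in $O(n+m)$ time.

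The step that needs the most care is Step~3, since the algorithm statement does not say how to locate, for a given $x\in N_S$, a column of $M$ whose probe set is $N(x)$. Such a column exists because columns keep their identity under reordering (Section~\ref{sect:prelim}) and, by the construction of $M_P$, $N(x)$ is the probe set of some column of $M_P$, hence of $M$. To find the correspondence within the bound I would use the radix-sorting technique of Section~\ref{sect:initial}: list $S(c)\cap P$ in sorted order for each column $c$ of $M$ and $N(x)$ in sorted order for each $x\in N_S$, sort all of these lists lexicographically, discard duplicates, and match equal lists. The total length of the column lists is at most the number of 1's of $M$, which is $O(n+m)$, and the $N(x)$ lists contribute $\sum_{x\in N_S}|N(x)|=O(m)$, so this runs in $O(n+m)$ time; appending the $|N_S|\le n$ new rows, each with a single 1, costs $O(n)$. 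Summing the three steps gives the claimed $O(n+m)$ bound.
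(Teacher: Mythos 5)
Your proposal is correct and follows essentially the same route as the paper: bound the sizes of $M_N$ and $M_P$ via Lemma~\ref{lem:boundedOnes} and the structure of Definition~\ref{def:MPcols}, invoke Lemma~\ref{lem:C1PMTime} for Step~1, and note that the taut matrix is a submatrix of a normal model so it has $O(n+m)$ 1's. The only difference is in Step~3, where you locate the column with probe set $N(x)$ by radix-sort matching, whereas the paper simply records this correspondence at the time $M_P$ is built (since columns retain their identity under reordering, as you yourself note); both are within the bound.
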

\begin{proof}
$M_N$ has $O(n+m)$ 1's, by Lemma~\ref{lem:boundedOnes},
because it is a normal model of an induced subgraph of $G$.
Every column of $M_P$ is either a column of $M_N[P]$,
or $N(x)$ for some $x \in N_S$, so the number of 1's in it
is bounded by the size of $M_N$ plus the sum of degrees
of vertices in $N_S$, hence $M_P$ has $O(n+m)$ 1's.
If the algorithm rejects $G$, it therefore takes $O(n+m)$ time
to do so by Lemma~\ref{lem:C1PMTime}.  Otherwise, since a solution
$M$ to this problem is a submatrix of $M_G$, it has $O(n+m)$ 1's,
since $M_G$ is a normal model of $G$, by Lemma~\ref{lem:boundedOnes},
and takes $O(n+m)$ time to produce using elementary sparse matrix
operations.  In addition, for every $x \in N_S$, we identified
a column whose probe set was $N(x)$ when we created $M_P$.
Adding each such $x$ to such a column takes $O(n+m)$ time.
\end{proof}

Summarizing, we obtain the following, which is the main result of the
paper.

\begin{theorem}\label{thm:recognition}
Given a graph $G$ and a partition $\{P,N\}$ of its vertices, 
where $N$ is an independent set,
it takes $O(n+m)$ time to determine whether there is a probe
interval representation of $G$ where $P$ is the set of probes and $N$
is the set of non-probes, and to construct such a representation if it exists.
\end{theorem}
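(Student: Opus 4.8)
The plan is to assemble the pieces developed in Sections~\ref{sect:initial} through~\ref{sect:finalStep} into a single algorithm and then argue that it is correct and runs in $O(n+m)$ time. First I would reduce to the case that $G$ is connected, running the algorithm on each component and taking the disjoint union of the resulting models (Section~\ref{sect:initial}); if any component fails to be a probe interval graph, so does $G$. The algorithm then proceeds through the sequence of matrices outlined in the Overview: compute a consecutive-ones ordered clique matrix $M_K$ of $G[P]$ with the algorithm of~\cite{BL76}, rejecting if $G[P]$ is not an interval graph; partition $N$ into $N_1$, $N_2$, $N_S$ and compute $Q(x)$ for each non-probe by Lemma~\ref{lem:n1n2n3}; form $M^+_K$ by adjoining the rows $Q(x)$ for $x\in N_1\cup N_2$ and taking a consecutive-ones ordering, rejecting if none exists; add the constraint rows for the representative non-probe--probe and probe--probe binding pairs (Lemmas~\ref{lem:repPairs} and~\ref{lem:repPairs-2}), take a consecutive-ones ordering $M'_K$, and set $M^*_K=M'_K[V\setminus N_S]$ (Lemma~\ref{lem:MPTime}); insert the semi-clique columns between consecutive columns of $M^*_K$ in the unique order dictated by Lemmas~\ref{lem:descendingAscending} and~\ref{lem:stretchEndpoints}, rejecting if the required descending/ascending structure fails, to obtain $M_N$; build $M_P$ by adjoining the columns $N(x)$, $x\in N_S$, that are not already probe sets of columns of $M_N$ (Definition~\ref{def:MPcols}); solve $\CiPM(M_P,M_N)$ with Algorithm~\ref{alg:C1PM}, rejecting if there is no solution; and finally run Algorithm~\ref{alg:MG} to fill in the implied taut matrix and place each simplicial non-probe in a column whose probe set equals its neighborhood.

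For correctness, the key point is that every rejection the algorithm makes is justified: in each case the property whose failure triggers the rejection was shown, in the corresponding section, to be a necessary condition for $G$ to be a probe interval graph. Conversely, if $G$ is a probe interval graph, then $G-N_S$ is one as well, and by Lemma~\ref{lem:normalModel}, Theorem~\ref{thm:allModels}, and Theorem~\ref{thm:invariantCols} each intermediate matrix is a submatrix of, or a consecutive-ones ordering of a submatrix of, a normal model of the appropriate subgraph, so none of the rejection tests fires. Lemmas~\ref{lem:constraintsSufficient}, \ref{lem:stretchEndpoints}, \ref{lem:M}, and~\ref{lem:MGCorrect} then show that the construction terminates with a normal model $M_G$ of $G$, which by Definition~\ref{def:model} is a probe interval model witnessing that $G$ is a probe interval graph with probes $P$ and non-probes $N$.

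For the time bound, every matrix handled along the way has $O(n+m)$ 1's: $M_K$ because $G[P]$ is chordal; $M^+_K$, $M_N$, and the taut matrix $M$ because each is a submatrix of a normal model and hence bounded by Lemma~\ref{lem:boundedOnes}; the constraint rows adjoined to obtain $M'_K$ because each vertex participates in at most a constant number of representative pairs (Lemmas~\ref{lem:Y1Y2} and~\ref{lem:Y1Y2-2}), as accounted in Lemma~\ref{lem:MPTime}; and $M_P$ because its extra columns contribute at most the sum of degrees of vertices in $N_S$. Booth and Lueker's consecutive-ones algorithm, the radix-sorting routines of Section~\ref{sect:initial}, and $\CiPM$ (Lemma~\ref{lem:C1PMTime}) therefore all run in $O(n+m)$ time on these inputs, and adding the simplicial non-probes at the end is also $O(n+m)$; summing over the finitely many stages gives the claimed bound.

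I expect the main obstacle --- already discharged by the lemmas above rather than by this theorem --- to be precisely the interplay between ``necessity of each rejection'' and ``the $O(n+m)$ size of every intermediate matrix'': the honest difficulty of the paper lies in showing, via Theorems~\ref{thm:allModels} and~\ref{thm:invariantCols} and the binding-constraint analysis, that the clique columns, semi-clique columns, and binding constraints of $G-N_S$ are model-invariant and admit a linear-size certificate, so that the reduction to $\CiPM$ is both sound and efficient. Given those results in hand, the proof of the present theorem is a bookkeeping argument that chains the preceding lemmas together in order.
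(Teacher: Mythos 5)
Your proposal is correct and matches the paper's own treatment: the paper offers Theorem~\ref{thm:recognition} as a summary (``Summarizing, we obtain the following'') of exactly the pipeline you describe, with correctness and the $O(n+m)$ bound discharged by the same chain of lemmas you cite (Lemmas~\ref{lem:n1n2n3}, \ref{lem:MPTime}, \ref{lem:constraintsSufficient}, \ref{lem:boundedOnes}, \ref{lem:C1PMTime}, \ref{lem:MGCorrect}, and~\ref{lem:MGTime}). Your closing remark that the real content lives in the model-invariance and linear-size-certificate results rather than in this theorem is also an accurate reading of the paper.
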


\section{Uniqueness of the model}

If a disconnected probe interval graph has more than two columns in a normal model,
it does not have a unique normal model,
since the order among the components is not constrained,
and columns in one component do not constrain the orderings
of columns in other components.
If the graph has only two components, each with a single column,
then the model is unique up to reversal.  Henceforth, we may resume
our assumption that $G$ is connected.

\begin{algorithm}\label{alg:unique}
Test whether the model $M_G$ returned by Algorithm~\ref{alg:MG} for a connected
graph $G$ is the unique normal model of $G$.

\begin{enumerate}
\item\label{testA} If Algorithm~\ref{alg:C1PMUnique} determines
that $\CiPM(M_P, M_N)$ does not have a unique solution, return false.

\item\label{testB} Else, if $M_G$ has only two columns and one of them
is a simplicial column with two simplicial non-probes, return false.

\item\label{testC}  Else, if $M_G$ has at least three columns, let $T_2$ be
$T(M_P)$ as in Algorithm~\ref{alg:C1PM}. If some non-clique column of $M_G$ that contains a simplicial
non-probe is a child of a P node in $T_2$, return false.

\item  Else return true.
\end{enumerate}
\end{algorithm}

\begin{lemma}\label{lem:MGUSufficient}
If $G$ does not have a unique normal model, Algorithm~\ref{alg:unique}
reports this.
\end{lemma}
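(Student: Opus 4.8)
The plan is to prove the contrapositive: if Algorithm~\ref{alg:unique} returns true, then every normal model of $G$ is equivalent to $M_G$. Assuming the algorithm returns true, test~\ref{testA} did not fire, so by Lemma~\ref{lem:C1PMUnique} the instance $\CiPM(M_P,M_N)$ has a unique solution $\pi$ up to reversal, with implied taut matrix $M=M_G[V\setminus N_S]$, and tests~\ref{testB} and~\ref{testC} did not fire. Taking $M'_G$ to be an arbitrary normal model of $G$, I would show $M'_G\equiv M_G$.

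First I would record the invariants shared by all normal models. By Theorem~\ref{thm:invariantCols}, the collection of vertex sets of clique and semi-clique columns of $M'_G[V\setminus N_S]$ is the same as for $M_G$, and these sets are pairwise distinct and pairwise incomparable in the subset order (Lemma~\ref{lem:noSubsets}). Deleting the simplicial columns of $M'_G$ and restricting to $V\setminus N_S$ gives a normal model of $G-N_S$ (the $M'_N$ construction in the proof of Theorem~\ref{thm:invariantCols}), hence a consecutive-ones ordering of $M_N$ by Theorem~\ref{thm:allModels}. A simplicial column $c$ of $M'_G$ carries no endpoint of a non-simplicial non-probe, so $c\cap(V\setminus N_S)=S(c)\cap P=N(z)$ for any $z\in c\cap N_S$ (Lemma~\ref{lem:simplicialPoints}). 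Thus every column of $M'_G[P]$ is either a column of $M_N[P]$ or the set $N(z)$ for some $z\in N_S$, and each $z\in N_S$ lies in a column of $M'_G$ whose probe set is $N(z)$.

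Next I would dispose of the small cases. If $M_G$ has at most one column it is the unique $0$-$1$ matrix on rows $V$ with one all-ones column, so $M'_G=M_G$. If $M_G$ (hence $M'_G$) has exactly two columns, those columns are consecutive and so have distinct probe sets (Lemma~\ref{lem:nextProbes}); the clique/semi-clique contents coincide by the previous paragraph, and each $z\in N_S$ is forced into the unique column whose probe set is $N(z)$, so the only remaining freedom is the order of the two columns, which is a reversal --- unless the column holding some $z$ also holds a second simplicial non-probe, which is exactly the configuration test~\ref{testB} excludes; so $M'_G\equiv M_G$ here. Now suppose $M_G$ has at least three columns, so test~\ref{testC} applies. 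Since the non-simplicial columns of $M'_G$ form a consecutive-ones ordering of $M_N$, and (with $M_R=M_P$, $M_C=M_N$) $M'_G[P]$ restricted to those columns is a consecutive-ones ordering of $M_N$ while $M'_G[P]$ itself is consecutive-ones ordered, I would check that the ordering of $M'_G[P]$ agrees with $\pi$ on the columns of $M_P$; in particular the clique and semi-clique columns of $M'_G$ sit in the same positions (up to a global reversal) as in $M_G$. It remains to see that $M'_G$'s simplicial columns and its placement of $N_S$ are forced. Here I would argue that any discrepancy --- $M'_G$ carrying a simplicial column absent from $M_P$, or placing some $z\in N_S$ in a different column than $M_G$ does --- forces some column $c$ of $M_P$ with the same probe set to admit two positions in consecutive-ones orderings of $M_P$; since the consecutive-ones ordering of the columns of $M_N$ is unique up to reversal (test~\ref{testA} did not fire, so $T_1\cap T_3$ has a single internal $Q$ node), nothing else constrains $c$, so $c$ is a child of a $P$ node of $T_2=T(M_P)$ and carries, or corresponds to a column carrying, a simplicial non-probe --- precisely the configuration forbidden by test~\ref{testC}. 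This contradiction forces $M'_G\equiv M_G$, completing the contrapositive.

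The main obstacle is that last step: making precise the equivalence between ``$M'_G$ has a relocatable or duplicated simplicial column'' and ``some simplicial-bearing column of $M_P$ is a $P$-node child of $T(M_P)$,'' in both directions, and verifying that such a $P$-node child genuinely yields a normal model inequivalent to $M_G$ rather than one failing tautness or minimality or coinciding with $M_G$ up to reversal. This needs a careful joint analysis of $T(M_P)$, the uniqueness of the $M_N$-ordering established via Algorithm~\ref{alg:C1PMUnique}, and the tautness/minimality constraints on normal models, and is also where the convention that a two-child node counts simultaneously as a $P$ node and a $Q$ node must be invoked so that its reorderings are correctly classified as reversals.
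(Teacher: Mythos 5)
Your high-level plan --- prove the contrapositive, dispose of the one- and two-column cases using Test~\ref{testB}, and reduce the remaining case to Test~\ref{testC} via the invariance of clique and semi-clique columns (Theorem~\ref{thm:invariantCols}) --- matches the paper's strategy. But the step you yourself flag as ``the main obstacle'' is not a technicality to be checked; it is where essentially the entire proof lives, and as written there is a genuine gap. Concretely: before you can ``check that the ordering of $M'_G[P]$ agrees with $\pi$ on the columns of $M_P$,'' you must know that the columns of $M'_G[P]$ are, as a multiset, exactly the columns of $M_P$, and this can fail precisely when $M'_G$ is not equivalent to $M_G$ --- for instance $M'_G$ may contain two distinct, non-consecutive simplicial columns with the same probe set $N(x)$, or a simplicial column in a position where $M_G$ has none. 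The paper handles this by fixing a representative simplicial non-probe in each simplicial column of $M_G$, mapping that column to the unique column of $M'_G$ containing the representative, letting $Y$ be the set of clique, semi-clique, and image columns of $M'_G$, and splitting on whether $Y$ exhausts the columns of $M'_G$. In the first case Test~\ref{testA} forces $M'_G[V\setminus N_S]=M_G[V\setminus N_S]$, so the two models can differ only in the assignment of simplicial non-probes to columns, which produces two \emph{identical} columns of $M_P$ that are necessarily children of a common P node of $T_2$; in the second case one exhibits two explicit consecutive-ones orderings of $M_P$, namely $M'_G[P][Y]$ and $M'_G[P][(Y\setminus\{c'\})\cup\{c\}]$, differing only in the position of a single simplicial-bearing column, which forces that column to be a P-node child. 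Your sketch asserts the conclusion (``nothing else constrains $c$, so $c$ is a child of a P node'') without constructing either of these witnesses, and the inference from ``a column admits two positions'' to ``it is a child of a P node of $T_2$'' is exactly what must be justified.

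A secondary point: you do not need ``both directions'' of the equivalence here, nor do you need to verify that a P-node child genuinely yields an inequivalent normal model rather than one failing tautness or minimality --- that verification is the content of the separate converse statement, Lemma~\ref{lem:MGUNecessity}. For the present lemma only the implication from ``two inequivalent normal models exist'' to ``some test fails'' is required, so importing the converse into this proof conflates the two statements and obscures which witness you actually have to produce.
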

\begin{proof}
Let $M_G$ be the model returned by Algorithm~\ref{alg:MG}.

If $M_G$ has only one column, then either the column is a clique column or
the graph contains a single vertex, which is a non-probe. In both cases, it is easy to verify
that this is the only normal model of $G$.

Suppose $M_G$ has two columns. Neither can be a semi-clique column, since
such a column could not have both proper left endpoints and proper right endpoints.
At least one of the two columns, denote it by $k$, must be a clique column, since
otherwise there are no probes in the graph, and the graph is not connected.
If the other column is also a clique column, then
the two cliques of $G$ define the same two clique columns in every normal model of $G$.
As in the single-column case, it is easy to see that any other column in any
model of $G$ can be merged into one of these two clique columns. Therefore, $M_G$
is a unique normal model.

Now consider the case where the other column is a simplicial column and denote
it by $b$. If $b$ contains more than a single simplicial non-probe, then $G$ fails Test~\ref{testB}.
Suppose that $b$ contains a single simplicial non-probe, $x$. In any normal
model of $G$ there must be a clique column equal to $k$ and a simplicial column
equal to $b$ and that contains $N[x]$. Any other column can be merged into one
of these two columns. Therefore, $M_G$ is a unique model in this case as well.

Henceforth, assume that $M_G$ has at least three columns.
We assume that $\CiPM(M_P, M_N)$ has a unique solution,
since otherwise Test~\ref{testA} fails.
Let $A_G$ be a normal model of $G$ that is not equivalent to $M_G$.
For every simplicial column $b$ of $M_G$,
choose a representative simplicial non-probe $y$ in $b$.  
Since $y$ is a simplicial non-probe, it is contained in a unique column 
$b_A$ in $A_G$.  Let us say that we have {\em mapped} $b$ to $b_A$.
Note that $S(b) \cap P = S(b_A) \cap P = N(y)$, and, by the construction of 
$M_G$, $S(c) \cap P \neq N(y)$ for any non-simplicial column $c$. 
By Theorem~\ref{thm:invariantCols}, the sets of vertices represented
by clique and semi-clique columns of $A_G[V \setminus N_S]$ are the same
as the sets of vertices represented by clique and semi-clique columns of
$M_G[V \setminus N_S]$, so $S(c') \cap P \neq N(y)$ for any non-simplicial
column $c'$ of $A_G$.  It follows that $b_A$ is a simplicial column 
of $A_G$.

Let $Y$ be the set of columns of $A_G$ that are either clique columns, 
semi-clique columns, or simplicial columns of $A_G$ to which we have mapped
simplicial columns of $M_G$.
If $Y$ is the entire set of columns of $A_G$, then the set of columns of $A_G[V \setminus N_S]$
is identical to the set of columns of $M_G[V \setminus N_S]$. Since $G$ passed Test~\ref{testA},
the matrices $A_G[V \setminus N_S]$ and $M_G[V \setminus N_S]$ are identical.
Therefore, the only possible difference between $M_G$ and $A_G$ is the
assignment of simplicial non-probes to columns. There is a simplicial non-probe $x$
such that for two different columns $c$ and $c'$, $S(c) \cap P =
S(c') \cap P = N(x)$.  This means that there are two identical columns
in $M_P$, one of which corresponds to the column to which $x$ belongs
in $M_G$. Since the two columns are identical in $M_P$, the sets they represent are the children
of the same P node in $T_2$. By Lemma~\ref{lem:cliquesUnique},
$c$ and $c'$ cannot be clique columns. Therefore $G$ fails Test~\ref{testC}.

Henceforth, assume that there is a column $c$ of $A_G$ that is not in $Y$.
The column $c$ must be a simplicial column, since clique columns and semi-clique
columns are all in $Y$. There is a simplicial non-probe $x$ such that $S(c) \cap P
= N(x)$. There must also be a column $c'$ in $Y$ with
$S(c') \cap P = N(x)$, corresponding to the column of $M_G$ containing $x$.
By Lemma~\ref{lem:cliquesUnique}, $c'$ is not a clique column.
The columns $c$ and $c'$ cannot be consecutive in $A_G$, since otherwise we can merge
them, contradicting the normality of $A_G$. We get that $A_G[P][Y]$ and
$A_G[P][(Y \setminus \{c'\}) \cup \{c\}]$ are two consecutive one orderings of $M_P$ that differ
only in the location of the column corresponding to the one containing $N(x)$.
Therefore, this column is a child of a P node and $G$ fails Test~\ref{testC}.
\end{proof}

To show the converse of this lemma, we first need the following auxiliary
lemma.

\begin{lemma}\label{lem:PNewChild}
Let $\{c\}$ be a child of a P node $B$ in the PQ tree $T$ of a consecutive-ones
matrix $M$, and suppose there is no row $q$ in $M$ such that $S(q) = \{c\}$.
Let $M'$ be the result of adding a new column $c'$
that is a duplicate of $c$.   Then the PQ tree of $M'$ is obtained
from the PQ tree of $M$ by adding $\{c'\}$ as an element of each proper ancestor
of $\{c\}$ and then adding $\{c'\}$ as a new child of $B \cup \{c'\}$.
\end{lemma}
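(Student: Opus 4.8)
The plan is to verify that the tree $T'$ described in the statement is the PQ tree of $M'$; since a PQ tree is uniquely determined by its matrix (Booth and Lueker), it suffices to pin down the leaves and internal structure of $T(M')$, and I will do this by combining a restriction argument with the symmetry coming from the duplicate column. Throughout I use that $c$ and $c'$ are duplicate columns of $M'$, so that a row of $M'$ contains $c$ if and only if it contains $c'$, and the rows of $M'$ are exactly the rows of $M$ with $c'$ adjoined to those rows that contain $c$.

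First I would establish two backbone facts. (1)~$T(M')[C(M)] = T(M)$: the consecutive-ones orderings of $M$ are exactly the restrictions to $C(M)$ of the consecutive-ones orderings of $M'$ --- one inclusion is immediate because each row of $M$ is the $C(M)$-restriction of a row of $M'$, and the other holds by inserting $c'$ immediately after $c$ in a consecutive-ones ordering of $M$, which leaves every row consecutive --- so the two PQ trees agree after restriction, by Definition~\ref{def:PQRestriction}. (2)~Interchanging $c$ and $c'$ is a column symmetry of $M'$, hence induces an automorphism of $T(M')$, and a short leaf--parent argument forces $\{c\}$ and $\{c'\}$ to share a parent $p$ in $T(M')$; moreover $\{c,c'\}$ properly overlaps no row of $M'$, so $\{c,c'\} \in {\cal C}^{\perp}(M')$, and by Lemma~\ref{lem:CPerp} the only possibilities are that $p$ is a P node strictly larger than $\{c,c'\}$ with $\{c\}$ and $\{c'\}$ among its children, or that $\{c,c'\}$ is itself a node of $T(M')$, with children exactly $\{c\}$ and $\{c'\}$.

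The heart of the argument --- and the step I expect to be the main obstacle --- is ruling out the second possibility, and this is exactly where the hypothesis that no row $q$ of $M$ satisfies $S(q)=\{c\}$ is used. If $\{c,c'\}$ were a node, it would be a non-root P node, so by the characterization of P nodes used in the proof of Lemma~\ref{lem:MofT}, either some row of $M'$ equals $\{c,c'\}$, which is impossible since the corresponding row of $M$ would equal $\{c\}$, or the parent $A$ of $\{c,c'\}$ is a Q node witnessed by a row whose least common ancestor is $A$. In the latter case, deleting $\{c'\}$ (via fact~(1)) turns $A$ into a node of $T(M)$ having $\{c\}$ as a child; if $A$ has at least three children this node is a genuine Q node, contradicting the hypothesis that $\{c\}$ is a child of a P node, and in the remaining two-child case one pushes the same dichotomy up the tree --- invoking Lemma~\ref{lem:overlap} to see which rows can force $\{c,c'\}$ consecutive, and again using that every row of $M'$ meeting $c$ also contains $c'$ --- until it terminates in a forbidden row $\{c\}$ of $M$ or a genuine Q-node ancestor of $\{c\}$ in $T(M)$, either way a contradiction. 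Thus $\{c\}$ and $\{c'\}$ are leaf siblings under a common P node $p$ strictly larger than $\{c,c'\}$. It then remains only to read off the rest of $T(M')$ from fact~(1): deleting $\{c'\}$ must recover $T(M)$, and since $\{c\}$ and $\{c'\}$ have the same proper ancestors, this deletion merely strips $c'$ from those ancestors (triggering no contraction, as $p$ has at least three children), leaving $\{c\}$ a child of $p - c'$; hence $p - c' = B$, every node of $T(M)$ that is not a proper ancestor of $\{c\}$ appears unchanged in $T(M')$, every proper ancestor of $\{c\}$ has $c'$ adjoined, and $\{c'\}$ is a new leaf child of $B \cup \{c'\}$, which is precisely the construction in the statement.
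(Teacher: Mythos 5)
Your overall route is the same as the paper's: establish $T(M) \equiv T(M')[C(M)]$ (the paper does this with exactly your two inclusions, via Lemma~\ref{lem:RestrictPrec}), observe via Lemma~\ref{lem:CPerp} that the only way the stated construction can fail is if $\{c,c'\}$ is itself a node of $T(M')$ (so that deleting $\{c'\}$ triggers a contraction), and then use the hypothesis that no row of $M$ equals $\{c\}$ to rule that out. Your facts (1) and (2) and your final ``read off'' step are all sound. The problem is the step you yourself flag as the main obstacle. The MPT98-style dichotomy you borrow from the proof of Lemma~\ref{lem:MofT} explains why a non-root P node must be \emph{consecutive}; it does not explain why the two children of $\{c,c'\}$ are forced to be \emph{adjacent to each other}, which is the property you need to contradict. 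Concretely, when the parent $A$ of $\{c,c'\}$ has exactly two children, the second branch of that dichotomy is satisfied by a row $y$ with $S(y)=A$, which is perfectly consistent with the hypotheses and yields nothing. Your proposed escape --- ``push the dichotomy up the tree until it terminates in a forbidden row $\{c\}$ or a genuine Q-node ancestor of $\{c\}$ in $T(M)$'' --- does not close: a genuine Q node strictly above $B$ in $T(M)$ is not a contradiction with anything (only the \emph{parent} of $\{c\}$ is assumed to be a P node), and no mechanism is given by which the iteration produces the forbidden row. So the two-child case, which is entirely possible ($B$ may have exactly two children in $T(M)$), is left unproven.

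The paper closes this case (and the $\geq 3$-child case, uniformly, with no induction) by asking directly what forces $c$ and $c'$ to be adjacent. Let $D$ be any sibling of $\{c,c'\}$ under its parent $B\cup\{c'\}$. If every row containing $c$ and $c'$ contained all of $D$, then $c$ and $c'$ could be placed on opposite sides of $D$ in some consecutive-ones ordering, contradicting that $\{c,c'\}$ is a node. So some row $q'$ contains $c$ and $c'$ but not all of $D$. Now Lemma~\ref{lem:CPerp} pins $q'$ down: $S(q')$ cannot properly overlap $D$, so it is disjoint from $D$; it then cannot contain any column outside $\{c,c'\}$ without properly overlapping either $D\cup\{c,c'\}$ or the parent node itself. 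Hence $S(q')=\{c,c'\}$, and the corresponding row of $M$ is $\{c\}$, the contradiction you want. Substituting this one-shot argument for your ``push up the tree'' step repairs the proof; the rest of your write-up then coincides with the paper's.
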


\begin{proof}
Let $C$ be the set of columns of $M$.
Inserting $c'$ next to $c$ in any consecutive-ones ordering of $M$
gives a consecutive-ones ordering of $M'$.  Therefore, $M'$ is a consecutive-ones
matrix and $T(M) \preceq T(M')[C]$.  By Lemma~\ref{lem:RestrictPrec},
$T(M')[C] \preceq T(M)$.  We conclude that $T(M) \equiv T(M')[C]$.
It follows that $T(M)$ is the result of deleting leaf $\{c'\}$ from $T(M')$,
removing it from each of its ancestors,
and contracting its parent if the parent has only one child.  

This contraction is thus the only way that the lemma could fail to be true, and
it occurs if and only if $\{c,c'\}$ is a node of $T(M')$ and $B \cup \{c'\}$ is its
P-node parent.
Suppose this is the case.  Then for a sibling $D$
of $\{c,c'\}$,  there is a row $q'$ of $M'$ such that $S(q')$ contains $c$ and $c'$
but does not contain every member of $D$.  Otherwise, $c$ and $c'$ could be placed
on opposite sides of $D$ in a consecutive-ones ordering, which is forbidden by $\{c,c'\}$.
If $S(q')$ contains any columns other than $c$ and $c'$, then since $S(c)$ does not
contain $D$, not every union of children of $B \cup \{c'\}$ can be a member of $C^{\perp}(M')$,
contradicting Lemma~\ref{lem:CPerp}.   Therefore, $S(q') = \{c,c'\}$.
But then if $q$ is the corresponding row of $M$, $S(q)$ contains
only $c$, contradicting the definition of $c$.
\end{proof}

\begin{lemma}\label{lem:MGUNecessity}
If $G$ fails one of the tests of Algorithm~\ref{alg:unique}, then it does not have a unique normal model.
\end{lemma}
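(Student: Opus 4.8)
The plan is to take the three \textbf{return false} cases of Algorithm~\ref{alg:unique} in turn and, from the failure of the relevant test, produce a normal model $A_G$ of $G$ that is not equivalent (Definition~\ref{def:equivalence}) to the model $M_G$ returned by Algorithm~\ref{alg:MG}.

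\emph{Test~\ref{testA} fails.} By Lemma~\ref{lem:C1PMUnique}, $\CiPM(M_P,M_N)$ then has two solutions $\pi$ and $\pi'$ that are not equivalent, i.e.\ neither equal nor mutual reverses, as orderings of the fixed labelled column set of $M_P$. Running steps~2 and~3 of Algorithm~\ref{alg:MG} on each produces, by Lemma~\ref{lem:MGCorrect}, normal models $M_G$ and $A_G$ of $G$; in both cases the construction only fills in $\ast$'s and appends the rows of $N_S$, leaving the column order alone. Thus $M_G$ and $A_G$ carry the same labelled columns in orders $\pi$ and $\pi'$, and since $\pi\not\equiv\pi'$, $A_G$ is neither $M_G$ nor its column-reversal, so $G$ has two inequivalent normal models.

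\emph{Test~\ref{testB} fails.} Now $M_G$ has exactly two columns: a clique column $k$ (some column must contain a clique of $G[P]$, which is nonempty since otherwise $G$ would be a single vertex), and a simplicial column $b$ holding two simplicial non-probes $x,x'$; no column of a two-column model has both a proper left and a proper right endpoint, so neither column is a semi-clique column. Hence $N(x)=N(x')=S(b)\cap P$, a proper subset of $S(k)\cap P$ by Lemma~\ref{lem:nextProbes}. Let $A_G$ have column order $(b_1,k,b_2)$ with $b_1=(S(b)\cap P)\cup\{x\}$ and $b_2=S(b)\setminus\{x\}$, every other vertex keeping its columns from $M_G$ (so $x'$ and the other non-probes of $b$ lie in $b_2$, and each probe of $S(b)\cap P$ now spans $b_1,k,b_2$). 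One checks directly that $A_G$ is consecutive-ones ordered and represents $G$: every non-probe sees exactly its old probe neighbourhood and probe-probe adjacencies are unchanged. It is taut (for a probe $p\in S(b)\cap P$ the endpoint $b_1$ realizes $px$ and $b_2$ realizes $px'$, while $x$, $x'$ and the remaining non-probes each occupy a single column with nonempty neighbourhood) and minimal (merging $(b_1,k)$ would make $x$ adjacent to probes outside $N(x)$, and symmetrically for $(k,b_2)$). Hence $A_G$ is a normal model with three columns, inequivalent to the two-column $M_G$.

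\emph{Test~\ref{testC} fails.} Let $c$ be a non-clique column of $M_G$ that contains a simplicial non-probe $x$ and is a leaf child of a P node $B$ of $T_2=T(M_P)$; then $S(c)\cap P=N(x)$, and $N(x)$ induces a complete subgraph. Since Test~\ref{testA} was passed, Algorithm~\ref{alg:C1PMUnique} returned \textbf{true}, so every child of every P node of $T_2$ is labelled; thus the leaf $\{c\}$ is labelled, i.e.\ $c\in C$, so $c$ is a column of $M_N$, and being non-clique it is a semi-clique column of $M_N$. In particular no probe lies in $c$ alone in $M_P$ (a probe of a semi-clique column has a proper endpoint in another column), so $\{c\}$ and $B$ satisfy the hypothesis of Lemma~\ref{lem:PNewChild}. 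The strategy is to exploit this P-node freedom to relocate $x$: Lemma~\ref{lem:PNewChild} shows that a fresh duplicate column $c'$ of $c$ can be attached as another child of the P node, so there is a legal position inside $B$'s column block, separated from $c$, at which a column with probe set $N(x)$ carrying only $x$ fits; removing $x$ from $c$ and placing it there gives a matrix $A_G$ which, by the same analysis, is still consecutive-ones ordered with the submatrix on the columns of $M_N$ unchanged, hence still a model of $G$ (the probes in $c'$ are pairwise adjacent because $x$ is simplicial, so $c'$ adds no spurious edge). It is taut and minimal by the usual endpoint and merge arguments, with $c'$ chosen next to a column whose probe set strictly contains $N(x)$ so that no merge is possible; and $A_G$ differs from $M_G$ in the row of $x$ but, as $M_G$ has at least three columns, is not its reversal, so again $G$ has two inequivalent normal models. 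I expect this case to be the main obstacle, since it is exactly here that the ``context-free'' weakness of PQ trees (Figures~\ref{fig:badSimplicials} and~\ref{fig:naughtySimplicials}) must be turned into genuine non-uniqueness: one has to pin down a legal, separated home for $c'$ that keeps the matrix consecutive-ones and taut-and-minimal --- and, if no such separated slot is immediate, to argue instead that $M_P$ already contains a second column with probe set $N(x)$ into which $x$ may be moved. The bookkeeping in the first two cases and the inequivalence checks throughout are routine given the earlier lemmas and Theorem~\ref{thm:invariantCols}.
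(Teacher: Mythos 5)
Your handling of Tests~\ref{testA} and~\ref{testB} is correct and essentially the paper's own argument: Test~\ref{testA} follows from Lemmas~\ref{lem:C1PMUnique} and~\ref{lem:MGCorrect}, and your three-column rearrangement $(b_1,k,b_2)$ for Test~\ref{testB} is exactly the paper's construction $(b,k,b')$. For Test~\ref{testC} you have the right strategy and the right key tool (Lemma~\ref{lem:PNewChild}, used to adjoin a duplicate $c'$ of $c$ as a new child of the P node $B$), and your verification of its precondition is sound. But the two steps you defer --- exhibiting a ``legal, separated home'' for $c'$ that survives normalization, and concluding inequivalence --- are precisely where the content of the proof lies, and your proposal does not close either of them.

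For the first step the paper argues as follows: since $B$ is a P node, $\{c\}$ has a sibling $D$ adjacent to it in the column ordering of $M_G$; order the children of $B\cup\{c'\}$ so that $\{c'\}$ lies on the opposite side of $D$ from $\{c\}$, and let $d$ be the column of $D$ next to $c$. Every row with a 1 in $c$ has a 1 in its duplicate $c'$, so consecutiveness forces $S(c)\subseteq S(d)$ in any such ordering; since $S(c)\neq S(d)$ (else $M_G$ has two consecutive columns with equal probe sets, which could be merged), we get $S(c')\cap P\subsetneq S(d)\cap P$. This, combined with the simplicial non-probe $x$ now sitting in $c'$, is what prevents $c'$ from being merged away during normalization. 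Your phrase ``with $c'$ chosen next to a column whose probe set strictly contains $N(x)$'' asserts exactly this conclusion without supplying a reason such a neighbour exists, and your fallback (``argue instead that $M_P$ already contains a second column with probe set $N(x)$'') is neither needed nor a substitute. For the second step, ``$A_G$ differs from $M_G$ in the row of $x$ but \ldots is not its reversal'' is too quick: after normalization the old column $c$ may or may not survive (it is deleted if it was a simplicial column that lost its only simplicial non-probe, and might otherwise be merged), and the paper needs a case split --- if $c$ survives, then $x$ shares a column with some vertex whose right endpoint is in $c$ in $M_G$ but not in $A_G$; if $c$ disappears, either the column counts differ or $A_G$ is $M_G$ with one simplicial column relocated, which with at least three columns cannot be $M_G$ or its reversal. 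Without these two pieces the Test~\ref{testC} case is a plan rather than a proof.
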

\begin{proof}
For each of the three tests, we show how to construct two non-equivalent models if $G$ fails the test.
Let $M_G$ be the model returned by Algorithm~\ref{alg:MG}.

If $G$ fails Test~\ref{testA}, then $\CiPM(M_P, M_N)$ has two non-equivalent solutions.
Algorithm~\ref{alg:MG} can use each of them to produce a normal model of $G$,
and the two models are not equivalent.

Suppose $G$ fails Test~\ref{testB}. In this case $M_G$ has one clique column $k$
and one simplicial column $b$, which contains at least two simplicial non-probes.
Create a copy $b'$ of $b$. Remove a simplicial non-probe $x$ from $b'$ and all simplicial non-probes other than $x$ from $b$.
Order the three columns $(b, k, b')$.
This is a normal model that is not equivalent to $M_G$.

Suppose that $G$ fails Test~\ref{testC}.
Let $T_2$ be $T(M_P)$. For some simplicial non-probe $x$ that belongs to a non-clique column $c$
of $M_G$, $\{c\}$ is a child of a P node, $B$,  in $T_2$.
The column ordering of $M_G$ gives a consecutive-ones ordering of $M_P$.
Let $D$ be an adjacent sibling to $\{c\}$ in this ordering,
and let $d$ be the column of $D$ that is adjacent to $c$.  Note that $S(d) \neq S(c)$,
otherwise, $M_G$ would have two consecutive columns with the same probe set, and they
could be merged.

Every probe in $c$ also occurs in some other column; otherwise, $c$ would be a clique column.
Add to $M_P$ a new column $c'$ such that $S(c') = S(c)$. Let $T'_2$ be $T(M'_P)$.
By Lemma~\ref{lem:PNewChild}, 
$\{c\}$, $D$, and $\{c'\}$ are children of $B \cup \{c'\}$ in $T'_2$, so $\{c'\}$
can be placed on the opposite side of $D$ from $\{c\}$ in a consecutive-ones ordering of $M_P$.  
Since $S(c') = S(c) \neq S(d)$ and $c$ and $c'$ can be
placed on opposite sides of $d$, $S(c') \subset S(d)$.

Using Algorithm~\ref{alg:MG}, fill in rows for the non-simplicial non-probes
to this matrix, and for every $x' \in N_S$ add a row for $x'$ to $M'$
and put a 1 in the same column that contains 1 in the row of $x'$ in $M_G$. 
Then move the 1 of the simplicial non-probe $x$ from column $c$ to column $c'$.
If $c$ was a simplicial column and no simplicial non-probe remains in it, delete $c$, since it is no
longer a simplicial column.
Apply the procedure from Lemma~\ref{lem:normalModel} to get a normal model $A_G$
of $G$ from $A$.  It may be that $c'$ is merged with other columns.  
However, $c'$ cannot be merged
with $d$ since it contains a simplicial non-probe and $S(c') \cap P \subset S(d) \cap P$.

If $c$ remains, then since every column in a normal model contains a degenerate
or nondegenerate right endpoint, let $y$ be a vertex with a right endpoint in $c$.
In $M_G$, $x$ shares a column with $y$, but not in $A_G$, so $M_G$
and $A_G$ are not equivalent.
If $c$ does not remain and $A_G$ has fewer columns than $M_G$ does, the two
models are not equivalent.
Otherwise, $c$ does not remain and no columns were merged in transforming $A$ to $A_G$,
hence $c'$ remains.
Therefore, $A_G$ is identical to $M_G$ except for the simplicial column
containing $x$, which occurs in a different position in $A_G$.  Since $M_G$ has at 
least three columns, the models are not equivalent.
\end{proof}

Combining the result of this section with Theorem~\ref{thm:recognition} we get the following:

\begin{theorem}
  Given a probe interval graph $G$, in $O(n + m)$ time we can determine whether it has a unique normal model.
\end{theorem}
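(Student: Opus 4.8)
The plan is to combine the recognition algorithm of Theorem~\ref{thm:recognition} with the uniqueness test of Algorithm~\ref{alg:unique}, after separating off the disconnected case. Since the input is assumed to be a probe interval graph, running the recognition algorithm (in particular Algorithm~\ref{alg:MG}, together with the preceding steps that build $M_K$, $M^+_K$, $M'_K$, $M^*_K$, $M_N$ and $M_P$) succeeds and returns a normal model $M_G$ of $G$ in $O(n+m)$ time; I would have it retain the auxiliary objects that Algorithm~\ref{alg:unique} consults, namely the instance $\CiPM(M_P,M_N)$ together with its solution, and the PQ tree $T_2=T(M_P)$ from Algorithm~\ref{alg:C1PM}.

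First I would dispose of disconnected $G$. A disconnected probe interval graph has at least two components, each contributing at least one column to every normal model, and the component blocks are unordered relative to one another. Hence if a normal model has more than two columns there is genuine freedom in how the component blocks are ordered and reversed, so the normal model is not unique, and we answer that it is not; if it has exactly two columns, then $G$ has exactly two components, each a single column, and the only freedom is the global reversal, so the normal model is unique and we answer that it is. Counting the components and the columns of $M_G$ takes $O(n+m)$ time, so this case is settled within the bound. Henceforth $G$ is connected.

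For connected $G$ the plan is simply to run Algorithm~\ref{alg:unique} on $M_G$ and the retained data and return its answer. Correctness is exactly the conjunction of Lemma~\ref{lem:MGUSufficient} and Lemma~\ref{lem:MGUNecessity}: the first shows that whenever $G$ lacks a unique normal model some test of Algorithm~\ref{alg:unique} fires, and the second shows that whenever a test fires $G$ genuinely has two non-equivalent normal models; together they give that Algorithm~\ref{alg:unique} returns true precisely when the normal model of $G$ is unique. For the running time I would check each test in turn: Test~\ref{testA} calls Algorithm~\ref{alg:C1PMUnique}, which is $O(n+m)$ by Lemma~\ref{lem:C1PMUniqueTime}; Test~\ref{testB} inspects only the at most two columns of $M_G$ and counts simplicial non-probes in a column, which is $O(n)$; and Test~\ref{testC} traverses $T_2=T(M_P)$, which has $O(n)$ nodes since $M_P$ has $O(n)$ rows and columns and $O(n+m)$ 1's (Lemma~\ref{lem:MGTime}), checking for each non-clique column of $M_G$ containing a simplicial non-probe whether its leaf is a child of a P node. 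Adding the $O(n+m)$ cost of the recognition phase gives the claimed bound.

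The substantive content is already carried by Lemmas~\ref{lem:MGUSufficient}, \ref{lem:MGUNecessity} and~\ref{lem:C1PMUniqueTime} and by Theorem~\ref{thm:recognition}, so I do not expect a real obstacle; the only things to be careful about are the bookkeeping --- ensuring the recognition phase actually exposes the $\CiPM$ instance and its solution, $M_P$, $M_N$ and $T(M_P)$ so that Algorithm~\ref{alg:unique} runs without recomputing anything --- and the handling of disconnected $G$, which is the one genuinely new argument but is immediate from the observations opening this section.
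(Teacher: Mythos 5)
Your proposal is correct and follows the paper's own route: the paper obtains this theorem by combining the opening discussion of the disconnected case with Algorithm~\ref{alg:unique}, whose correctness is exactly Lemmas~\ref{lem:MGUSufficient} and~\ref{lem:MGUNecessity}, and whose cost is dominated by Lemma~\ref{lem:C1PMUniqueTime} plus the $O(n+m)$ recognition phase of Theorem~\ref{thm:recognition}. The only detail you add beyond the paper is spelling out the bookkeeping of retaining $M_P$, $M_N$ and $T(M_P)$ from the recognition phase, which is implicit there.
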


We note that an implementation of Algorithm~\ref{alg:unique} can be simplified for a two-columns model.
In this case it is enough to apply only Test~\ref{testB}, since $G$ cannot fail Test~\ref{testA}. Also, in
Test~\ref{testC}, we can replace non-clique column with semi-clique column, since if a similicial column
is a child of a P node in $T_2$, $G$ fails Test~\ref{testA}.

\bibliographystyle{plain}

\end{document}